\newcommand\nbh{\hbox{-}\nobreak\hskip\z@skip}
\DeclareMathAlphabet{\mathpzc}{OT1}{pzc}{m}{it}
\newcommand{\pre}[1]{\bullet{#1}}
\newcommand{\reportonly}[1]{}
\newcommand{\ac}{\ensuremath{\nearrow}}
\newcommand{\scauses}[1]{\lfloor #1 )}
\newcommand{\causes}[1]{\lfloor #1 \rfloor}
\newcommand{\hist}[1]{\mathit{hist}({#1})}
\newcommand{\allHistories}[1]{\check{\mathcal{H}}(#1)}
\newcommand{\conf}[1]{\ensuremath{\mathit{Conf}({#1})}}
\newcommand{\fold}[2]{\ensuremath{{#1}_{/{#2}}}}
\newcommand{\mcons}[1]{\ensuremath{\mathbb{C}({#1})}}
\newcommand{\dConflict}{\#_\mu}
\newtheorem*{rep@theorem}{\rep@title} \newcommand{\newreptheorem}[2]{%
\newenvironment{rep#1}[1]{%
\def\rep@title{\bf #2 \ref{##1}}%
\begin{rep@theorem} }%
{\end{rep@theorem} } }
\theoremstyle{definition}
\newtheorem{definition}{Definition}
\theoremstyle{plain}
\newtheorem{theorem}{Theorem} 
\newtheorem{lemma}[theorem]{Lemma}
\newtheorem{corollary}[theorem]{Corollary}
\newif\ifTechnicalReport
\begin{document}


	\title{Reduction of Event Structures under History Preserving Bisimulation}

	\author{
		Abel Armas-Cervantes%
			\thanks{\texttt{abel.armas@ut.ee}}}
	\affil{ Institute of Computer Science, University of Tartu, Estonia.}
	 \author{
		Paolo Baldan%
		 \thanks{\texttt{baldan@math.unipd.it}}}
	\affil{Department of Mathematics, University of Padova, Italy.}
	 \author{
		Luciano Garc\'ia-Ba\~nuelos%
		\thanks{\texttt{luciano.garcia@ut.ee}}}
	\affil{ Institute of Computer Science, University of Tartu, Estonia.}
	
\date{}

\maketitle

\begin{abstract}
  Event structures represent concurrent processes in terms of events
  and dependencies between events modelling behavioural relations like
  causality and conflict. Since the introduction of prime event
  structures, many variants of event structures have been proposed
  with different behavioural relations and, hence,  with differences in their
  expressive power. 
  One of the possible benefits of using a more expressive event structure is
  that of having a more compact representation for the same behaviour when 
  considering the number of events used in a prime event structure. 
  Therefore, this article addresses the problem of reducing the size of an
  event structure while preserving behaviour under a well-known
  notion of equivalence, namely history preserving bisimulation. In particular, 
  we investigate this problem 
  on two generalisations of the prime event structures.
  The first one, known as asymmetric event structure, relies on a asymmetric
  form of the conflict relation. The second one, known as flow event structure,
  supports a form of disjunctive causality. 
  More specifically, we describe the conditions under which a set of
  events in an event structure can be folded into a single event while
  preserving the original behaviour. 
  The successive application of this folding operation leads to a
  minimal size event structure. However, the order on which the
  folding operation is applied may lead to different minimal size 
  event structures. The latter has a negative implication on the
  potential use of a minimal size event structure as a canonical 
  representation for behaviour.
 
\end{abstract}



\section{Introduction}

The concept of concurrent process is pervasive in computer science,
with applications in a multitude of distinct fields, and a wide range
of formalisms and techniques have been developed for the modelling and
analysis of processes. 
%
%
%
Event structures are one of the possible formalisms for modelling concurrent processes.
Computations underlying the execution of processes are represented by means of events and behavioural
relations. Events represent occurrences of atomic
actions. Behavioural relations, which differ in the various types of event
structures, explain how events relate each other. For instance, when the
occurrence of one event requires another event to occur beforehand, 
we say that there is a causal relation between them. Similarly,
when the occurrence of one event prevents the occurrence of another event,
we say that they are in conflict relation.
%
In this context, the seminal
work~\cite{Winskel87,NielsenPW81} introduces \emph{prime event
  structures} (PESs), where dependencies between events are reduced to
causality and conflict. Since then, many different types of event
structures have been proposed.
In this work, we consider two other types of event structures, namely,
the \emph{flow event structures} (FESs)~\cite{BoudolC88} 
and \emph{asymmetric event structures} (AESs)~\cite{BaldanCM01}, which
provide a form of disjunctive causality and an asymmetric version
of conflict, respectively.



\begin{figure}[ht]
\centering
\subfloat[PES]{\label{fig:pes0a}
	\includegraphics{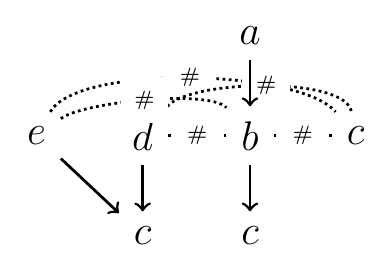}
}
\subfloat[AES]{\label{fig:aes0a}
	\includegraphics{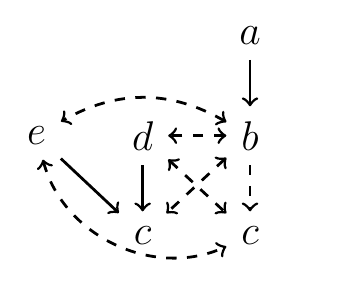}
}
\subfloat[FES]{\label{fig:fes0a}
	\includegraphics{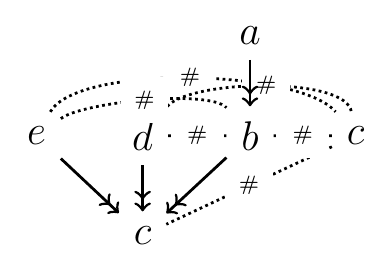}
}
\caption{Three history preserving bisimilar event structures}
\label{fig:pesFes0}
\end{figure}

In order to give a more precise idea of the kind of structures the
paper deals with and of the results we aim at, consider the event
structures depicted in the form of graphs in Fig.~\ref{fig:pesFes0}. 
In all cases, nodes represent events and edges represent behaviour 
relations.
Fig.~\ref{fig:pes0a} presents a PES. There, the straight arrows represent
causality and the annotated dotted edges represent conflict.
%
For instance, events $a$ and $b$ are connected with a straight arrow
and, hence, are in causal relation. Intuitively we say that
\emph{``in a given computation, event $a$
  must occur before $b$''}. Causality in PES is transitive relation. For the sake
  of clarity, only direct causal relations are shown. 
  Similarly, events $d$ and $b$ are
in conflict, which can be intuitively stated as 
\emph{``in a given computation, either $d$ or $b$ occurs but not both''}.

Figure~\ref{fig:aes0a} presents an AES. There, events are related via causality,
which is depicted again with straight arrows, and an asymmetric form of the causal relation, which is
depicted with dashed arrows. The asymmetric conflict relation has two intuitive interpretations.
The fact that $b$ is in asymmetric conflict with
$c$ can be interpreted as \emph{``the occurrence of event $c$ avoids the
occurrence of event $b$''}. The same fact can be interpreted as
\emph{``whenever $b$ and $c$ occur in a computation, then the
execution of event $b$ will precede the execution of $c$''}. For this reason,
the asymmetric conflict can be seen as a weak form of causality. Moreover, 
the symmetric conflict relation can be expressed with asymmetric conflict in
both directions and, more generally speaking,
by means of cycles of them.


Finally, Figure~\ref{fig:fes0a} provides an example of a FES. There, causality is 
replaced by the flow relation, which is represented with a double-headed arrow.
The flow relation is intransitive. Intuitively, the flow relation expresses the set of
potential direct causes for a given event. That, in order for an event to occur a
maximal, conflict set of direct predecessors has to occur beforehand.
For instance, in the
example, the leftmost event with label $c$ must be preceded either by
$\{ e, d \}$ or $\{ b \}$.

Interestingly, the three event structures depicted in
Figures~\ref{fig:pesFes0}(a)-(c) represent the same set of
computations, with a different number of events. This is only possible
because of the greater expressiveness of AESs and FESs. The result
is that the same behaviour is represented with less events in both
cases.
Also, it should be noted that any PES can be straightforwardly transformed
into a AES or into a FES. For the case of AESs, the conflict relation is
translated into two asymmetric conflict arrows and, for the case of FESs,
the flow relation corresponds to the transitive reduction of causality.

The purpose of this article is to introduce transformations for 
reducing the size of AESs and FESs.
Intuitively,
the method requires identifying sets of events that can replaced by a single
event, while preserving the original behaviour.
The method entails
a morphism on event structures, referred to as \emph{folding}, that
is shown to preserve a well-known notion of equivalence,
namely history preserving bisimulation~\cite{RavinovitchT88,vanGlabbeekG89,BestDKP91}.
For instance, the AES and FES presented in Fig.~\ref{fig:aes0a} and~\ref{fig:fes0a},
respectively, can be obtained by folding subsets of occurrences of the event
$c$ that are present in the PES shown in Fig.~\ref{fig:pes0a}.
This notion of equivalence is one of classical behavioural equivalences 
in the true concurrency spectrum.  The iterative folding of a finite
event structure eventually converge into a (locally) minimal event
structure.
%
Unfortunately, the minimal event structure is not always unique and, therefore,
cannot be used as a canonical representation.



The organisation of the paper is as follows,
Section~\ref{sec:preliminaries} introduces basic concepts about the
notation, the event structures used and the adopted equivalence. The
folding technique over AES is presented in
Section~\ref{sec:reductionAES}.  The folding technique defined for FES
is presented in
Finally,
Section~\ref{sec:conclusionFutureWork} draws some conclusions and
proposes possible avenues for future work.

\section[Prime event structures and hp-bisimilarity]{Prime event structures and history preserving bisimilarity}
\label{sec:preliminaries}

This section recalls the basics of \emph{prime event structures} and
introduces the notion of \emph{history-preserving bisimilarity}, that
will provide a foundation to discussion in the following sections.

We shall first recall some basic notation on sets and relations. Let
$R \subseteq X \times X$ be a binary relation and let $Y \subseteq X$,
then $R|_Y$ denotes the restriction of $R$ to $Y$, i.e., $R|_Y = R
\cap (Y \times Y)$. We say that $R$ is \emph{well-founded} if it has
no infinite descending chain, i.e., $\langle e_i \rangle_{i \in
  \mathbb{N}}$ such that $e_{i+1}\ R\ e_i,\ e_i \neq e_{i+1}$, for all
$i \in \mathbb{N}$. The relation $R$ is \emph{acyclic} if it has no
``cycles'', that is, $e_0\ R\ e_1\ R \dots R\ e_n\ R\ e_0$ with $e_i \in X$, does not hold. In
particular, if $R$ is well-founded, then it has no (non-trivial)
cycles. Relation $R$ is a \emph{preorder}, if it is reflexive and
transitive; it is a \emph{partial order} if it is also antisymmetric.

  

\subsection{Prime Event structures}

We recall the formal definition of \emph{prime event
  structures}~\cite{NielsenPW81} which complements the informal
description provided in the introduction. Hereafter $\Lambda$ denotes
a fixed set of labels.

\begin{definition}[prime event structure]
  \label{def:pes}
  A (labelled) \emph{prime event structure} (PES) is a tuple
  $\mathbb{P} = \langle E, \leq, \#, \lambda \rangle$, where  $E$
  is a set of events, $\leq$ and $\#$ are binary relations on $E$ called \emph{causality} and \emph{conflict}, and $\lambda : E \to \Lambda$ is a labelling function, such that 
  \begin{compactitem}

  \item $\leq$ is a partial order and $\causes{e} = \{e' \in E \mid e'
    \leq e\}$ is finite for all $e \in E$;

  \item $\#$ is irreflexive, symmetric and hereditary with respect to
    causality, i.e., for all $e,e',e'' \in E$, if $e\#e' \leq e''$
    then $e \# e''$
  \end{compactitem}
\end{definition}

An event $e \in E$ labelled with $a$ represents the occurrence of an
action $a$ in a computation of the system, $e < e'$ means that $e$ is a
prerequisite for the occurrence of $e'$ and $e \# e'$ means that $e$
and $e'$ cannot both happen in the same computation. In order to lighten the
notation, whenever it is clear from the context, we will use
events and event labels interchangeably.

The computations in an event structure are usually described in terms
of configurations, i.e., sets of events which are closed with respect
to causality and conflict free. Formally, a \emph{configuration} of a
PES $\mathbb{P} = \langle E, \leq, \#,\lambda \rangle$ is a finite set
of events $C \subseteq E$ such that
\begin{compactitem}
\item for all $e\in C$, $\causes{e}
\subseteq C$ and
\item for all $e,e' \in C$, $\lnot(e \# e')$.
\end{compactitem}
Configurations come equipped with an extension order, $C_1 \sqsubseteq
C_2$ meaning that a configuration $C_1$ can evolve into $C_2$. For
PESs, the extension order is simply subset inclusion.

\subsection{History preserving bisimilarity}

In this paper we use the notion of history
preserving
bisimilarity~\cite{RavinovitchT88,vanGlabbeekG89,BestDKP91}, a
classical equivalence in the true-concurrency spectrum. As for
bisimilarity in interleaving semantics, an event of an event structure must be
simulated by an event of the other, with the same label, and
vice-versa, but additionally, the two events are required to have the
same ``causal history''.

\begin{definition}[history preserving bisimilarity]
  Let $\mathbb{E}_1$, $\mathbb{E}_2$ be two PESs. A \emph{history
    preserving (hp-)bisimulation} is a set $R$ of triples $(C_1, f,
  C_2)$, where $C_1$ and $C_2$ are configurations of $\mathbb{E}_1$
  and $\mathbb{E}_2$, respectively, and $f$ is an isomorphism, such
  that $(\emptyset,\emptyset,\emptyset) \in R$ and 
  $\forall (C_1, f,  C_2)\in R$
    \begin{enumerate}[a)]
    \item if $C_1 \cup \{e_1\} \in \conf{\mathbb{E}_1}$, for an event
      $e_1 \in \mathbb{E}_1$, there exists $e_2 \in \mathbb{E}_2$ such
      that $\lambda_1(e_1) = \lambda_2(e_2)$ and $(C_1 \cup \{e_1\},
      f', C_2 \cup \{e_2\}) \in R$;
      
    \item if $C_2 \cup \{e_2\} \in \conf{\mathbb{E}_2}$, for an event
      $e_2 \in \mathbb{E}_2$, there exists $e_1 \in \mathbb{E}_1$ such
      that $\lambda_1(e_1) = \lambda_2(e_2)$ and $(C_1 \cup \{e_1\},
      f', C_2 \cup \{e_2\}) \in R$.
    \end{enumerate}
    Moreover $\mathbb{E}_1$, $\mathbb{E}_2$ are said history preserving
    bisimulation equivalent or, simply, history preserving bisimular iff
    the bisimulation R exists.
\end{definition}

Although hp-bisimilarty is defined only for PESs, the same notion can be 
straightforwardly adapted to the other variants of event structures used in 
this article.

\section{Behaviour-Preserving Reduction of AESs}
\label{sec:reductionAES}

In this section we describe a technique for reducing the size of
asymmetric event structures, in a way that preserves their behaviour.

\subsection{Basics of asymmetric event structures}

We briefly review the basics of asymmetric event structures. 

\begin{definition}[asymmetric event structure]
  A (labelled) \emph{asymmetric event structure} (AES) is a tuple
  $\mathbb{A}= \langle E, \leq, \ac, \lambda \rangle$, where $E$ is a
  set of events, $\leq$ and $\ac$ are binary relations on $E$ called
  \emph{causality} and \emph{asymmetric conflict}, and $\lambda : E
  \to \Lambda$ is a labelling function, such that
  \begin{compactitem}
  \item $\leq$ is a partial order and $\causes{e} = \{e' \in E \mid e'
    \leq e\}$ is finite for all $e \in E$;

  \item $\ac$ satisfies, for all $e, e', e'' \in E$
    \begin{compactenum}
    \item $e < e' \Rightarrow e\ac e'$,
    \item if $e \ac e'$ and $e' < e''$ then $e \ac e''$;
    \item $\ac|_{\causes{e}}$ is acyclic;
    \item if $\ac|_{\causes{e} \cup \causes{e'}}$ is cyclic then $e \ac e'$.
    \end{compactenum}
  \end{compactitem} 
  \label{def:AES}
\end{definition}

AESs generalise PESs by allowing a conflict relation which is no
longer symmetric. As hinted at in the introduction, the asymmetric
conflict relation has a double interpretation, that is $a \ac b$ can
be understood as
\begin{inparaenum}[(i)]
  \item \label{int:ac:1} the occurrence of $a$ \emph{is prevented by} $b$, or
  \item \label{int:ac:2} $a$ \emph{precedes} $b$ in all computations where both 
    appear.
\end{inparaenum}
Condition 1 of Definition~\ref{def:AES} arises from the fact that,
according to the interpretation (\ref{int:ac:2}) of the asymmetry
conflict relation, $\ac$ can be seen as a weak form of causality,
hence it is natural to ask that it is included in $<$. In the
graphical representation of an AES, $\leq$ takes precedence over $\ac$
and, therefore, when both holds a solid edge is used.
Condition 2 is a form of hereditarity of asymmetric conflict along
causality: if $e \ac e'$ and $e' < e''$ then $e$ is necessarily
executed before $e''$ when both appear in the same computation, hence
$e \ac e''$ (see Fig.~\ref{fig:heredityAC}(a)).  Concerning conditions
3 and 4, observe that events forming a cycle of asymmetric conflict
cannot appear in the same run, since each event in the cycle should
occur before itself in the run. This leads to a notion of
\emph{conflict} over sets of events $\# X$, defined by the following
rules
\[
\frac{e_0 \ac e_1 \ac \ldots \ac e_n \ac e_0}{\# \{ e_0, \ldots, e_n \}}
\qquad \qquad
\frac{ \#( X \cup \{ e \}) \ e \leq e'}{ \#( X \cup \{ e' \})}
\]
In this view, condition 3 corresponds to irreflexiveness of conflict
in PES, while condition 4 requires that binary symmetric conflict are
represented by asymmetric conflict in both directions.
 

In the following, direct relations, namely immediate causality and
conflicts that are not inherited, will play a special role.

\begin{definition}[direct relations]
  Let $\mathbb{A}$ be an AES and let $e, e' \in E$. We say that $e'$
  is an \emph{immediate cause} of $e$, denoted $e' <_\mu e$, when $e'
  < e$ and there is no $e''$ such that $e' < e'' < e$.  An asymmetric
  conflict $e \ac e''$ is called \emph{direct}, written $e \ac_\mu
  e''$ when there is no $e'$ such that $e \ac e' < e''$. A binary
  conflict $e \# e'$ is called \emph{direct}, written $e \#_\mu e'$,
  when $e \ac_\mu e'$ and $e' \ac_\mu e$.
\end{definition}

\begin{figure}
\centering
\subfloat[$e \ac_\mu e'$ and $e \ac e' $]{
\makebox[.4\textwidth]{
   \includegraphics{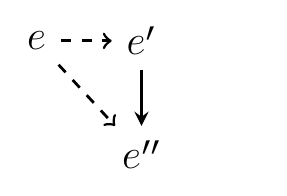}
}
}
\subfloat[$e' \ac_\mu e$ and $e \dConflict e'' $]{
 \makebox[.4\textwidth]{
  \includegraphics{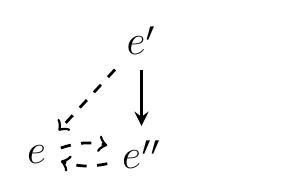}
  }
}
\caption{Hereditarity of $\ac$}
\label{fig:heredityAC}
\end{figure}
For instance, in Fig.~\ref{fig:heredityAC}(a) $e \ac_\mu e'$ while it
is not the case that $e \ac_\mu e''$. In Fig.~\ref{fig:heredityAC}(b)
we have that $e'' \ac_\mu e$ and $e \ac_\mu e''$, hence $e \dConflict
e''$.


Configurations in AES are defined, as in PES, as causally closed and
conflict free set of events. More precisely a configuration of
$\mathbb{A} = \langle E, \leq, \ac,\lambda \rangle$ is a set of events
$C \subseteq E$ such that
\begin{inparaenum}[1)]
\item for any $e \in C$, $\causes{e} \subseteq C$ (causal closedness)
\item $\ac|_C$ is acyclic (or equivalently, $\lnot (e \# e')$ for all
  $e, e' \in C$).
\end{inparaenum}
The set of all configurations of $\mathbb{A}$ is denoted by $\conf{\mathbb{A}}$. 


Differently from what happens for PES, the extension order on
configurations is not simply set-inclusion, since a configuration $C$
cannot be extended with an event which is prevented by some of the
events already present in $C$. More formally, if $C_1, C_2 \in
\conf{\mathbb{A}}$ are configurations, we say that $C_2$ extends $C_1$,
written $C_1 \sqsubseteq C_2$, if $C_1 \subseteq C_2$ and for all
$e\in C_1, \ e' \in C_2 \setminus C_1$, $\lnot(e' \ac e)$.

A fundamental notion is that of history of an event in a configuration.

\begin{definition}[history and possible histories]
  \label{def:possibleHist}
  Let $\mathbb{A}$ be an AES and
  let $e \in E$ be an event in $\mathbb{A}$. Given a configuration
  $C\in \conf{\mathbb{A}}$ such that $e\in C$, the \emph{history} of
  $e\in C$ is defined as $C \llbracket e \rrbracket = \{e' \in C \mid
  e' (\ac|_C)^* e\}$. The \emph{set of possible histories} of $e$,
  denoted by $\hist{e}$, is then defined as
  \begin{center}
    $\hist{e} = \{C \llbracket e \rrbracket \mid C \in
    \conf{\mathbb{A}} \land e\in C\}$
  \end{center}
  We will write $\allHistories{e} = \bigcup \hist{e}$ to represent the
  the set of all events possibly occurring in a history of event
  $e$. Moreover, given a history $h \in \hist{e}$, we define $h^- = h
  \setminus \{e\}$.
\end{definition}

Roughly speaking, $C \llbracket e \rrbracket$ consists of the events
which necessarily must occur before $e$ in the configuration
$C$. While in the case of PESs, each event $e$ has a unique history,
i.e., the set $\causes{e}$, in the case of AESs, an event $e$ may have
several histories. For example, the event $c_{0,2}$ in the AES $\mathbb{A}_2$ 
(Figure~\ref{fig:exampleAES}(c)) has four different histories, $\hist{c_{0,2}} 
= \{\{c_{0,2}\}, \{d, c_{0,2}\}, \{e, c_{0,2}\}, \{d, e, c_{0,2}\}\}$.

With abuse of notation, we will use
$\hist{X} = \bigcup_{e \in X} \hist{e}$ to denote the set of events in
the history of a set of events $X$.

\subsection{Reduction of AESs}

The technique for behaviour preserving reduction of AESs consists in
iteratively identifying a set of events carrying the same label, i.e.,
intuitively referring to the same action, and replacing all the events
in the set with a single event. Such a substitution is called a
\emph{folding}.  However, the configurations of the AES should remain
``essentially'' unchanged after the folding or, more precisely, the
original and the folded AES should be hp-bisimilar.

In order to understand the intuition behind folding, consider the
sample AESs in Figure~\ref{fig:exampleAES}, where events are named
using their label, possibly with subscripts (e.g., $c_0$ is an event
labelled by $c$). The AES $\mathbb{A}_1$ can be thought of as a
reduction of $\mathbb{A}_0$ obtained by folding two $c$-labelled events
$c_0$ and $c_1$, the first in conflict with $d$ and the second caused
by $d$, into a single event $c_{0,1}$, in asymmetric conflict with
$d$.  The dependencies $d\ \#\ c_0$ and $d < c_1$ in $\mathbb{A}_0$ give
rise to an asymmetric conflict, i.e., $d \ac c_{0,1}$ in
$\mathbb{A}_1$, as a side effect of the substitution.

\begin{figure}
\centering
\subfloat[\label{fig:AES:original}$\mathbb{A}_0$]{
  \includegraphics{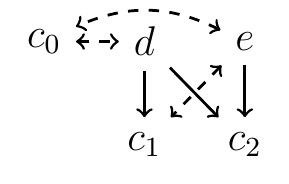}
}
\hspace{10mm}
\subfloat[\label{fig:AES:equivalent}$\mathbb{A}_1$]{
  \includegraphics{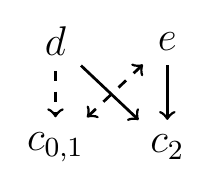}
  }
\hspace{10mm}
\subfloat[\label{fig:AES:nonequivalent}$\mathbb{A}_2$]{
  \includegraphics{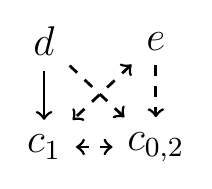}
}
\caption{AESs such that  $\mathbb{A}_0 \equiv_{hp} \mathbb{A}_1$  but $\mathbb{A}_0 \nequiv_{hp} \mathbb{A}_2$.}
\label{fig:exampleAES}
\end{figure}

The configurations of $\mathbb{A}_0$ and $\mathbb{A}_1$, are 
$\conf{\mathbb{A}_0} = \{\{c_0\},\{d, c_1\},\{d, e, c_2\}\}$ and 
$\conf{\mathbb{A}_1} = \{\{c_{0,1}\},\{d, c_{0,1}\},$ $\{d, e, c_2\}\}$, 
and it is not difficult to see that the two AESs are hp-bisimilar.

Also $\mathbb{A}_2$ could look as a reduced version of $\mathbb{A}_0$
where $c_0$ and $c_2$ are folded into $c_{0,2}$. However, this folding
would not preserve the behaviour.  In fact, $\conf{\mathbb{A}_2} =
\{\{c_{0,2}\},\{d, c_1\},\{e, c_{0,2}\}, \{d, c_1\},\{d, e,
c_{0,2}\}\}$ contains an additional configuration not in
$\conf{\mathbb{A}_0}$. This immediately implies that $\mathbb{A}_0$ is
not hp-bisimilar to $\mathbb{A}_2$.

We next identify sets of events that can be safely folded. For this we
need some further notation. Given a set $X$ of events, whenever it can
be folded, the resulting merged event will have as causes the common
causes of all the events in $X$, while events which are causes or weak
causes of only some of the events in $X$ will become weak causes of
the merged event. More precisely, given a set of events $X$, we define
its \emph{strict causes} $S(X) = \bigcap_{x \in X} \scauses{x} = \{ e' \mid \forall e \in X. e' < e \}$ and the
\emph{weak predecessors} as
\begin{center}
  $W(X) = \{ e'' \mid \exists e, e' \in X. \, e'' \ac  e\ \land\ \lnot ( e' \ac e'') \}  \setminus (S(X) \cup X)$
\end{center}

The set $W(X)$ consists of all $\ac$-predecessors of any event $e \in
X$ that is not a strong cause and it is not in conflict with at least
one event in $X$, so that $e$ can appear in the same configuration of
some event in $X$. For instance, in Fig.~\ref{fig:exampleAES}, we have
that $S(\{c_0, c_1\}) = \emptyset$ and $W(\{c_0, c_1\}) = \{ d
\}$. Instead, $S(\{c_1, c_2\}) = \{ d \}$ and $W(\{c_1, c_2\}) = \{ e
\}$. Observe that events in $W(X)$ are not necessarily in a history of
some event in $X$, as shown by the AES in Fig.~\ref{fig:no-hist}.
More specifically, there is a configuration where both events $a$ and 
$c_1$ occur, although $a$ is a weak predecessor because of its $\ac$
relation with $c_0$.

\begin{figure}[h!]
\centering
\includegraphics{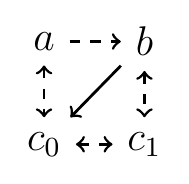}
\caption{The weak predecessors for the set of similar events $X = \{ c_0, c_1\}$ are $W(X) = \{ a, b\}$, but it includes the event $a$ that is not in the history of any event in $X$.}
\label{fig:no-hist}
\end{figure}

A first notion is that of similar events.


\begin{definition}[similar events]
  \label{def:EquivalentEvts}
  Let $\mathbb{A}=\langle E, \leq, \ac, \lambda \rangle$ be an AES.
  A set of events $X \subseteq E$ is called \emph{similar} if for all $e, 
  e' \in X$, $e'' \in E \setminus X$:
  \begin {compactenum}
  \item $\lambda(e) = \lambda(e')$ and $e \# \ e'$
    
  \item $e \ac e'' \quad \Rightarrow \quad e' \ac e''\  \lor\ e'' \ac e$; 
   \item $e'' \ac_\mu e \quad \Rightarrow \quad e'' \ac e'$.
 \end{compactenum}
\end{definition}

    

Intuitively, events to be folded should represent different occurrences 
of the same activity, hence the first condition is that they need to have 
the same label and be in conflict.
Conditions 2 and 3 roughly ask that all events in $X$ have,
essentially, the same asymmetric conflicts (with the exception of
those involving events in the histories).
More precisely, given two events $e, e' \in X$, if for an event $e''$
we have $e \ac e''$ conditions 2 requires that also $e' \ac e''$
or $e'' \ac e$ (and thus $e \# e''$) as it could happen, e.g., when $e''$ is  part of some history of $e'$ but not of $e$.  This can
be understood as follows: we would like to see $e$ and $e'$ as
occurrences of the same activity with different histories, hence $e''$
plays the role of a weak cause, inserted in the history of $e'$ and
incompatible with the history of $e$.
Finally, condition 3 requires that direct $\ac$-predecessors are
preserved in $X$.

\begin{wrapfigure}{r}{0.4\textwidth}
\centering
\subfloat[\label{fig:AES:rule2}$\mathbb{A}_3$]{
   \includegraphics{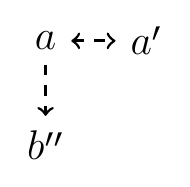}
}
\subfloat[\label{fig:AES:rule3}$\mathbb{A}_3'$]{
  \includegraphics{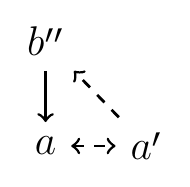}
}
\caption{Examples of non-similar events \emph{a} and \emph{a'}}
\label{fig:exampleRulesAES}
\end{wrapfigure}

Examples motivating conditions 2 and 3 are in
Figs.~\ref{fig:AES:rule2}-\ref{fig:AES:rule3}, which present
situations where the merging of $a,a'$ does not preserve the behaviour
and hence should not be allowed. In the AES $\mathbb{A}_3$ of
Fig.~\ref{fig:AES:rule2}, $a \ac b''$ while neither $a' \ac b''$ nor
$b'' \ac a$, thus violating condition 2. In the AES $\mathbb{A}_3'$ of
Fig.~\ref{fig:AES:rule3}, $b'' \ac_\mu a$ while it is not the case
that $b'' \ac a'$, thus violating condition 3.

For an event $e_X$ resulting as the merging of a set of similar events
$X$, the events in $W(X)$ will be weak causes or consistent with
$e_X$. As a consequence, in order not to modify
the overall behaviour, all consistent subsets of $W(X)$ should match possible history of an event in $X$ already present in the original AES.
%
This is formalised by the definition of the combinable set of events.

\begin{definition}[combinable set of events]
  \label{def:combinableEvts}
  Let $\mathbb{A}=\langle E, \leq, \ac, \lambda \rangle$ be an AES. A
  set of events $X \subseteq E$ of equivalent events is
  \emph{combinable} if $\forall Y \subseteq W(X)$ \emph{consistent,}
  there exists $e \in X$ such that for all $e' \in Y$, $\neg (e \ac
  e')$ and there is $h_{e} \in \hist{e}$ satisfying $h^-_{e} \subseteq
  S(X) \cup \causes{Y}$.
\end{definition}

Armed with the above definitions, we can now formally introduce the folding of an AES.

\begin{definition}[folding of an AES]
  \label{def:foldingAES}
  Let $\mathbb{A}$ be an
  AES, $X$ be a set of combinable events. The \emph{folding} of
  $\mathbb{A}$ on $X$ is the AES 
  $\fold{\mathbb{A}}{X} = \langle \fold{E}{X}, \ <_{/X},\ac_{/X},\lambda_{/X}\rangle$
  where
  \begin{center}
    $
    \begin{array}{l>{\centering}p{8mm}l}
      \fold{E}{X} &=& 
      E \setminus X \cup \{e_X\},\\
      %
      \leq_X &=& 
      \leq_{|(E\setminus X)} \cup \{(e, e_X)\mid e \in S(X)\} \cup \{(e_X, e)\ |\exists e' \in X : e' \less e\} ,\\
      %
      \ac_X &=& 
      \ac_{|(E\setminus X)} 
      \cup \{(e', e_X) \mid \forall e \in X : e' \ac e \}
      \cup \{(e_X, e') \mid \forall e \in X : e \ac e'\}\\
      %
      \lambda_{/X} &=& \lambda,\lambda_{/X}(e_X) = \lambda(e) \text{ for an event } e \in X. \\
    \end{array}
    $
  \end{center}
\end{definition}

In words, the folding of $\mathbb{A}$ is obtained by replacing the set
$X$ of events with a single event $e_X$, with the same label as those
in $X$. The causes of $e_X$ are the common causes $S(X)$ of the events in $X$, and $e_X$ is a cause for all events caused by at least one event in $X$. The asymmetric conflicts for $e_X$ are exactly those of the events in $W(X)$.

It can be shown that $\fold{\mathbb{A}}{X}$ is indeed a properly
defined AES (the proof can be found in the Appendix).

\ifTechnicalReport 
In the following we introduce some other
definitions and postulate some lemmas to prove that the behaviour is
preserved after every folding.  
\else 
In order to show that the
folding operation preserves the behaviour, i.e., that the original and
folded AESs are hp-bisimilar, we rely on the notion of
AES-morphism~\cite{BaldanCM01}. Intuitively, an AES-morphism is a
mapping between AESs which shows how the target AES can simulate the
source AES. 
%
\fi

\begin{definition}[folding morphism]
  \label{def:mapping}
  Let $\mathbb{A}$ be an
  AES and let $X \subseteq E$ be combinable.  The \emph{folding
    map} $f: E\rightarrow \fold{E}{X}$ is defined as follows:
  \begin{center}
    $
    f(e) = 
    \left\{ 
      \begin{array}{cp{2mm}l}
	e_X & &\text{if } e \in X\\
	e && \text{otherwise}
      \end{array} 
    \right.
    $
  \end{center}
\end{definition}

It can be shown that the folding morphism is indeed an AES-morphism,
and as such it maps configurations of $\mathbb{A}$ into configurations
of $\fold{\mathbb{A}}{X}$.

\ifTechnicalReport
\begin{restatable}{lemma}{LemmaPropertiesAESFoldingFunction}
  \label{lemma:PropertiesF}
  Let $\mathbb{A}$ be an AES,
  $X \subseteq E$ be combinable and let $f: E\rightarrow
  \fold{E}{X}$ be the folding morphism. Then for all $e \in E$, $x \in \fold{E}{X}$
  \begin{enumerate}
  \item if $x < _{/X}f(e)$ then there exists $e' \in E$ such that $e'
    \leq e$ and $f(e') =x$;
  \item if $f(e) \ac f(e')$ then $e \ac e'$.
  \end{enumerate}
\end{restatable}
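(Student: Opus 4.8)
The plan is to prove both statements by a direct case analysis on the defining clauses of $\leq_{/X}$ and $\ac_{/X}$ from Definition~\ref{def:foldingAES}, reading off the shape of the relevant pair rather than tracing derivation chains. The enabling observation I would establish first is that the three generating clauses of $\leq_{/X}$ are already closed under composition, so that $\leq_{/X}$ consists \emph{exactly} of those pairs; this follows from the acyclicity of $<$ together with the facts that $S(X) \subseteq E \setminus X$ and that every $z \in S(X)$ satisfies $z < e$ for all $e \in X$ (so a path entering and then leaving $e_X$ would force a $<$-cycle). I would also record that $f$ is the identity on $E \setminus X$, in particular on $S(X)$, and that the only strict $\leq_{/X}$-predecessors of $e_X$ are the elements of $S(X)$, while its only strict successors are the events $y$ with some $X$-cause below them.

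For statement~1 I would split on whether $f(e) = e_X$ or $f(e) = e$. If $e \in X$, then $f(e) = e_X$, and the only clause of $\leq_{/X}$ producing $e_X$ as the larger element is $\{(z,e_X)\mid z \in S(X)\}$; hence $x \in S(X)$, so $x < e$ and $f(x) = x$, and $e' := x$ works. If $e \notin X$, then $f(e) = e$, and the pair $(x,e)$ is either inherited from $\leq_{|(E\setminus X)}$, in which case $x \in E\setminus X$, $x < e$ and $e' := x$; or it comes from the clause $\{(e_X,y)\mid \exists e'\in X.\,e' < y\}$ with $y = e$, in which case $x = e_X$ and some $\bar e \in X$ has $\bar e < e$, so $e' := \bar e$ gives $f(e') = e_X = x$ and $\bar e \le e$.

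For statement~2 I would split on which of $e,e'$ lie in $X$ and use the three clauses defining $\ac_{/X}$. If neither is in $X$, the pair $f(e) \ac_{/X} f(e')$ comes from $\ac_{|(E\setminus X)}$, so $e \ac e'$. If $e \in X$ and $e' \notin X$, then $e_X \ac_{/X} e'$ arises from the clause $\{(e_X,y)\mid \forall \bar e\in X.\,\bar e \ac y\}$, whence in particular $e \ac e'$; symmetrically, if $e \notin X$ and $e' \in X$, the clause $\{(z,e_X)\mid \forall \bar e\in X.\,z \ac \bar e\}$ yields $e \ac e'$. Finally, if $e,e' \in X$, then $f(e) = f(e') = e_X$ and the hypothesis would require $e_X \ac_{/X} e_X$, which is excluded because $\ac_{/X}$ is irreflexive (condition~3 of Definition~\ref{def:AES} for the AES $\fold{\mathbb{A}}{X}$ forbids $\ac$-cycles on causal predecessors, hence self-loops), so the implication holds vacuously.

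The main obstacle I anticipate is not the bookkeeping of the cases but justifying the preliminary closure claim, i.e.\ that neither $\leq_{/X}$ nor $\ac_{/X}$ hides pairs beyond the listed clauses. For causality this needs the acyclicity of $<$ to rule out a $\leq_{/X}$-predecessor of $e_X$ that is simultaneously a successor. For asymmetric conflict, if one works with the hereditary closure of the clauses rather than the clauses verbatim, each inherited pair $a \ac_{/X} c$ obtained from $a \ac_{/X} b <_{/X} c$ must be reflected back to an original $\ac$ by applying hereditarity (condition~2 of Definition~\ref{def:AES}) in $\mathbb{A}$, again using that the causal neighbours of $e_X$ are drawn from $S(X)$ and $X$. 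Once this closure is in place, the two reflection properties follow immediately from the case analyses above, and both feed directly into showing that $f$ is an AES-morphism.
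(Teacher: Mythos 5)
Your proof is correct and follows essentially the same route as the paper's: a direct case analysis on the defining clauses of $\leq_{/X}$ and $\ac_{/X}$ in Definition~\ref{def:foldingAES} (the paper's main-text version first reduces statement~1 to immediate causality and invokes induction, but its appendix proof is exactly your clause-by-clause analysis). Your preliminary ``closure'' observation, while harmless, is not actually needed for this lemma: Definition~\ref{def:foldingAES} defines $\leq_{/X}$ and $\ac_{/X}$ verbatim as those unions, and the paper only verifies afterwards (Lemma~\ref{lemma:AES}) that the unions happen to be transitive, respectively saturated, so there are no hidden pairs to exclude. One caveat: in the sub-case $e, e' \in X$ of statement~2 you rule out $e_X \ac_{/X} e_X$ by appealing to condition~3 of Definition~\ref{def:AES} for $\fold{\mathbb{A}}{X}$; this presupposes that $\fold{\mathbb{A}}{X}$ is an AES, which the paper establishes (Lemma~\ref{lemma:AES}) \emph{using} the present lemma, so as stated that step is circular. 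The repair is immediate and already implicit in your own clause reading: each clause of $\ac_{/X}$ pairs $e_X$ only with events of $E \setminus X$, and the restriction clause does not mention $e_X$ at all, so $e_X \ac_{/X} e_X$ can never arise and the case is vacuous without any appeal to the AES axioms for the folded structure.
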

\begin{proof} 
  1. Let $x \in \fold{E}{X}$ and $e \in E$ be such that $x < f(e)$. Assume
  that the causality is direct, i.e., $x <_\mu f(e)$, the general case
  will follow by an inductive reasoning.

  We distinguish various cases:
  \begin{compactitem}

  \item if $x = e_X$ then, by construction, there exists $e' \in X$
    such that $e' < e$. Since $f(e') = e_X$, this is the desired
    conclusion.

  \item if $e \in X$ (and thus $f(e) = e_X$) then by construction $x$
    is some event $e' \in E - X$ and $e'$ is a cause of all events in
    $X$, hence $e' < e$, as desired.

  \item if none of the above hold, then $x = e' \in E$ and $f(e) = e$,
    hence the result trivially holds.
  \end{compactitem}
 
  \medskip

  2. Given $e, e' \in E$, if $f(e) \ac f(e')$ then $e \ac e'$. In
  fact, if $e \in X$ and thus $f(e) = e_X$, by construction $e \ac
  e'$. Similarly, $e' \in X$ and thus $f(e') = e_X$ and $e \in W(X)$,
  then the fact that $e \ac e'$ follows by construction (actually $e
  \ac_\mu e'$). If $e, e' \not\in X$ then $f$ is the identity on $e,
  e'$, and thus the result trivially holds.

\qed
\end{proof}

Note that the converse of (2) above, i.e., if $e \ac e'$ then $f(e) \ac f(e')$,
does not hold. For instance, consider the event structures in 
Figure~\ref{fig:aesFoldedAES}. If we merge the two $c$'s, we get that 
$a \ac c_1$ but it is not true that $f(a) \ac f(c_1)$.

\begin{figure}
\begin{center}
  \begin{tikzpicture}
  \matrix (m) [matrix of math nodes,row sep=1em,column sep=2em,minimum width=2em] at (0,0)
  {
     {a}  & {}  \\
     {b}  & {c_2} \\
     {c_1}   & {} \\};
  \path[-stealth]
    (m-1-1) edge (m-2-1)
    (m-2-1) edge (m-3-1)
	         edge [dashed, <->]  (m-2-2);

  \matrix (m1) [matrix of math nodes,row sep=1em,column sep=2em,minimum width=2em] at (2,0)
  {
     {a} \\
     {b} \\
     {c} \\};
  \path[-stealth]
    (m1-1-1) edge (m1-2-1)
    (m1-2-1) edge [dashed, ->] (m1-3-1);

\end{tikzpicture}
\end{center}
\caption{AES and a folded structure}
\label{fig:aesFoldedAES}
\end{figure}

\begin{corollary}[reflection of $<$-chains]
  \label{cor:causal-reflect}
  With the notation of Lemma~\ref{lemma:PropertiesF}, for $x, x' \in
  E_X$, if $x \leq x'$ then there are $e, e' \in E$ such that $e \leq
  e'$ and $f(e) = x$, $f(e') = x'$.
\end{corollary}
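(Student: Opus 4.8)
The plan is to prove Corollary~\ref{cor:causal-reflect} by lifting the single-step reflection result of Lemma~\ref{lemma:PropertiesF}(1) along a chain, and the key observation is that part (1) of the lemma already reflects each individual step of a causal chain. So first I would reduce the general case $x \leq x'$ to the case of an immediate causality step $x <_\mu x'$ together with the reflexive case $x = x'$, since $\leq$ is the reflexive-transitive closure of $<_\mu$ (recalling that $\causes{\cdot}$ is finite, so every causal chain decomposes into finitely many immediate steps). The reflexive case $x = x'$ is trivial: pick any $e \in E$ with $f(e) = x$ (such $e$ exists because $f$ is surjective by Definition~\ref{def:mapping}) and set $e' = e$.

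For the inductive step, suppose $x = x_0 <_\mu x_1 <_\mu \cdots <_\mu x_n = x'$. I would build the reflecting chain backwards, starting from the top. Choose any $e_n \in E$ with $f(e_n) = x_n = x'$. Now I apply Lemma~\ref{lemma:PropertiesF}(1) with the event $e_n$ and the element $x_{n-1} <_{/X} f(e_n)$: the lemma yields some $e_{n-1} \in E$ with $e_{n-1} \leq e_n$ and $f(e_{n-1}) = x_{n-1}$. Iterating this downward gives events $e_n \geq e_{n-1} \geq \cdots \geq e_0$ in $E$ with $f(e_i) = x_i$ for every $i$. Setting $e = e_0$ and $e' = e_n$, transitivity of $\leq$ in $\mathbb{A}$ gives $e \leq e'$, while $f(e) = x_0 = x$ and $f(e') = x_n = x'$, which is exactly the desired conclusion.

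The main subtlety to check is the domain condition for applying the lemma: Lemma~\ref{lemma:PropertiesF}(1) requires an argument of the form $x <_{/X} f(e)$, i.e.\ the smaller element must sit strictly below the image of an honest $E$-event. This is guaranteed at each stage because we have just produced $e_i$ with $f(e_i) = x_i$, and $x_{i-1} <_{/X} x_i = f(e_i)$ holds by hypothesis, so the lemma is applicable with $e := e_i$ and $x := x_{i-1}$. I expect this bookkeeping, rather than any deep argument, to be the only place requiring care; the content is entirely carried by Lemma~\ref{lemma:PropertiesF}(1), and the corollary is really just its chain-closed restatement. One should note that the reflected chain need not consist of immediate causalities in $E$ (the lemma gives $e_{i-1} \leq e_i$, not $e_{i-1} <_\mu e_i$), but this does not matter since the statement only asserts $e \leq e'$.
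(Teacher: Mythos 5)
Your proposal reaches the right conclusion and runs on the same engine as the paper's proof, namely Lemma~\ref{lemma:PropertiesF}(1) together with surjectivity of the folding map. However, the paper's proof is a \emph{single} application of that lemma, because part (1) is stated for an arbitrary strict causality $x <_{/X} f(e)$, not merely an immediate one: given $x \leq_{/X} x'$, either $x = x'$ (and surjectivity alone suffices), or $x <_{/X} x'$, in which case choosing any $e'$ with $f(e') = x'$ and applying the lemma once yields the required $e$ with $e \leq e'$ and $f(e) = x$. Your decomposition of $x \leq_{/X} x'$ into immediate steps $x_0 <_\mu x_1 <_\mu \cdots <_\mu x_n$ is therefore superfluous, and --- more seriously --- it is unjustified at this point in the development. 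The claim that $\leq_{/X}$ is the reflexive-transitive closure of its immediate-step relation presupposes that $\fold{\mathbb{A}}{X}$ is a partial order whose events have finitely many causes; but these are exactly the properties established in Lemma~\ref{lemma:AES}, which comes \emph{after} the corollary and whose proof (item 1, well-foundedness of $\leq_{/X}$) invokes this very corollary. Your parenthetical ``recalling that $\causes{\cdot}$ is finite'' is available at this stage only for $\mathbb{A}$, not for the folded structure in which your chain lives, so as written the reduction is circular.

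Fortunately the flaw is shallow: your backward iteration never actually uses immediacy --- at each stage you only need $x_{i-1} <_{/X} f(e_i)$, which is precisely the hypothesis of Lemma~\ref{lemma:PropertiesF}(1) in full generality, a point you half-notice in your closing remark about the reflected chain not consisting of immediate causalities. Deleting the decomposition (equivalently, taking $n = 1$) turns your argument verbatim into the paper's one-line proof.
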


\begin{proof}
  It follows immediately by property (1) in
  Lemma~\ref{lemma:PropertiesF} and surjectivity of $f$.
\end{proof}

\begin{restatable}{lemma}{LemmaAEScompliance}
  \label{lemma:AES}
  Let $\mathbb{A}$ be an AES,
  let $X \subseteq E$ a combinable set. Then $\fold{\mathbb{A}}{X} = \langle
  \fold{E}{X}, \leq_{/X},\ac_{/X}, \lambda_{/X} \rangle$ is an AES.
\end{restatable}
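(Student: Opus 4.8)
The plan is to verify directly that $\fold{\mathbb{A}}{X}$ satisfies every clause of Definition~\ref{def:AES}: that $\leq_X$ is a partial order with finite sets of causes, and that $\ac_X$ satisfies conditions 1--4. The guiding idea throughout is to transport statements about $\fold{\mathbb{A}}{X}$ back to $\mathbb{A}$ through the folding map $f$, exploiting the two reflection properties already established in Lemma~\ref{lemma:PropertiesF} (a $\leq_X$-step below $f(e)$ is witnessed by a $\leq$-predecessor of $e$, and $f(e)\ac_X f(e')$ forces $e \ac e'$) together with their chain-level consequence in Corollary~\ref{cor:causal-reflect}. Two purely definitional facts are used repeatedly: $e_X \ac_X e'$ unfolds to $\forall e \in X.\, e \ac e'$ (and symmetrically for $e' \ac_X e_X$), and $S(X) \subseteq \causes{e}$ for every $e \in X$.

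First I would treat $\leq_X$. Reflexivity is inherited from $\leq$ (with $e_X \leq_X e_X$), and transitivity is checked by a short case analysis on where $e_X$ occurs in a composable pair: the nontrivial compositions through $e_X$ all collapse using transitivity of $\leq$ together with the definitions of $S(X)$ and of the $e_X$-edges (e.g. $a \leq_X e_X \leq_X c$ gives $a \in S(X)$ and $e' < c$ for some $e' \in X$, whence $a < e' < c$). Antisymmetry follows because a $\leq_X$-cycle through $e_X$ would yield $a < e'$ and $e' < a$ with $e' \in X$, contradicting antisymmetry of $\leq$. For finiteness, the $\leq_X$-down-set of $e_X$ is $S(X) \cup \{e_X\} \subseteq \causes{e} \cup \{e_X\}$, and for $x \in E \setminus X$ Corollary~\ref{cor:causal-reflect} shows the down-set of $x$ is contained in $f(\causes{x})$; both are finite.

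Next come conditions 1 and 2 of $\ac_X$. Condition 1 ($<_X\, \subseteq\, \ac_X$) is inherited on $E \setminus X$ and reduces to condition 1 of $\mathbb{A}$ when $e_X$ is the target (its source lies in $S(X)$). The only delicate case is $e_X <_X x'$, arising from $e' < x'$ for a single $e' \in X$: here I would use that the events of $X$ are pairwise in conflict (condition 1 of Definition~\ref{def:EquivalentEvts}), so $e \ac e'$ for every $e \in X$, and then hereditarity (condition 2 of $\mathbb{A}$) applied to $e \ac e' < x'$ yields $e \ac x'$ for all $e$, i.e. $e_X \ac_X x'$. Condition 2 (hereditarity of $\ac_X$ along $<_X$) is again a case analysis; the recurring move is to unfold any $\ac_X$-edge incident to $e_X$ into the corresponding family of $\ac$-edges in $\mathbb{A}$ and push it forward along $<$ by hereditarity of $\mathbb{A}$, the few self-loop configurations at $e_X$ being excluded since they would force a $2$-cycle inside some $\causes{e}$.

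The heart of the argument is conditions 3 and 4. For acyclicity (condition 3), a hypothetical $\ac_X$-cycle inside the $\leq_X$-down-set of some $x$ is lifted to $\mathbb{A}$: every vertex is reflected by Lemma~\ref{lemma:PropertiesF}(1) into $\causes{\hat e}$ for a preimage $\hat e$ of $x$, the single occurrence of $e_X$ being replaced by one fixed representative $e^* \in X \cap \causes{\hat e}$, and every edge reflects by Lemma~\ref{lemma:PropertiesF}(2) or by definitional unfolding at $e^*$; this produces an $\ac$-cycle inside $\causes{\hat e}$, contradicting condition 3 of $\mathbb{A}$. Condition 4 is the main obstacle, because the converse of Lemma~\ref{lemma:PropertiesF}(2) fails, so one cannot read $x \ac_X x'$ directly off the reflected relation $\hat e \ac \hat e'$. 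When $x, x' \in E\setminus X$ the reflection lands in $\causes{x} \cup \causes{x'}$ and condition 4 of $\mathbb{A}$ gives $x \ac x'$ at once. When $x = e_X$ I would instead prove $e \ac x'$ for each target $e \in X$ separately, re-running the reflection with the representative of $e_X$ chosen to be exactly that $e$ and using $S(X) \subseteq \causes{e}$ to keep the lifted cycle inside $\causes{e} \cup \causes{x'}$; condition 4 of $\mathbb{A}$ then yields $e \ac x'$ for every $e \in X$, i.e. $e_X \ac_X x'$, while the case $x = x' = e_X$ is vacuous by the acyclicity just proved. This per-target choice of representative is the one genuinely non-routine point of the proof.
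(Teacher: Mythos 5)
Your proof is correct, and on the clauses that the paper's proof actually addresses it runs on exactly the same engine: transitivity of $\leq_{/X}$ inherited from $\mathbb{A}$, well-foundedness and finiteness of $\causes{x}$ obtained by reflecting chains through Corollary~\ref{cor:causal-reflect}, and acyclicity of $\ac_{/X}$ on down-sets obtained by lifting a hypothetical cycle to $\mathbb{A}$ via Lemma~\ref{lemma:PropertiesF}(1)--(2). The genuine difference is coverage: the paper's proof verifies only transitivity, hereditarity of $\ac_{/X}$ along $<_{/X}$ (condition 2 of Definition~\ref{def:AES}), the well-founded partial order, finite causes, and condition 3, while it silently omits condition 1 ($<_{/X}\ \subseteq\ \ac_{/X}$) and, more seriously, condition 4 --- which is not automatic, precisely because, as you note and as the paper itself observes, the converse of Lemma~\ref{lemma:PropertiesF}(2) fails, so a reflected cycle in $\ac|_{\causes{e}\cup\causes{x'}}$ plus condition 4 in $\mathbb{A}$ does not directly transfer back to $\ac_{/X}$. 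Your two additions are sound: for condition 1 in the case $e_X <_{/X} x'$, the pairwise conflict inside $X$ (Definition~\ref{def:EquivalentEvts}(1)) does yield mutual asymmetric conflict among the members of $X$ (binary conflict in a saturated AES is $\ac$ in both directions), and hereditarity in $\mathbb{A}$ then gives $e \ac x'$ for every $e \in X$, which is what the definition of $\ac_{/X}$ demands; for condition 4, your per-target re-run of the lift --- replacing the single occurrence of $e_X$ by each $e \in X$ in turn, keeping the lifted cycle inside $\causes{e} \cup \causes{x'}$ via $S(X) \subseteq \causes{e}$ and unfolding the $e_X$-incident edges definitionally --- correctly produces $e \ac x'$ for all $e \in X$, hence $e_X \ac_{/X} x'$, with the diagonal case $x = x' = e_X$ vacuous by the acyclicity already established. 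So your argument is not merely a match for the paper's: it is a strictly more complete verification, at the cost of a longer case analysis; the paper's version buys brevity but leaves conditions 1 and 4 of Definition~\ref{def:AES} unproved, and your treatment of condition 4 supplies the one non-routine idea needed to close that gap.
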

\begin{proof}
  We first note that the transitivity of $\leq$ in $\fold{\mathbb{A}}{X}$
  as defined in Definition~\ref{def:foldingAES} follows immediately by
  transitivity of $\leq$ in $\mathbb{A}$.
  And similarly, asymmetric conflict is saturated in $\fold{\mathbb{A}}{X}$
  because it was in $\mathbb{A}$.

  \medskip

  Let $f : E \to E_{/X}$ be the folding morphism.  We next observe that
  the defining properties of AESs hold.

  \begin{enumerate}
  
  \item  $\leq_{/X}$ is a well-founded partial order

    By Corollary~\ref{cor:causal-reflect}, causality chains are
    reflected, hence an infinite descending chain $x_1 > x_2 > x_3 >
    \ldots$ in $\fold{\mathbb{A}}{X}$, would be reflected in an infinite
    descending chain $e_1 > e_2 > e_3 > \ldots$ in $\mathbb{A}$.\\

  \item $\lfloor x \rfloor_{\fold{\mathbb{A}}{X}} 
    = \{ x' \in \fold{E}{X}\ |\ x' \leq_{/X} x\}$ 
    is finite for all $x\in \fold{E}{X}$,

    This follows again, immediately, from
    Lemma~\ref{lemma:PropertiesF}(1) and surjectivity of $f$: an event
    with infinitely many causes would be reflected to an event with
    infinitely many causes in $\mathbb{A}$.\\

  \item $\ac_{\lfloor x \rfloor_{\fold{\mathbb{A}}{X}}}$ is acyclic for all
    $x \in \fold{E}{X}$.
  
    Let $x \in \fold{E}{X}$ be an event and suppose that $\causes{x}$
    contains a cycle $x_1 \ac_{/X} x_2 \ac_{/X} \dots \ac_{/X}
    x_1$. By surjectivity of $f$ we can find $e \in E$ such that $x
    =f(e)$. By Lemma~\ref{lemma:PropertiesF}(1), there are events
    $e_1, \ldots, e_n \in \causes{e}$ such that $f(e_i) = x_i$ for any
    $i \in \{ 1, \ldots, n\}$. By point (2) of the same lemma, $e_1
    \ac e_2 \ac \dots \ac e_1$. This contradicts the property of
    $\ac_{\causes{e}} \in \mathbb{A}$ being acyclic for any event
    $e\in \mathbb{A}$.

\qed
\end{enumerate}
\end{proof}

\begin{definition}[AES-morphism\cite{BaldanCM01}]
  Let $\mathbb{A} = \langle E, \leq, \ac,\lambda \rangle$ and
  $\fold{\mathbb{A}}{X} = \langle \fold{E}{X}, \ <_{/X},\ac_{/X},\
  \lambda_{/X}\rangle$ be AESs. An AES-morphism $f : \mathbb{A}
  \rightarrow \fold{\mathbb{A}}{X}$ is a partial function $f : E
  \rightarrow \fold{E}{X}$ such that, for all $e, e' \in E$:

  \begin{compactenum}
  \item \label{morphism:one} if $f(e) \neq \perp$ then $\lfloor f(e) \rfloor 
  \subseteq f(\lfloor e\rfloor)$;
  \item if $f(e) \neq \perp \neq f(e')$ then 
    \begin{enumerate}
    \item \label{morphism:two} $f(e) \ac_{/X} f(e') \Rightarrow e \ac e'$;
    \item \label{morphism:three} $(f(e) = f(e')) \land (e \neq e') \Rightarrow e \# e'$.
    \end{enumerate}
  \end{compactenum}
\end{definition}

\begin{restatable}{lemma}{LemmaAESFoldingMorphism}
  \label{lemma:morphism}
  Let $\mathbb{A}$ 
  be an AES, $X \subseteq E$ be a combinable set of events and let
  $\fold{\mathbb{A}}{X} = \langle \fold{E}{X}, \leq_{/X},\ac_{/X}, \
  \lambda_{/X} \rangle$ be the folded event structure. Then  $f : E \to
  \fold{E}{X}$ is an AES-morphism.
\end{restatable}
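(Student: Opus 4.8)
The plan is to check, one by one, the three defining clauses of an AES-morphism for the folding map $f$ of Definition~\ref{def:mapping}, reusing the properties of $f$ already recorded in Lemma~\ref{lemma:PropertiesF}. The first observation I would make is that $f$ is a \emph{total} function: every $e\in E$ is sent either to $e_X$ (if $e\in X$) or to $e$ itself (otherwise). Consequently $f(e)\neq\perp$ always holds, and each clause of the morphism definition reduces to an unconditional statement quantified over all $e,e'\in E$. This is what makes the verification light: there are no partiality side-conditions to track.

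For clause~(\ref{morphism:one}), I would fix $e\in E$ and show $\lfloor f(e)\rfloor \subseteq f(\lfloor e\rfloor)$ by taking an arbitrary $x\leq_{/X} f(e)$. If $x=f(e)$, then $x\in f(\lfloor e\rfloor)$ trivially, since $e\in\lfloor e\rfloor$ and reflexivity of $\leq$ puts $e$ in its own downward closure. If instead $x<_{/X} f(e)$ strictly, then Lemma~\ref{lemma:PropertiesF}(1) supplies an event $e'\in E$ with $e'\leq e$ and $f(e')=x$, so again $x\in f(\lfloor e\rfloor)$. Clause~(\ref{morphism:two}), namely $f(e)\ac_{/X} f(e')\Rightarrow e\ac e'$, requires no new argument at all: it is exactly the reflection property established in Lemma~\ref{lemma:PropertiesF}(2).

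The only clause needing a small case analysis is~(\ref{morphism:three}). Suppose $f(e)=f(e')$ with $e\neq e'$. Since $f$ is the identity on $E\setminus X$ and collapses all of $X$ onto the single new event $e_X$, two \emph{distinct} events can share an image only if both lie in $X$. The first similarity condition of Definition~\ref{def:EquivalentEvts} then yields $e\# e'$ directly. I expect this to be the only place where care is needed, precisely to justify that coincidence of images forces membership in $X$; otherwise the lemma is essentially a repackaging of Lemma~\ref{lemma:PropertiesF} together with the conflict clause of similarity, so I anticipate no genuine obstacle.
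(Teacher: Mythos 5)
Your proposal is correct and follows essentially the same route as the paper: clauses~(1) and~(2) are discharged by Lemma~\ref{lemma:PropertiesF}(1) and (2), and clause~(3) by observing that distinct events with a common image must both lie in $X$, whence condition~1 of Definition~\ref{def:EquivalentEvts} gives $e\#e'$. Your explicit treatment of the reflexive case $x=f(e)$ in clause~(1) and the remark that $f$ is total merely spell out details the paper leaves implicit.
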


\begin{proof}
  \begin{compactitem}
  \item Properties~\ref{morphism:one} and~\ref{morphism:two} follow
    directly from Lemma~\ref{lemma:PropertiesF} (1) and (2),
    respectively. 
   
  \item Property~\ref{morphism:three}. By
    Definition~\ref{def:mapping}, for any pair of events $e,e' \in E$,
    $e\neq e'$, if $f(e) = f(e')$ implies $e,e' \in X$. Hence, by
    construction, $e\#e'$.
    \qed
  \end{compactitem}
  
\end{proof}
\fi

Actually, according to the next lemma, the folding morphism have very
special properties, as it preserves and reflects asymmetric conflict in configurations.

\begin{lemma}
  Let $\mathbb{A}$ be an AES, and let $\fold{\mathbb{A}}{X} = \langle
  \fold{E}{X}, \leq_{/X},\ac_{/X}, \lambda_{/X} \rangle$ be the
  folding of $\mathbb{A}$ on the set of events $X$. Let $f :
  \mathbb{A} \rightarrow \fold{\mathbb{A}}{X}$ be the folding
  morphism. Then for any configuration $C_1 \in \conf{\mathbb{A}}$ it
  holds that $f(C_1) \in \conf{\fold{\mathbb{A}}{X}}$ and $(C_1,
  \ac_{C_1}^*) \approx (f(C_1), \ac_{f(C_1)}^*)$.
  \label{lemma:AESconfs}
\end{lemma}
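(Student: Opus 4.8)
The plan is to show that the folding morphism $f$, restricted to the configuration $C_1$, is itself the (label-preserving) isomorphism witnessing $(C_1,\ac_{C_1}^*)\approx(f(C_1),\ac_{f(C_1)}^*)$. First I would record that $f|_{C_1}$ is a label-preserving bijection onto $f(C_1)$: surjectivity onto the image is immediate; labels are preserved because $\lambda_{/X}(e_X)=\lambda(e)$ for the events of $X$ (Definition~\ref{def:foldingAES}); and injectivity follows from the AES-morphism property that $f(c)=f(c')$ with $c\neq c'$ forces $c\#c'$ (Lemma~\ref{lemma:morphism}), which is impossible inside the conflict-free set $C_1$. In particular at most one event of $X$ can lie in $C_1$; when it exists I call it $d'$, so that $e_X$ has the single $f$-preimage $d'$ in $C_1$.

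Next I would check $f(C_1)\in\conf{\fold{\mathbb{A}}{X}}$ (this makes sense since $\fold{\mathbb{A}}{X}$ is an AES by Lemma~\ref{lemma:AES}). Causal closedness comes from the reflection of causes (Lemma~\ref{lemma:PropertiesF}(1)) together with surjectivity: if $x'\leq_{/X}f(c)$ with $c\in C_1$, there is $e'\leq c$ with $f(e')=x'$, and $e'\in\causes{c}\subseteq C_1$, hence $x'\in f(C_1)$. Acyclicity of $\ac_{/X}|_{f(C_1)}$ is by contradiction: a cycle $x_1\ac_{/X}\dots\ac_{/X}x_n\ac_{/X}x_1$ pulls back along the bijection to $c_1,\dots,c_n\in C_1$, and the morphism property $f(c_i)\ac_{/X}f(c_{i+1})\Rightarrow c_i\ac c_{i+1}$ (Lemma~\ref{lemma:morphism}) produces a cycle in $\ac|_{C_1}$, contradicting that $C_1$ is a configuration.

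For the isomorphism I would treat the two directions of the order separately. Reflection, $f(c)(\ac_{/X}|_{f(C_1)})^*f(c')\Rightarrow c(\ac|_{C_1})^*c'$, follows by lifting a witnessing chain step by step through the bijection and applying the morphism property above. Preservation, $c(\ac|_{C_1})^*c'\Rightarrow f(c)(\ac_{/X}|_{f(C_1)})^*f(c')$, I would reduce to a single step $d\ac d'$ in $C_1$. When the target $d'\notin X$ the step survives the folding directly: if $d\notin X$ nothing changes, and if $d\in X$ then similarity condition~2 of Definition~\ref{def:EquivalentEvts} together with conflict-freeness of $C_1$ (which excludes $d'\ac d$) forces $e\ac d'$ for all $e\in X$, i.e.\ $e_X\ac_{/X}d'$. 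The only delicate case is $d\notin X$, $d'\in X$ (so $d'$ is the unique $X$-event of $C_1$ and $f(d')=e_X$), where the asymmetric conflict into $e_X$ need not be preserved step-wise.

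This last case is the main obstacle, and I expect to resolve it by rerouting through intermediate events. From $d\ac d'$ I would extract a direct conflict $d\ac_\mu e'$ with $e'\leq d'$ (pick $e'$ minimal below $d'$ among the $\ac$-successors of $d$; minimality forces directness), noting $e'\in C_1$ by causal closedness. If $e'=d'$ then similarity condition~3 of Definition~\ref{def:EquivalentEvts} gives $d\ac e$ for all $e\in X$, so $d\ac_{/X}e_X$ directly; otherwise $e'<d'$, $e'\notin X$, $d\ac_{/X}e'$, and it remains to connect a cause to $e_X$. For that I would prove an auxiliary claim: every $z<d'$ satisfies $z(\ac_{/X}|_{f(C_1)})^*e_X$, by induction on the length of the longest $\ac$-chain from $z$ to $d'$ inside the finite acyclic order $\ac|_{\causes{d'}}$; the step again extracts $z\ac_\mu e''\leq d'$ and either closes via condition~3 ($e''=d'$) or continues with $z\ac_{/X}e''$ and a strictly shorter chain from $e''$ ($e''<d'$). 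Concatenating the chains yields preservation, and with reflection this shows $f|_{C_1}$ is the desired isomorphism. I would stress that only the similarity conditions are used here; combinability (Definition~\ref{def:combinableEvts}) is not needed for this lemma and will instead be invoked for the converse simulation.
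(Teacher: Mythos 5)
Your proposal is correct, and it shares the paper's overall skeleton---the folding morphism restricted to $C_1$ is the witnessing isomorphism, reflection of asymmetric conflict is handled stepwise via Lemma~\ref{lemma:PropertiesF}(2), and injectivity on conflict-free sets plus configuration-preservation come from the morphism properties---but it diverges on the preservation direction, and the divergence is substantive. The paper proves preservation by asserting a one-step equivalence: for $e, e' \in C_1$, $e \ac e'$ iff $f(e) \ac_{/X} f(e')$, justified by appeal to Lemma~\ref{lemma:PropertiesF}(3); but that lemma is stated and proved only for \emph{direct} conflicts $e \ac_\mu e'$, and the one-step claim can actually fail for inherited conflicts into a merged event: when $d' \in X$ and $d \ac d'$ is inherited through some $e'$ with $d \ac_\mu e' < d'$ where $e' \not\in S(X)$ (a cause of $d'$ but not of the other events of $X$), the definition of the folding puts $(d, e_X)$ in $\ac_{/X}$ only if $d \ac e$ for \emph{every} $e \in X$, which the similarity conditions do not force. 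Your rerouting argument---extracting a minimal, hence direct, conflict $d \ac_\mu e'' \leq d'$, then the induction on the longest $\ac$-chain inside $\causes{d'}$ showing every $z < d'$ reaches $e_X$ by a $\ac_{/X}$-chain lying in $f(C_1)$---is exactly what is needed: it establishes preservation at the level of the transitive closure $(\ac|_{C_1})^*$, which is all the pomset isomorphism requires, and thereby repairs the paper's overclaim. Your remark that only similarity (Definition~\ref{def:EquivalentEvts}) is used, with combinability deferred to the bisimulation lemma, matches the paper's actual usage. The remaining difference is cosmetic: the paper cites~\cite{BaldanCM01} for the fact that AES-morphisms map configurations to configurations, whereas you prove causal closedness and acyclicity of $f(C_1)$ directly from Lemma~\ref{lemma:PropertiesF}; your version is self-contained at the cost of a few lines.
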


The above result helps in proving that the folding morphism can be
seen as a hp-bisimilarity between $\mathbb{A}$ and
$\fold{\mathbb{A}}{X}$. Proofs can be found in the Appendix.

\begin{lemma}
  Let $\mathbb{A}$ be an AES,
  and let $\fold{\mathbb{A}}{X} = \langle \fold{E}{X}, \leq_{/X},\ac_{/X},
  \lambda_{/X} \rangle$ be the folding of $\mathbb{A}$ on the set of
  events $X$. Let $f : \mathbb{A} \rightarrow \fold{\mathbb{A}}{X}$ be the
  folding morphism. Then
  \begin{center}
    $R = \{ (C_1, f_{|C_1}, f(C_1)) \mid C_1 \in \conf{\mathbb{A}} \}$
  \end{center}
  is a hp-bisimulation.
  \label{lemma:AEShpbisim}
\end{lemma}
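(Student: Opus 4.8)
The plan is to check the three defining conditions of a hp-bisimulation for $R$: that it contains the empty triple, that every triple in it is well formed, and that it is closed under the forward and backward transfer conditions. The structural heavy lifting is already done by Lemma~\ref{lemma:AESconfs}, so the base case and well-formedness are almost immediate: $f(\emptyset)=\emptyset$ with the empty isomorphism gives $(\emptyset,\emptyset,\emptyset)\in R$, and for any $C_1\in\conf{\mathbb{A}}$ the same lemma yields both $f(C_1)\in\conf{\fold{\mathbb{A}}{X}}$ and the isomorphism $(C_1,\ac_{C_1}^{*})\approx(f(C_1),\ac_{f(C_1)}^{*})$ witnessed by $f_{|C_1}$. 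Since $\lambda_{/X}(f(e))=\lambda(e)$ for all $e$ by Definition~\ref{def:foldingAES}, this order-isomorphism also preserves labels, so each triple of $R$ is legal.

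For the forward condition I would argue directly. Given $(C_1,f_{|C_1},f(C_1))\in R$ and a step $C_1\cup\{e_1\}\in\conf{\mathbb{A}}$, I take $e_2=f(e_1)$; then $\lambda_{/X}(e_2)=\lambda(e_1)$ and $f(C_1\cup\{e_1\})=f(C_1)\cup\{e_2\}$, which is a configuration of $\fold{\mathbb{A}}{X}$ by Lemma~\ref{lemma:AESconfs}. The triple $(C_1\cup\{e_1\},f_{|C_1\cup\{e_1\}},f(C_1)\cup\{e_2\})$ lies in $R$ by construction and its isomorphism restricts to $f_{|C_1}$ on $C_1$, so it extends the given one.

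The backward condition is the real content: given $f(C_1)\cup\{x_2\}\in\conf{\fold{\mathbb{A}}{X}}$ I must reflect this one-event extension, i.e.\ find $e_1\in E\setminus C_1$ with $f(e_1)=x_2$ and $C_1\cup\{e_1\}\in\conf{\mathbb{A}}$. I would split on $x_2$. If $x_2\neq e_X$ then $x_2\in E\setminus X$ and $f^{-1}(x_2)=\{x_2\}$, so I set $e_1=x_2$. If $x_2=e_X$ then $C_1$ contains no event of $X$ and I must select the correct occurrence $e_1\in X$; this is exactly what combinability is for. I let $Y$ be the events of $C_1$ mapping to weak predecessors of $e_X$ (the $\ac_{/X}$-predecessors of $e_X$ in $f(C_1)$); as a subset of the configuration $C_1$ it is consistent, so Definition~\ref{def:combinableEvts} supplies an $e\in X$ with $\lnot(e\ac e')$ for all $e'\in Y$ and a history $h_e\in\hist{e}$ with $h_e^{-}\subseteq S(X)\cup\causes{Y}$. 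I take $e_1=e$. In either case causal closedness of $C_1\cup\{e_1\}$ is easy: for $x_2\neq e_X$ it follows because the folding morphism reflects causes and $C_1$ is causally closed, and for $x_2=e_X$ it follows from $\causes{e}\setminus\{e\}\subseteq h_e^{-}\subseteq S(X)\cup\causes{Y}\subseteq C_1$, using that $S(X)\subseteq C_1$ is forced by causal closedness of $f(C_1)\cup\{e_X\}$ and that $\causes{Y}\subseteq C_1$ since $Y\subseteq C_1$. Finally $f(C_1\cup\{e_1\})=f(C_1)\cup\{x_2\}$, so the resulting triple is in $R$ and restricts correctly.

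The hard part will be verifying that $C_1\cup\{e_1\}$ is conflict-free, i.e.\ that $\ac$ stays acyclic on it. This cannot be obtained by pushing a putative cycle through $f$, because the folding does not preserve asymmetric conflict: $\ac$-edges incident to the merged events of $X$ can disappear in $\fold{\mathbb{A}}{X}$ (as the paper notes, $a\ac c_1$ need not imply $f(a)\ac_{/X}f(c_1)$), so the acyclic configuration $f(C_1)\cup\{x_2\}$ does not by itself certify acyclicity of the preimage. The argument must instead be direct. Since any configuration contains at most one event of $X$ (the events of $X$ are pairwise conflicting by Definition~\ref{def:EquivalentEvts}), the only potentially ``hidden'' edges are those touching a single $X$-event, and a new cycle can only arise through the freshly added event $e_1$. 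For $x_2\neq e_X$ such a cycle would live entirely among $E\setminus X$ events except possibly one $X$-event already in $C_1$, and the reflection of $\ac$ by the morphism together with acyclicity of $\ac|_{C_1}$ rules it out. For $x_2=e_X$ the condition $\lnot(e\ac e')$ for every $e'\in Y$ stops $e$ from pointing back into its own weak predecessors, while the similarity conditions of Definition~\ref{def:EquivalentEvts} guarantee that every $\ac$-predecessor of $e$ inside $C_1$ is already collected in $Y$ (the rest being genuine causes in $S(X)$, which cannot lie on a cycle); hence no cycle passes through $e$. Pinning down this case analysis precisely, and checking that the reflected extension also respects the configuration extension order $\sqsubseteq$, is where I expect the bulk of the technical effort.
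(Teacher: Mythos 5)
Your skeleton is the paper's: forward step discharged by Lemma~\ref{lemma:AESconfs}, backward step split on $x_2 = e_X$ versus $x_2 \in E \setminus X$, with Definition~\ref{def:combinableEvts} supplying the representative $e \in X$ and the history condition $h_e^- \subseteq S(X) \cup \causes{Y}$ giving causal closedness in the $e_X$ case (your argument there, including $S(X) \subseteq C_1$ and $\causes{Y} \subseteq C_1$, matches the paper). But the proof is genuinely incomplete exactly where you defer it. The missing idea is the single observation that carries the paper's entire backward direction, call it $(\dag)$: \emph{there is no $e' \in C_1$ with $e_1 \ac e'$ at all}. For $x_2 = e_X$ it is proved from Definition~\ref{def:EquivalentEvts}(2) together with the hypothesis $f(C_1) \sqsubseteq f(C_1) \cup \{e_X\}$: if $e \ac e'$ for some $e' \in C_1$, then either \emph{every} $e'' \in X$ satisfies $e'' \ac e'$, whence $e_X \ac_{/X} f(e')$ contradicts the extension order on the folded side, or some $e'' \in X$ has $\lnot(e'' \ac e')$ and $e' \ac e$, so that (in the direct case) $e' \in W(X) \cap C_1 = Y$, contradicting the combinability choice $\lnot(e \ac e')$ for $e' \in Y$. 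This fact simultaneously forbids any $\ac$-cycle through $e$ (a cycle needs an outgoing edge from $e$ into $C_1$) and yields $C_1 \sqsubseteq C_1 \cup \{e\}$ --- the two items you list as separate outstanding tasks. Your weaker pair of facts --- every $\ac$-predecessor of $e$ in $C_1$ lies in $Y \cup S(X)$, and $e$ does not point into $Y$ --- does \emph{not} suffice: a cycle $e \ac a \ac b \ac e$ with $a \in C_1 \setminus Y$ and $b \in Y$ is compatible with both, so you must exclude every outgoing edge, not just those into $Y$.

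The case $x_2 \neq e_X$ has two concrete failures as sketched. First, for acyclicity, ``the reflection of $\ac$ by the morphism \dots rules it out'' is not an argument: as you yourself observe, $f$ does not preserve $\ac$, so a putative cycle in $C_1 \cup \{e_1\}$ passing through the (at most one) $X$-event of $C_1$ has no image cycle in $f(C_1) \cup \{x_2\}$ to contradict. Again one needs $(\dag)$ for the added event, proved via Definition~\ref{def:EquivalentEvts}(3): a direct asymmetric conflict from $e_1$ to an event of $X$ forces $e_1 \ac e''$ for all $e'' \in X$, hence $e_1 = f(e_1) \ac_{/X} e_X \in f(C_1)$, contradicting the extension order, and the non-direct case reduces to this by heredity and causal closedness. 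Second, your causal-closedness claim (``the morphism reflects causes and $C_1$ is causally closed'') glosses over the real point: for a cause $e' < e_1$ with $e' \in X$, reflection only gives you \emph{some} $e'' \in C_1$ with $f(e'') = f(e') = e_X$, and $e''$ need not equal $e'$; the paper rules out $e'' \neq e'$ by conflict heredity ($e' \# e''$ and $e' < e_1$ give $e_1 \# e''$, hence $e_1 \ac e''$ with $e'' \in C_1$) and then, once more, $(\dag)$. Finally, a minor omission: in the AES adaptation of hp-bisimulation the forward direction must also verify $f(C_1) \sqsubseteq f(C_1) \cup \{f(e_1)\}$, which follows at once from Lemma~\ref{lemma:PropertiesF}(2) but is absent from your sketch. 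In short, the proposal has the right decomposition, but the load-bearing step --- deriving $(\dag)$ from the similarity/combinability conditions plus the $\sqsubseteq$ hypothesis on the folded side --- is missing, and the substitutes you sketch would not close the argument.
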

\ifTechnicalReport

\begin{proof}
  For $C_1 \in \conf{\mathbb{A}}$ a configuration, let $C_2 =
  f(C_1)$. Assume that $f: (C_1, \ac^*) \approx (f(C_1), \ac^*)$
  (below shortened in $C_1 \approx C_2$). We prove that
  \begin{enumerate}
  \item if there is $e \in E$ such that $C_1 \cup \{ e \} \in
    \conf{\mathbb{A}}$ then $C_2 \cup \{ f(e) \} \in
    \conf{\fold{\mathbb{A}}{X}}$ and $C_1 \cup \{ e \} \approx C_2 \cup \{
    f(e) \}$.

  \item if there is $x \in \fold{E}{X}$ such that $C_2 \cup \{ x \} \in
    \conf{\mathbb{A}}$ then there is $e \in E$ such that $f(e) = x$,
    $C_1 \cup \{ e \} \in \conf{\fold{\mathbb{A}}{X}}$ and $C_1 \cup \{ e \}
    \approx C_2 \cup \{ x \}$.
  \end{enumerate}

  \medskip

  In the following, the subscript $/X$ in the relations of the folded
  AES are omitted for making the notation lighter.

  \begin{enumerate}
  \item Let $e \in E$ be such that $C_1 \cup \{ e \} \in
    \conf{\mathbb{A}}$. The fact that $C_2 \cup \{ f(e) \} = f(C_1 \cup
    \{ e\}) \in \conf{\fold{\mathbb{A}}{X}}$ follows immediately by the
    properties of AES morphisms. In order to show that $C_1 \cup \{ e
    \} \approx C_2 \cup \{ f(e) \}$, it is sufficient to prove that
    $f$ maps the immediate $\ac$-predecessors of $e$ to the immediate
    $\ac$-predecessors of $f(e)$.

    We distinguish various cases:

    \begin{enumerate}
    \item $e \in X$ and thus $f(e)=e_X$.\\
      Let $x' \in C_2$, $x' \ac_\mu e_X$ be an immediate $\ac$-predecessor
      of $e_X$. Since $x'$ does not originate from the merging, it is an
      event in $E -X$. Moreover, by the definition of asymmetric
      conflict in $\fold{\mathbb{A}}{X}$, we have that $x' \ac_\mu e'$ for
      any $e' \in X$. In particular, each $x' \ac_\mu e$. Vice versa,
      if $e' \in C_1$, $e' \ac_\mu e$ then, being $X$ combinable, $e'
      \ac_\mu e''$ for any $e'' \in X$, hence $f(e') = e' \ac_\mu
      e_x$.

      \medskip

    \item $e \nin X$ and $e_X \in C_2$, $e_X \ac_\mu f(e)$.\\
      Since $e \nin X$, we have $f(e) = e$. All immediate
      $\ac$-predecessors of $e$ in $\mathbb{A}$ are clearly mapped
      identically to $\ac$-predecessors of $f(e) = e$ in
      $\fold{\mathbb{A}}{X}$. Concerning $e_X$, consider the event $e' \in
      C_1$ such that $f(e') = e_X$, hence $e' \in X$. Since $e_X
      \ac_\mu f(e)$, by the definition of asymmetric conflict in
      $\fold{\mathbb{A}}{X}$, we know that $e' \ac e$. The asymmetric
      conflict is direct otherwise it could be easily seen that also
      $e_X \ac_\mu f(e)$ would be inherited.

      \medskip
      
    \item otherwise (i.e., $e \nin X$ and for all $x \in C_2$,
      $x' \ac_\mu f(e)$ $x' \neq e_X$)\\
      Since $f(e)$ and all its direct $\ac$-predecessors in $C_2$ does
      not originate from the merge, they have exactly the same $\ac$
      relations in $\mathbb{A}$ and in $\fold{\mathbb{A}}{X}$ and thus the
      result is trivial.
    \end{enumerate}

    \bigskip

  \item Let $x \in \fold{E}{X}$ be such that $C_2 \cup \{ x \} \in
    \conf{\mathbb{A}_{/X}}$. In order to show that there is $e \in E$ such
    that $f(e) = x$, $C_1 \cup \{ e \} \in \conf{\mathbb{A}}$ and $C_1
    \cup \{ e \} \approx C_2 \cup \{ x \}$ we distinguish various
    cases:

    \begin{enumerate}
    \item $x = e_X$\\
      Let $Y = \{ x'  \in C_2 \mid x' \ac_\mu e_X \land \lnot (x' < e_X)
      \}$. By construction, the events in $Y$ do not arise from the
      merging and hence they are mapped identically by $f$. Therefore,
      $Y \subseteq C_1$, hence $Y$ is consistent. Moreover, by
      construction for any $x' \in Y$, $e \in X$ we have $x' \ac_\mu
      e$, and additionally there exists surely $e \in X$ such that
      $\lnot (e \ac x')$ (otherwise we would have $e_X \ac x'$ and
      thus $f(C_1) \cup \{ e_X \}$ would include a cycle of asymmetric
      conflict). Therefore $Y \subseteq W(X)$. 

      By Definition~\ref{def:combinableEvts} there exists an event $e
      \in X$ and a possible history $h_e \in \hist{e}$ such that $h_e^-
      = S(X) \cup \causes{Y}$. By causal closedness of $C_1$, it is
      easy to see that $h_e^- \subseteq C_1$, hence $C_1 \cup \{ e \}
      \in \conf{\mathbb{A}}$. And since $e \in X$, it holds that $f(e)
      = e_X$, hence $e$ is the desired event. The fact that $C_1 \cup
      \{ e \} \approx C_2 \cup \{ x \}$ follows directly from the
      choice of $e$.

      \medskip

    \item $x \neq e_X$ and there is $e_X \in C_2$, $e_X \ac_\mu x$.\\
      In this case $x \in E$, is mapped identically by $f$. By
      Lemma~\ref{lemma:PropertiesF}(a), $C_1 \cup \{ x \}$ is causally
      closed, hence it is a configuration $C_1 \cup \{ x \} \in
      \conf{\mathbb{A}}$.

      In order to prove that also the order is maintained, consider
      the counterimage of $e_X$ in $C_1$, i.e., $e \in C_1$ such that
      $f(e) = e_X$ (hence $e \in X$). By
      Lemma~\ref{lemma:PropertiesF}(b), since $f(e) = e_X \ac_\mu f(x)
      = x$, we have $e \ac x$, and the asymmetric conflict is direct
      (otherwise also in $\fold{\mathbb{A}}{X}$ it would be inherited).
      For the remaining direct $\ac$-predecessors of $x$, observe that
      for all $x'' \in C_2$ such that $x'' \ac_\mu x$ and $x'' \neq
      e_X$, it holds that $x'' \in C_1$ and $x'' \ac_\mu x$ in
      $\mathbb{A}$ ($f$ acts as the identity in these events). Clearly
      also the converse holds.
      Hence $C_1 \cup \{ x \} \approx C_2 \cup \{ x \}$.

      \medskip

    \item otherwise (i.e., $x \neq e_X$ and for all $x' \in C_2$, 
      $x' \ac_\mu x$. $x' \neq e_X$)\\
      Since $x$ and all its direct $\ac$-predecessors in $C_2$ does
      not originate from the merge, they have exactly the same $\ac$
      relations in $\mathbb{A}$ and in $\fold{\mathbb{A}}{X}$ and thus the
      result is trivial.
      
    \end{enumerate}
  \end{enumerate}
  \qed
\end{proof}
\fi

\begin{corollary}[folding does not change the behavior]
  The folding operation of AESs preserves hp-bisimilarity.
\end{corollary}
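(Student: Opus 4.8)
The plan is to obtain the statement as a direct consequence of Lemma~\ref{lemma:AEShpbisim}. Fix an AES $\mathbb{A}$ and a combinable set $X \subseteq E$; as noted above, the folding $\fold{\mathbb{A}}{X}$ is then a well-defined AES, so the assertion $\mathbb{A} \equiv_{hp} \fold{\mathbb{A}}{X}$ is meaningful, and it suffices to exhibit a single hp-bisimulation relating the two structures.

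First I would recall the candidate relation
\[
R = \{ (C_1, f_{|C_1}, f(C_1)) \mid C_1 \in \conf{\mathbb{A}} \}
\]
built from the folding morphism $f$. Lemma~\ref{lemma:AEShpbisim} already establishes that $R$ is an hp-bisimulation, i.e., that the forward and backward extension clauses of the definition hold for every triple in $R$. The only point I would make explicit is the base condition: since $\emptyset \in \conf{\mathbb{A}}$ and $f(\emptyset) = \emptyset$, and since $f_{|\emptyset}$ is the (trivially label- and order-preserving) empty isomorphism, the triple $(\emptyset, \emptyset, \emptyset)$ lies in $R$. Hence $R$ witnesses $\mathbb{A} \equiv_{hp} \fold{\mathbb{A}}{X}$.

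Because the substance of the argument has already been discharged in the preceding lemmas, there is no real obstacle at this stage; the corollary is essentially a restatement of Lemma~\ref{lemma:AEShpbisim}. The two facts on which the conclusion genuinely rests are that each $f_{|C_1}$ is indeed an isomorphism of configurations seen as labelled posets, which is precisely the assertion $(C_1, \ac_{C_1}^*) \approx (f(C_1), \ac_{f(C_1)}^*)$ supplied by Lemma~\ref{lemma:AESconfs}, and that matching extension moves exist in both directions, established within Lemma~\ref{lemma:AEShpbisim}. Finally, I would observe that this shows a \emph{single} folding preserves hp-bisimilarity; since $\equiv_{hp}$ is an equivalence and hence transitive, any finite iteration of foldings preserves it as well, which is exactly what legitimises the iterative reduction procedure discussed earlier and yields the corollary in full.
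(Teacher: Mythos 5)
Your proposal is correct and takes essentially the same route as the paper, which offers no separate proof for this corollary precisely because it reads off directly from Lemma~\ref{lemma:AEShpbisim} (with Lemma~\ref{lemma:AESconfs} supplying the isomorphism of configurations). Your explicit check of the base triple $(\emptyset, \emptyset, \emptyset)$ and the closing remark that transitivity of $\equiv_{hp}$ extends the result to iterated foldings are harmless additions of details the paper leaves tacit.
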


By iteratively applying folding to a given finite AES we can thus
obtain a minimal AES hp-bisimilar to the given one. Unfortunately,
this does not provide a canonical minimal representative of the
behaviour as there can be non-isomorphic minimal hp-bisimilar
AESs. For instance, consider the AES in
Figure~\ref{fig:exampleAES}(a). There exist two possible folded AESs,
presented side-by-side in Fig.~\ref{fig:AES:foldedAES}, which are
minimal in the sense that they cannot be further folded.

\begin{figure}[h!]
\centering
\subfloat[]{
  \includegraphics{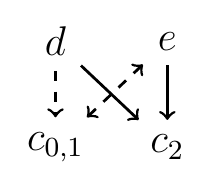}
}
\hspace{15mm}
\subfloat[]{
  \includegraphics{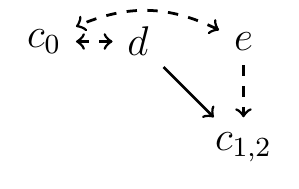}
}
\caption{Foldings for the AES in Fig.~\ref{fig:exampleAES}}
\label{fig:AES:foldedAES}
\end{figure}

\section{Behaviour preserving reduction of FESs}
\label{sec:reductionFES}

In this section we develop a behaviour preserving reduction technique
for flow event structures. As for AESs, the basic idea is that of folding 
events representing different instances of the same activity, although 
technically there are relevant differences. 

\subsection{Basics of flow event structures}

We start by recalling the formal definition of (labelled) flow event structures~\cite{BoudolC88}.

\begin{definition}[flow event structure]
  A (labelled) \emph{flow event structure} (FES) is a tuple
  $\mathbb{F} = \langle E, \#, \prec, \lambda \rangle$ where $E$ is a
  set of events, $\lambda : E
  \to \Lambda$ is a labelling function, and
  \begin{compactitem}
  \item $\prec ~ \subseteq E \times E$, the \emph{flow} relation, is
     irreflexive.
  \item $\# ~\subseteq E \times E$, the \emph{conflict} relation, is a
    symmetric relation,
  \end{compactitem}
\end{definition}

Note that the flow relation is not required to be transitive. The
$\prec$-predecessors of an event $e \in E$, are defined as $\pre{e} =
\{e' \mid e' \prec e\}$.  Similarly, for a set of events $X$ we write
$\pre{X} = \bigcup \{\pre{e} \mid e \in X\}$.

The flow predecessors of an event $e$, i.e., $\pre{e}$, can be seen as
a set of possible immediate causes for $e$.
Conflicts can exist in $\pre{e}$ and, in order to be
executed, $e$ needs to be preceded by a maximal and conflict free subset
of $\pre{e}$.

Formally, a \emph{configuration} of a FES $\mathbb{F} = \langle E, \#,
\prec,\lambda \rangle$ is a finite subset $C \subseteq E$
such that

\begin{compactenum}
\item $C$ is conflict free, 
\item $C$ has no flow cycles, i.e. $\prec_C^*$ is a partial order,
\item for all $e \in C$ and $e' \notin C$ s.t. $e' \prec e$, there exists an $e'' \in C$ 
such that $e' \#e'' \prec e$.\\
\end{compactenum}

We denote by $Conf(\mathbb{F})$ the set of configurations of
$\mathbb{F}$. The extension order, as for PESs, is simply subset
inclusion.

Since in FESs the flow relation is not transitive and the conflict relation
does not adhere to the principle of heredity: even though two events are not
in conflict they might not appear together in any
configuration, and an event could be not executable at all. 
More precisely,  define the semantic conflict $\#_s$ as $e \#_s
e'$ when for any configuration $C \in \conf{\mathbb{F}}$, it does not
hold that $\{e, e' \} \subseteq C$. Then clearly $\# \subseteq \#_s$ and in general the inclusion is strict.

In line with the authors of~\cite{BoudolC88}, hereafter we restrict to the
subclass of FES, where for which:
\begin{enumerate}
  \item semantic conflict $\#_s$ coincides with conflict $\#$
  (\emph{faithfullness}), 
  \item conflict is irreflexive (\emph{fullness}), hence all events are executable.
\end{enumerate}

Observe that FESs generalise PESs in that, clearly, every PES can be
seen as a special FES where the flow relation is transitive and the
$\prec$-predecessors of any event are conflict free.

%
	
	

\subsection{Reduction of FESs}

As in the case of AESs, we identify sets of events which can be seen
as instances of the same activity and which can be merged into a
single event. As mentioned before, the way in which FES generalises PES
is somehow orthogonal to that of AESs. As a consequence, at a technical
level the conditions which define combinable events are quite
different.

Consider, for instance, the example in
Fig.~\ref{fig:FES:original}. First, if we take events $c_0$ and $c_1$
and try to merge them into a single event $c_{0,1}$, there would be no
way of updating the dependency relations while keeping the behaviour
unchanged (the resulting dependency between $b$ and the merged event
$c_{0,1}$ would be an asymmetric conflict that cannot be represented
in FESs).  Instead, we can merge events $c_1$ and $c_2$ in
$\mathbb{F}_0$ into a single event $c_{1,2}$, thus obtaining the FES
in Fig.~\ref{fig:FES:folded}. In this case, the folding is possible
because the original events $c_1$ and $c_2$ are enabled by $\{ b \}$
and $\{d, e\}$, respectively, and since $b \# d$, $b \# e$, after the
merge the same situation is properly represented as an disjunctive
causality.


\begin{figure}
\centering
\subfloat[\label{fig:FES:original}$\mathbb{F}_0$] {
  \includegraphics{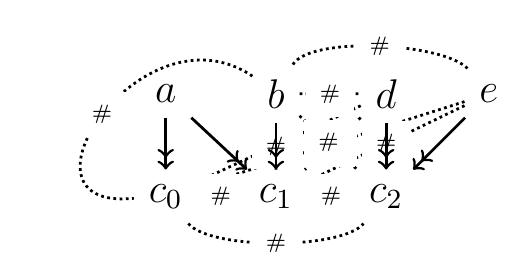}
}
\subfloat[\label{fig:FES:folded}$\mathbb{F}_1$]{
   \includegraphics{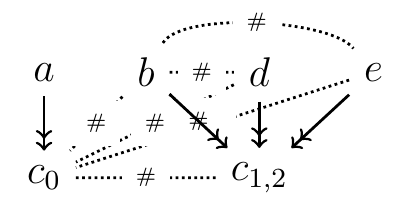}
}
\caption{Two sample FESs}
\label{fig:runningExampleFES}
\end{figure}

In order to define combinable events we need some further notation.
Given a set of events $Z$, we denote by $\mcons{Z}$ the set of
maximal and consistent (i.e., conflict free) subsets of $Z$. Given an
event $e \in E$, we write $\#(e)$ for the set of events in conflict
with $e$, i.e., $\#(e) = \{e' \mid e' \in E \land e' \# e\}$
Additionally, as for the case of AESs we need to single out conflicts
which are direct.

\begin{definition}[direct conflict]
\label{def:FES:directconf}
  Let $\mathbb{F}$ be a FES. The events $e,e' \in E$ are in
  \emph{direct conflict}, denoted as $e \dConflict e'$, if $e \# e'$
  and $\exists Y \in \mcons{\pre{e}}$ s.t. $Y \cap \#(e') =
  \emptyset$
  \noindent
\end{definition}

Intuitively, a conflict $e \# e'$ is direct when there is a way of
reaching a configuration where $e$ is enabled, without disabling
$e'$. Note that for FESs direct conflict is not symmetric. For
instance for $\mathbb{F}_4$ depicted in
Fig.~\ref{fig:FES:rule5:original}, we have $e \#_\mu a_1$ while it is
not the case that $a_1 \#_\mu e$. 

For a set $X \subseteq E$ and $e' \in E$ we write $X \# e'$ whenever
for all $e \in X$, we have $e \# e'$, $X \prec e'$ when there exists
$e \in X$ such that $e \prec e'$ and $e' \prec X$ when there exists $e
\in X$ such that $e' \prec e$.

We can now define the notion of combinable set of events for FESs.

\begin{definition}[combinable set of events]
  Let $\mathbb{F}$ be a FES. A set of events $X \subseteq E$ is
  combinable if for all $x,x' \in X$ and $e,e' \in E \setminus X$ the
  following holds
  \begin{compactenum}
  \item $\lambda(x) = \lambda(x')$ and $x \# x'$,
  \item $x \dConflict e \Rightarrow x' \# e$,		
  \item $x \prec e \Rightarrow x' \prec e ~\lor~ x' \# e$, 
  \item $e \prec x \Rightarrow \pre{x'} \neq \emptyset ~\land~ (e \prec x' ~\lor~ (\forall e' \prec x' \land e' \nin \pre{x}.\ e \# e'))$, 
  \item $x, e' \in \pre{e}\ \ \land\ \ x \# e'\land\ \ \neg (X \# e')\\
    ~\qquad \Rightarrow 
    \forall Y \in \mcons{\pre{e}} .\begin{array}{l}
        (x \in Y \Rightarrow \exists e'' \in Y \setminus \{x\}.\  
        e'' \# e') \land\\  
        (X \cap Y = \emptyset \Rightarrow \exists e'' \in Y.\  X \# e'')
  \end{array}
  $ 
    
  \end{compactenum}
  \label{def:combinableEvtsFES}
\end{definition}

Roughly speaking, condition 1 requires that the events in $X$ are
occurrences of the same activity (they have the same label and they
are in conflict).  Condition 2 requires that events in $X$ have
essentially the same conflicts. Conditions 3 and 4 state that
predecessors and successors are preserved among events in $X$ or they
can be turned into conflicts. 

The role of condition 4 is better explained by the following easy lemma.

\begin{lemma}
  \label{lemma:consistentSEfes}
  Let $\mathbb{F}$ be a FES and let $X\subseteq E$ be a combinable set
  of events. Then for any $Y \subseteq E$, $Y$ consistent it holds
  that $Y \subseteq \pre{X}$ iff there exists $x \in X$ such that $Y
  \subseteq \pre{x}$. Hence:
  \begin{center}
    $Y \in \mcons{\pre{X}}$ iff there exists $x \in X$ such that
    $Y \in \mcons{\pre{x}}$.
  \end{center}
\end{lemma}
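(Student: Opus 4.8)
The plan is to derive the statement about $\mcons{\cdot}$ from the first biconditional, so the substance of the proof lies in the biconditional $Y \subseteq \pre{X} \Leftrightarrow \exists x \in X.\, Y \subseteq \pre{x}$ for consistent $Y$. The implication from right to left is immediate and uses no combinability at all: if $Y \subseteq \pre{x}$ for some $x \in X$ then $Y \subseteq \bigcup_{x' \in X} \pre{x'} = \pre{X}$. All the real work is in the left-to-right direction, whose point is that a consistent set of flow predecessors, although a priori scattered across several events of $X$, can always be attributed to one single event of $X$; this is precisely what condition~4 of Definition~\ref{def:combinableEvtsFES} is designed to guarantee, and I expect this direction to be the technical heart of the argument.

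To prove it I would use a maximality argument. Assume $Y$ is consistent with $Y \subseteq \pre{X}$, and choose $x \in X$ for which $Y \cap \pre{x}$ is maximal with respect to inclusion among the sets $\{\, Y \cap \pre{x'} \mid x' \in X \,\}$ (such a maximal element exists since, in the finite structures under consideration, $Y$ is finite). I claim $Y \subseteq \pre{x}$. Suppose not and pick $y \in Y \setminus \pre{x}$; aside from the case $y \in X$, which is dispatched separately (by condition~1 at most one event of $X$ lies in the consistent set $Y$), we have $y \in E \setminus X$, and since $Y \subseteq \pre{X}$ there is $x_y \in X$ with $y \prec x_y$. Applying condition~4 with $y$ in the role of $e$, $x_y$ in the role of $x$, and $x$ in the role of $x'$, and using $\lnot(y \prec x)$, I obtain that every $e'$ with $e' \prec x$ and $e' \notin \pre{x_y}$ satisfies $y \# e'$. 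Now for any $y' \in Y \cap \pre{x}$, the alternative $y' \notin \pre{x_y}$ would force $y \# y'$, contradicting the consistency of $Y$; hence $y' \in \pre{x_y}$. This yields $Y \cap \pre{x} \subseteq Y \cap \pre{x_y}$, and since $y \in (Y \cap \pre{x_y}) \setminus (Y \cap \pre{x})$ the inclusion is strict, contradicting the maximality of $Y \cap \pre{x}$. Therefore $Y \subseteq \pre{x}$, as required.

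Finally I would read off the statement on maximal consistent sets. If $Y \in \mcons{\pre{X}}$ then $Y$ is consistent and contained in $\pre{X}$, so by the biconditional $Y \subseteq \pre{x}$ for some $x$; as $Y$ is maximal among consistent subsets of $\pre{X}$ and $\pre{x} \subseteq \pre{X}$, it is then maximal among consistent subsets of $\pre{x}$, i.e.\ $Y \in \mcons{\pre{x}}$. For the reverse inclusion one starts from $Y \in \mcons{\pre{x}}$ and must promote maximality from the individual $\pre{x}$ to the union $\pre{X}$: given a consistent extension $Y \cup \{z\}$ with $z \in \pre{X} \setminus Y$, the biconditional places $Y \cup \{z\}$ inside some $\pre{x'}$, and one has to rule out, via condition~4 together with the consistency of $Y \cup \{z\}$, that such a $z$ genuinely enlarges $Y$ beyond a maximal consistent subset. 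This promotion of maximality is the one place where the corollary is not a mere restatement, since the biconditional only locates $Y \cup \{z\}$ inside \emph{some} common predecessor set; aligning the maximal consistent subsets of the individual $\pre{x}$ with those of $\pre{X}$ is exactly where the asymmetric shape of condition~4 must be exploited with care, and I expect this secondary step, rather than the core biconditional, to be the most delicate part of the write-up.
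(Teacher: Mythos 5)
Your handling of the first biconditional is sound and in fact supplies exactly the content the paper omits: the paper's entire proof of the left-to-right direction is one sentence claiming Definition~\ref{def:combinableEvtsFES}(4) ``immediately implies'' it, whereas your maximality-and-exchange argument (choose $x$ with $Y \cap \pre{x}$ inclusion-maximal; instantiate condition~4 with $e = y$, the definition's $x$ as $x_y$ and the definition's $x'$ as $x$; conclude $Y \cap \pre{x} \subsetneq Y \cap \pre{x_y}$) is the right way to make that immediacy precise. Two caveats, shared with the paper rather than introduced by you: condition~4 quantifies $e'$ only over $E \setminus X$, so the conflict clause you invoke applies to $y' \in Y \cap \pre{x}$ only when $y' \notin X$ --- your parenthetical remark about $Y \cap X$ covers the chosen $y$ but not this $y'$, and both cases genuinely vanish only under the implicit assumption $X \cap \pre{X} = \emptyset$, which neither you nor the paper states. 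The finiteness needed for your maximal choice you do flag.

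The genuine gap is the step you yourself flag and postpone: the right-to-left direction of the $\mcons{\cdot}$ statement is never proved, and the route you sketch (ruling out, via condition~4, that $z$ enlarges $Y$) does not go through. Run the natural attempt: from $Y \in \mcons{\pre{x}}$ and a consistent $Y \cup \{z\} \subseteq \pre{X}$, your first statement yields $x' \in X$ with $Y \cup \{z\} \subseteq \pre{x'}$; condition~4 applied to $z \prec x'$ (with the roles of $x$ and $x'$ swapped) gives either $z \prec x$ --- a real contradiction with maximality of $Y$ in $\pre{x}$ --- or that $z$ conflicts with every $e' \prec x$ with $e' \notin \pre{x'}$, which is \emph{vacuously} true whenever $\pre{x} \subseteq \pre{x'}$ and yields no contradiction. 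Indeed the direction fails under the literal definitions: take $E = \{a, z, x, x'\}$ with $a \prec x$, $a \prec x'$, $z \prec x'$, the only conflict $x \# x'$, and $\lambda(x) = \lambda(x')$. This FES is full and faithful, and $X = \{x, x'\}$ satisfies all five conditions of Definition~\ref{def:combinableEvtsFES} (condition~4 is vacuous for $z$ since $\pre{x} = \{a\} \subseteq \pre{x'}$), yet $\{a\} \in \mcons{\pre{x}}$ while $\{a\} \notin \mcons{\pre{X}}$, because $\{a, z\}$ is consistent. So your instinct that this is the delicate point is correct, but it cannot be closed by condition~4; notably the paper errs in the opposite direction, dismissing the second statement as ``a trivial consequence of the first''. (In the same example the folding maps the configuration $\{a, x\}$ to $\{a, e_X\}$, which is not a configuration of the folded FES, suggesting the combinability conditions themselves are weaker than intended.) Only the left-to-right implication --- which you do prove --- is used downstream, in Lemma~\ref{lemma:FEShpbisim}.
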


\begin{proof}
  Concerning the first statement, let $Y \subseteq E$ be a consistent
  set of events. If $Y \subseteq \pre{X}$ then,
  Definition~\ref{def:combinableEvtsFES}(4), immediately implies that
  there exists $x \in X$ such that $Y \subseteq \pre{x}$. The converse
  implication just follows form the fact that $\pre{X} = \bigcup_{x
    \in X} \pre{x}$.

  The second statement is a trivial consequence of the first one.

\end{proof}

Finally, condition 5 takes into account the situation in which an
event $x' \in X$ is a potential cause of an event $e$, but there is
another $x \in X$ with different conflicts, say $x \# e'$, while
$\lnot (x' \# e')$. This is problematic, since after the merging the
unmatched conflict will be lost. The condition says that folding
can be still possible if the conflict $x' \# e'$ is not essential when
forming the maximal consistent sets of $\prec$-predecessors for $e$.
For example, the FES depicted in Fig.~\ref{fig:FES:rule5:couter}
illustrates a situation in which condition 5 fails.  Please note that
in Fig.~\ref{fig:FES:rule5:couter} events corresponding to those in
condition 5 have a subscript aimed at facilitating the
analysis. Merging $a_{x'}$ and $a_x$ would lead to the FES in
Figure~\ref{fig:FES:rule5:unfaithful}, which is not behaviourally
equivalent to $\mathbb{F}_2$. In particular, observe that $c$ is no
longer executable since $b \# e$, but it is not the case that $b \# a$
(hence such FES is not faithful). The conflict $b \# a$ could not be
imposed in the folded FES $\mathbb{F}_3$ otherwise a configuration
corresponding to $\{ d, a_x, b_{e'} \}$ in $\mathbb{F}_2$ would be
lost.
An allowed folding is shown in
Figures~\ref{fig:FES:rule5:original}-\ref{fig:FES:rule5:folded}: in
$\mathbb{F}_5$, after the execution of $e$ or $d$, it is possible to
have a maximal and consistent set of $\prec$-predecessors for the event
$c$, i.e., $\{a_{0,1}, f\}$ or $\{a_{0,1}, b\}$.

\begin{figure}
\centering
 \subfloat[\label{fig:FES:rule5:couter}$\mathbb{F}_2$] {
  \includegraphics{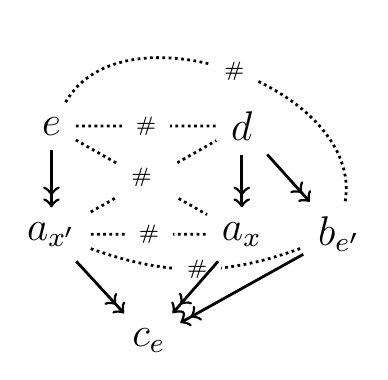}
}
\subfloat[\label{fig:FES:rule5:unfaithful}$\mathbb{F}_3$] {
   \includegraphics{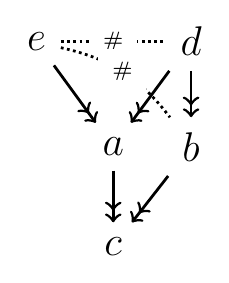}
}

\subfloat[\label{fig:FES:rule5:original}$\mathbb{F}_4$]{
   \includegraphics{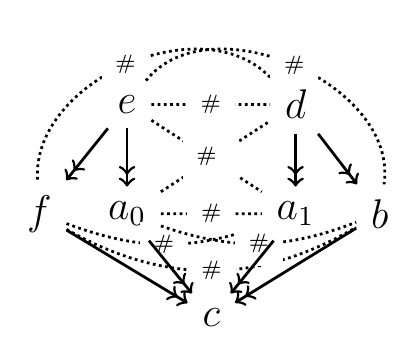}
 }
 \subfloat[\label{fig:FES:rule5:folded}$\mathbb{F}_5$]{
   \includegraphics{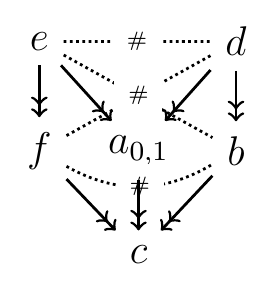}
 }
 \vspace{-2mm}
  \caption{Example FESs to illustrate Condition 5 in Definition~\ref{def:combinableEvtsFES}}
  \label{fig:FES:rule5}
  \end{figure}

  We can now formally define the folding of FESs as follows.

\begin{definition}[folding of FES]
  \label{def:foldingFES}
  Let $\mathbb{F}$ be a FES, $X$ be a set of combinable events. The
  \emph{folding} of $\mathbb{F}$ on $X$ is the FES
  $\fold{\mathbb{F}}{X} = \langle \fold{E}{X},
  \#_{/X},\prec_{/X},\lambda_{/X}\rangle$ where
  \begin{center}
    $
    \begin{array}{l>{\centering}p{8mm}l}
      \fold{E}{X} &=& 
      E \backslash X \cup \{e_X\},\\
      %
      \#_{/X} &=& 
      \#|_{(E\backslash X)} \cup \{(e, e_X) \mid e \# X \},\\
      %
      \prec_{/X} &=& 
      \prec|_{(E\backslash X)} \cup \{(e, e_X) \mid e \prec X \} \cup 
      \{(e_X, e') \mid X \prec e'\},\
    \end{array}
    $
  \end{center}
  and the labeling function defined by $\lambda_{/X}(e) = \lambda(e)$
  for $e \in E \setminus X$ while $\lambda_{/X}(e_X) = \lambda(e)$ for
  an event $e \in X$.
\end{definition}

The rest of the section is dedicated to show that the folding
operation on FESs preserves hp-bisimilarity. 

The idea, which underlies also AES folding, is that events which are
merged are occurrences of the same activity with different
histories. They can be merged if the histories are compatible and
after merging the possible histories remain the same. 
Since an event in a FES can occur after a maximal and consistent set of
$\prec$-predecessors (i.e., once all the conflict among its
predecessors has been resolved).  By Lemma~\ref{lemma:consistentSEfes}
above, after merging a set of combinable events this maximal subsets
of consistent events remains unchanged. This will be at the basis of
the proof that the merging does not alter the behaviour.

As in the case of AESs, we rely on the folding morphism.

\begin{definition}[folding morphism]
  \label{def:mappingFES}
  Let $\mathbb{F}$ be a FES and let $X \subseteq E$ be combinable.
  The \emph{folding function} $f: E\rightarrow \fold{E}{X}$ is defined
  as follows:
  \begin{center}
    $
    f(e) = 
    \left\{ 
      \begin{array}{cp{2mm}l}
	e_X & &\text{if } e \in X\\
	e && \text{otherwise}
      \end{array} 
    \right.
    $
  \end{center}
\end{definition}

\ifTechnicalReport
\begin{restatable}{lemma}{LemmaPropertiesFESfolding}
Let $\mathbb{F} = \langle E, \#, \prec, \lambda \rangle$ be a FES, $X\subseteq E$ 
be a combinable set of events and $\fold{\mathbb{F}}{X} = \langle \fold{E}{X}, \#_{/X},
\prec_{/X},\lambda_{/X}\rangle$ be the corresponding folded FES. Then $\forall e,e' 
\in E, x \in E_{/X}$:
	\begin{compactenum}
		\item \label{lemma:propertiesFESM:two} $f(e) \#_{/X} f(e^\prime) \Rightarrow e 
		\# e^\prime$,
		\item \label{lemma:propertiesFESM:one} $x \prec_{/X} f(e^\prime) \Rightarrow 
		\exists e \in E : e \prec e^\prime$, such that $f(e) = x$, and
		\item \label{lemma:propertiesFESM:oneA} $e \prec e' \Rightarrow f(e) \prec f(e')$.
	\end{compactenum}
	\label{lemma:propertiesFESM}
\end{restatable}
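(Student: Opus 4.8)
The plan is to prove the three properties by a case analysis driven by the membership of the events involved in the combinable set $X$, reading each conclusion off the three clauses defining $\prec_{/X}$ and $\#_{/X}$ in Definition~\ref{def:foldingFES} together with the description of $f$ in Definition~\ref{def:mappingFES}. Properties~1 and~3 are essentially a direct unfolding of the construction, whereas Property~2 is the one that genuinely needs the combinability hypotheses and carries the only real difficulty.

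For Property~1, I would assume $f(e) \#_{/X} f(e')$ and split on membership in $X$, using symmetry of $\#_{/X}$ to avoid treating both orientations. If neither $e$ nor $e'$ lies in $X$ then $f$ is the identity on both and the conflict already comes from the clause $\#|_{(E\setminus X)}$, so $e \# e'$. If exactly one of them, say $e'$, lies in $X$, then $f(e') = e_X$ and $f(e) = e$, so $e \#_{/X} e_X$; the only conflicts incident to $e_X$ arise from the clause $\{(e, e_X) \mid e \# X\}$, hence $e \# X$ and in particular $e \# e'$ since $e' \in X$. The case $e, e' \in X$ is vacuous, since it would require a self-conflict $e_X \#_{/X} e_X$, which the construction never introduces. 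For Property~3 I would dually assume $e \prec e'$: if $e, e' \notin X$ the edge survives in $\prec|_{(E\setminus X)}$; if $e \in X$, $e' \notin X$ then $X \prec e'$ holds (witnessed by $e$), giving $(e_X, e') \in \prec_{/X}$; symmetrically if $e \notin X$, $e' \in X$ then $e \prec X$ and $(e, e_X) \in \prec_{/X}$. The degenerate case $e, e' \in X$ would ask for a flow self-loop on $e_X$; this does not occur because the added $\prec_{/X}$ edges run only between $E \setminus X$ and $e_X$, so no internal flow of $X$ is retained.

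For Property~2 I would assume $x \prec_{/X} f(e')$ and split on $f(e')$. If $e' \notin X$, so $f(e') = e'$: when $x \neq e_X$ the edge $x \prec_{/X} e'$ can only come from $\prec|_{(E\setminus X)}$, whence $x \prec e'$ in $\mathbb{F}$ and $e := x$ works; when $x = e_X$ the edge $e_X \prec_{/X} e'$ comes from $\{(e_X, e') \mid X \prec e'\}$, so some $x_0 \in X$ satisfies $x_0 \prec e'$ and $e := x_0$ works, as $f(x_0) = e_X = x$. The hard case is $e' \in X$, so $f(e') = e_X$ and $x \prec_{/X} e_X$; since there is no self-loop we have $x \neq e_X$, so $x \in E \setminus X$ has $x$ as its only preimage and I must produce exactly $x \prec e'$ for the \emph{given} $e' \in X$. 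From $x \prec_{/X} e_X$ I obtain only $x \prec X$, i.e. $x \prec x_1$ for \emph{some} $x_1 \in X$, and I would invoke clause~4 of Definition~\ref{def:combinableEvtsFES} with the triple $(x, x_1, e')$ to upgrade this to $x \prec e'$.

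I expect clause~4 to be the crux, and the main obstacle lies precisely here. It yields the disjunction ``$x \prec e'$ or ($x \# c$ for every $c \prec e'$ with $c \notin \pre{x_1}$)'', and only the left disjunct delivers the required $x \prec e'$. The difficulty is structural: the $\prec_{/X}$-predecessors of $e_X$ form the \emph{union} $\bigcup_{x' \in X} \pre{x'}$ rather than a common intersection set as $S(X)$ does in the AES case, so a predecessor of $e_X$ need not precede the chosen $e'$, and reflecting the flow to the given $e'$ forces me to rule out the conflict disjunct. I would attack this by combining the hypothesis $x \prec_{/X} e_X$ with faithfulness and Lemma~\ref{lemma:consistentSEfes}, which transports maximal consistent predecessor sets between $e_X$ and the individual events of $X$, so as to argue that the conflict alternative cannot coexist with $x$ genuinely witnessing a $\prec_{/X}$-predecessor of $e_X$ through $e'$. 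This reconciliation is where I anticipate the real work and where the precise form of the predecessor clauses in Definition~\ref{def:foldingFES} must be used most carefully.
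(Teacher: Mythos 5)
Your treatments of Properties 1 and 3 are correct and essentially the same as the paper's: a case split on membership in $X$, reading each conclusion off Definition~\ref{def:foldingFES}, with symmetry of conflict disposing of the mirrored case in Property 1. The genuine problem is Property 2, and you have located the hard case exactly ($e' \in X$, $x \in E\setminus X$, so the unique preimage of $x$ is $x$ itself and you would need $x \prec e'$ for the \emph{given} $e'$). But the strategy you propose for closing it --- invoking faithfulness and Lemma~\ref{lemma:consistentSEfes} to rule out the conflict disjunct of Definition~\ref{def:combinableEvtsFES}(4) --- cannot succeed, because under your literal pointwise reading the property is false, and the conflict disjunct genuinely occurs. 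The paper's own running example refutes it: in $\mathbb{F}_0$ with $X = \{c_1, c_2\}$ we have $b \prec c_1$, hence $b \prec_{/X} c_{1,2} = f(c_2)$, yet $b \not\prec c_2$; since no configuration of $\mathbb{F}_0$ contains both $b$ and $c_2$, faithfulness forces $b \# c_2$ rather than excluding it. Faithfulness thus works against you, not for you: it is precisely what makes the conflict alternative of condition (4) realisable.

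What the paper actually establishes under this item is the weaker statement in which the target event is re-chosen among the preimages of $f(e')$: when $f(e') = e_X$, its proof only exhibits \emph{some} $e \in E$ and some event of $X$ above $e$, with $f(e) = x$ --- which is immediate from Definition~\ref{def:foldingFES} and needs no combinability at all. This weak form, with both endpoints existentially quantified, is exactly the content of the accompanying corollary on reflection of $\prec$-chains, and it is all that the subsequent configuration lemmas use. Tellingly, the appendix version of this lemma abandons the strong reflection altogether and replaces it by the disjunctive property $f(e) \prec f(e') \Rightarrow e \prec e' \lor e \# e'$, which is where Definition~\ref{def:combinableEvtsFES}(4) genuinely enters. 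So the fix is not the extra work you anticipate in your last paragraph but a retreat: either read Property 2 with the existential ranging over preimages of both endpoints (the paper's de facto reading), or prove the disjunctive variant; the pointwise claim you set out to prove admits the counterexample above.
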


\begin{proof}
  \begin{compactitem}
  \item Property~\ref{lemma:propertiesFESM:two}. Given $e,e' \in E$, if $f(e) \# f(e')$ 
    then $e \# e'$. Let $e$ be an event in $X$, and thus $f(e) = e_X$. By construction, 
    if $e_X \# f(e')$ 	then $\forall e \in X : e \# e'$. \\
    As conflict relation is symmetric, the case for $e' \in X \Rightarrow f(e') = e_X$ holds 
    analogously. \\
    Otherwise, if $e,e' \nin X$ the property trivially holds, since $f$ is the identity on $e,e'$.
    
    \medskip
    
  \item Property~\ref{lemma:propertiesFESM:one}. Given $e' \in E$, if $x \prec_{/X} f(e^\prime)
    \Rightarrow \exists e \in E : e \prec e^\prime$, such that $f(e) = x$. 
    Consider the following cases:
    \begin{compactitem}
    \item $x = e_X$. By construction, if $e_X \prec f(e')$ then $\exists e \in X,e' \in E : 
      e \prec e'$;
    \item $e' \in X \Rightarrow f(e') = e_X$. By construction, $\exists e \in 
      E : e \prec e'$ and, thus, $f(e) \prec e_X = f(e')$; 
    \item otherwise, if $e,e' \nin X$ then $f$ is the identity on $e,e'$ 
      and this property trivially holds.
    \end{compactitem}
    
    \medskip
    
  \item Property~\ref{lemma:propertiesFESM:oneA}. Let $e,e' \in E$ be a pair of 
    events in $\mathbb{F}$, if $e \prec e' $, then $f(e) \prec f(e')$. Consider the 
    following cases:
    \begin{compactitem}
    \item $e \in X$. Note that by
      Definition~\ref{def:combinableEvtsFES}(1), $e' \nin
      X$ and, by construction, $e_X = f(e) \prec f(e') =
      e'$ as desired.
    \item $e' \in X$. Similarly to previous case, given that $e' \in X$, then $e$ cannot be in set $X$ 
      by Definition~\ref{def:combinableEvtsFES}(1).
      Therefore, by construction, $e =f(e) \prec f(e')= e_X $.
    \item otherwise, if $e,e' \nin X$ then $f$ is the identity on $e,e'$ 
      and this property trivially holds.
    \end{compactitem}
  \end{compactitem}
  
\end{proof}

\begin{corollary}[reflection of $\prec$]
  \label{cor:causal-reflectFES}
  With the notation of Lemma~\ref{lemma:propertiesFESM}, for $x, x' \in
  E_X$, if $x \prec x'$ then there are $e, e' \in E$ such that $e \prec
  e'$ and $f(e) = x$, $f(e') = x'$.
\end{corollary}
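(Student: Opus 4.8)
The plan is to follow exactly the pattern of the AES analogue, Corollary~\ref{cor:causal-reflect}, where reflection of the order relation was obtained as an immediate consequence of the single-step reflection property of the folding morphism together with surjectivity of $f$. For FESs the relevant single-step property is already available: Lemma~\ref{lemma:propertiesFESM}, item~\ref{lemma:propertiesFESM:one}, states that any flow step of the form $x \prec_{/X} f(e')$ in $\fold{\mathbb{F}}{X}$ can be traced back to a flow step $e \prec e'$ in $\mathbb{F}$ with $f(e) = x$. The only ingredient to add is a preimage for the target endpoint $x'$.

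First I would observe that $f$ is surjective onto $\fold{E}{X}$: every event of $E \setminus X$ is mapped to itself, and the merged event $e_X$ is the image of any element of $X$ (which is nonempty, since we fold a combinable set of events). Hence, given $x \prec_{/X} x'$ with $x, x' \in \fold{E}{X}$, I can choose $e' \in E$ with $f(e') = x'$. Rewriting the hypothesis as $x \prec_{/X} f(e')$ and invoking Lemma~\ref{lemma:propertiesFESM}, item~\ref{lemma:propertiesFESM:one}, yields an event $e \in E$ with $e \prec e'$ and $f(e) = x$, which is precisely the pair $(e, e')$ required by the statement.

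There is essentially no obstacle here: unlike the AES case, where Corollary~\ref{cor:causal-reflect} concerns whole $\leq$-chains and thus needs the single-step lemma to be iterated along a chain, the flow relation $\prec$ of a FES is not transitive and the corollary speaks only of a single $\prec$-step, so no inductive argument is needed. The one point deserving a word of care is the surjectivity of $f$, which relies on $X$ being nonempty; this is implicit in the hypothesis that $X$ is a (combinable) set of events to be folded.
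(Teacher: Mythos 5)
Your proof is correct and takes essentially the same route as the paper, which likewise dispatches the corollary as an immediate consequence of the single-step reflection property of Lemma~\ref{lemma:propertiesFESM} ($x \prec_{/X} f(e') \Rightarrow \exists e \in E.\ e \prec e' \wedge f(e) = x$) combined with surjectivity of $f$; your explicit choice of a preimage $e'$ of $x'$ merely spells out what the paper leaves implicit. Your side remark that, unlike the AES analogue for $\leq$-chains, no iteration is needed because $\prec$ is a single intransitive step is also accurate.
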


\begin{proof}
  It follows immediately by property (2) in
  Lemma~\ref{lemma:propertiesFESM} and surjectivity of $f$.
\end{proof}

The next definition correspond to the morphism in FES.
As the definition is tailored to the folding of FES, we will later 
show that, as for the general notion of morphism, the mapping of 
events preserves the configurations of the original FES.

\begin{definition}[FES morphism for folding]
  Let $\mathbb{F} = \langle E, \#, \prec, \lambda \rangle$ and
  $\fold{\mathbb{F}}{X} = \langle \fold{E}{X}, \#_{/X},\prec_{/X},\
  \lambda_{/X}\rangle$ be FESs. A FES-morphism $f : \mathbb{F}
  \rightarrow \fold{\mathbb{F}}{X}$ is a partial function $f : E
  \rightarrow \fold{E}{X}$ such that, for all $e, e', e'' \in E$, if $f(e) \neq \perp \neq f(e^\prime)$ then 
	\begin{compactenum}
		\setlength{\itemindent}{2em} 
		\item \label{morphismFES:two} $f(e) \#_{/X} f(e^\prime) \Rightarrow e \# e^\prime$
		\item \label{morphismFES:one} $x \prec_{/X} f(e^\prime) \Rightarrow 
		\exists e \in E : e \prec e^\prime$, such that $f(e) = x$
		\item \label{morphismFES:oneA} $e \prec e' \Rightarrow f(e) \prec f(e')$
		\item \label{morphismFES:three} $f(e) = f(e^\prime) \Rightarrow e = e' ~\lor~  e \# e'$
	\end{compactenum}
	\label{def:FESmorphism}
\end{definition}

\begin{lemma}
Let $\mathbb{F} = \langle E, \#, \prec, \lambda \rangle$ be a FES, $X\subseteq E$ 
be a combinable set of events and $\fold{\mathbb{F}}{X} = \langle \fold{E}{X}, \#_{/X},
\prec_{/X},\lambda_{/X}\rangle$ be a folded FES. Then $f : E \rightarrow E_{/X}$ is 
a FES morphism.
\label{lemma:FESmorphism}
\end{lemma}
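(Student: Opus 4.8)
The plan is to verify directly the four defining conditions of a FES-morphism (Definition~\ref{def:FESmorphism}) for the folding map $f$ of Definition~\ref{def:mappingFES}. Three of these conditions have in fact already been established in the preceding lemma: condition~\ref{morphismFES:two} coincides with property~\ref{lemma:propertiesFESM:two} of Lemma~\ref{lemma:propertiesFESM}, condition~\ref{morphismFES:one} with property~\ref{lemma:propertiesFESM:one}, and condition~\ref{morphismFES:oneA} with property~\ref{lemma:propertiesFESM:oneA}. I would therefore simply invoke Lemma~\ref{lemma:propertiesFESM} to discharge these three items, observing that their statements match the morphism requirements verbatim.

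The only remaining obligation is condition~\ref{morphismFES:three}, namely that $f(e) = f(e')$ implies $e = e'$ or $e \# e'$. Here I would appeal to the explicit shape of the folding map: by Definition~\ref{def:mappingFES}, $f$ acts as the identity on $E \setminus X$ and collapses every element of $X$ to the single fresh event $e_X$. Hence, if $f(e) = f(e')$ with $e \neq e'$, the common image can only be $e_X$ (since $f$ is injective on $E \setminus X$ and does not send any event outside $X$ to $e_X$), which forces both $e$ and $e'$ to lie in $X$. Combinability condition~1 of Definition~\ref{def:combinableEvtsFES} then guarantees that distinct events of $X$ are pairwise in conflict, so $e \# e'$, as required.

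I expect no genuine obstacle in this argument: the analytic content of the morphism property was already isolated in Lemma~\ref{lemma:propertiesFESM}, and the new ingredient, condition~\ref{morphismFES:three}, is an immediate consequence of the definition of $f$ together with the first combinability requirement. The reasoning mirrors exactly the AES situation, where the analogous property~\ref{morphism:three} of an AES-morphism was verified in Lemma~\ref{lemma:morphism}. The proof is thus little more than a case analysis on membership in $X$, with all the work delegated to results already in place.
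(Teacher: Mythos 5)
Your proposal is correct and takes essentially the same approach as the paper: the paper likewise discharges the first three morphism conditions by direct appeal to Lemma~\ref{lemma:propertiesFESM}, and verifies the remaining condition by observing that $f(e) = f(e')$ with $e \neq e'$ forces $e, e' \in X$, whence $e \# e'$ by the first combinability condition of Definition~\ref{def:combinableEvtsFES}. Nothing further is needed.
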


\begin{proof}
	\begin{compactitem}
		\item Property~\ref{morphismFES:two}, Property~\ref{morphismFES:one} and 
		Property~\ref{morphismFES:oneA} hold from Lemma~\ref{lemma:propertiesFESM}.
		
		\item Property~\ref{morphismFES:three}. By folding function (Definition~\ref{def:mappingFES}), 
		if $e,e' \in E$, such that $e \neq e'$ and $f(e) = f(e')$, then $e,e' \in X$ and $e \# e'$.
		
	\end{compactitem}
\end{proof}

\begin{restatable}{lemma}{LemmaFESFoldingMorphism}
Let $\mathbb{F} = \langle E, \#, \prec, \lambda \rangle$ be a FES, $X\subseteq E$ be a 
combinable set of events, $\fold{\mathbb{F}}{X} = \langle \fold{E}{X}, \#_{/X}, \prec_{/X},
\lambda_{/X}\rangle$ be the folded FES of $\mathbb{F}$ and  $f : \mathbb{F} \rightarrow 
\fold{\mathbb{F}}{X}$ be a folding morphism. For any configuration $C_0 \in Conf(\mathbb{F})$ 
then $f(C_0) = Conf(\fold{\mathbb{F}}{X})$ is a configuration in $\fold{\mathbb{F}}{X}$.
\label{lemma:configFES}
\end{restatable}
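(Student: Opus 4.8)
The plan is to show that $f(C_0)$ satisfies the three defining conditions of a FES configuration: conflict freeness, absence of flow cycles, and the flow-closure condition. Since $f$ is a FES morphism (Lemma~\ref{lemma:FESmorphism}) and, in particular, it reflects conflict (property~\ref{morphismFES:two}) and reflects flow (property~\ref{morphismFES:one}), the first two conditions should follow by a reflection argument: any ``bad'' feature in $f(C_0)$ would be pulled back to a bad feature already present in $C_0$, contradicting that $C_0$ is a configuration. The nontrivial work is in the third condition, which is not merely a closure property but a genuine \emph{enabling} requirement, and where the folding may have introduced or removed flow predecessors through the merged event $e_X$.

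\textbf{Conflict freeness and acyclicity.} First I would observe that $f$ is surjective and that distinct elements of $f(C_0)$ have preimages in $C_0$. For conflict freeness, suppose $x \#_{/X} x'$ with $x, x' \in f(C_0)$. Pick preimages $e, e' \in C_0$ with $f(e) = x$, $f(e') = x'$; by property~\ref{morphismFES:two} of the morphism, $e \# e'$, contradicting conflict freeness of $C_0$. For the absence of flow cycles, a cycle $x_1 \prec_{/X} x_2 \prec_{/X} \cdots \prec_{/X} x_1$ in $f(C_0)$ would, using reflection of $\prec$ (property~\ref{morphismFES:one}, or Corollary~\ref{cor:causal-reflectFES}), lift to a corresponding chain in $\mathbb{F}$; I would need to take some care here, because the preimages chosen at each step must land inside $C_0$ and must close up into an actual cycle. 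The cleanest route is to argue that $\prec_{f(C_0)}^*$ being a partial order follows from $\prec_{C_0}^*$ being one: since $f$ restricted to $C_0$ is essentially a quotient that identifies only conflicting (hence non-cooccurring) events, at most one element of $X$ lies in $C_0$, so $f|_{C_0}$ is injective, and thus the flow orders correspond directly.

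\textbf{The flow-closure condition (the main obstacle).} This is the heart of the proof. I must show that for every $x \in f(C_0)$ and every $e'_* \notin f(C_0)$ with $e'_* \prec_{/X} x$, there exists $x'' \in f(C_0)$ with $e'_* \#_{/X} x'' \prec_{/X} x$. The difficulty is that the new flow edges into and out of $e_X$ are defined disjunctively (via $e \prec X$ and $X \prec e'$), so a flow predecessor of $x$ in $\fold{\mathbb{F}}{X}$ need not come from the same witnessing event of $X$ that is enabled in $C_0$. I would split into cases according to whether $x = e_X$ or $x \ne e_X$, and whether the blocking predecessor $e'_*$ equals $e_X$. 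The key insight, which I expect to rely on heavily, is Lemma~\ref{lemma:consistentSEfes}: the maximal consistent sets of $\prec$-predecessors of $e_X$ coincide with those of some $x \in X$, so the enabling structure is genuinely preserved by the merge. Combinability conditions 3, 4, and especially 5 are precisely what guarantee that a missing flow predecessor is always matched by a conflict that can serve as the required blocker $x''$. The most delicate subcase is when $e'_*$ is a flow predecessor of $x$ only because of the disjunctive definition through $e_X$, and the original blocker in $C_0$ no longer maps to a flow predecessor of $x$; here condition~5 of Definition~\ref{def:combinableEvtsFES} should supply, inside the relevant maximal consistent set $Y$, an alternative witness $e''$ in conflict with $e'_*$. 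I anticipate this case analysis to be the main technical burden, and I would organize it around the preimage of $x$ under $f$ together with the appropriate $Y \in \mcons{\pre{x}}$ realized within $C_0$.
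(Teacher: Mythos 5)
Your overall route is the paper's: verify the configuration axioms one by one, dispatch conflict-freeness and flow-acyclicity by pulling bad features back along $f$, and concentrate the real work in the enabling condition via a case analysis where combinability condition 5 manufactures substitute blockers inside maximal consistent sets. Your observation that the events of $X$ are pairwise conflicting, so $|C_0 \cap X| \leq 1$ and $f|_{C_0}$ is injective, is exactly why the conflict disjunct in the reflection property $f(e) \prec_{/X} f(e') \Rightarrow e \prec e' \lor e \# e'$ is vacuous inside a configuration, which is how the paper reflects cycles. Your two highlighted scenarios --- the missing predecessor being $e_X$ itself, and a blocker from $X$ whose relationship to $e_1'$ degrades after folding --- are precisely the paper's subcases $e_1 \in X, e_2 \notin X$ and $e_1 \notin X, e_2 \in X$, settled by the two conjuncts of condition 5. (One correction of detail: in the second scenario the blocker $e_2 \in X$ \emph{does} still map to a flow predecessor of $x$, since $e_2 \prec e$ forces $e_X \prec_{/X} f(e)$ by construction; what may be lost is its \emph{conflict} with $e_1'$, because $e_X \#_{/X} e_1'$ requires all of $X$ to conflict with $e_1$.)

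There is, however, one genuine gap: your plan never mentions combinability condition 2 or direct conflict (Definition~\ref{def:FES:directconf}), and there is a subcase that only these can close. Let $x = e_X = f(e)$ with $e \in X \cap C_0$, and suppose $e_1' = f(e_1) \prec_{/X} e_X$ holds only because $e_1 \prec e_3$ for a \emph{different} $e_3 \in X$, while $\lnot(e_1 \prec e)$. Reflection then returns $e_1 \# e$, not $e_1 \prec e$: here $e_1$ is not a $\prec$-predecessor of $e$ at all, so there is no ``original blocker in $C_0$'' to be replaced, and the condition-5 machinery you sketch never gets started (nor does Lemma~\ref{lemma:consistentSEfes} help). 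The paper closes this case by arguing via condition 2 of Definition~\ref{def:combinableEvtsFES} that the conflict cannot be direct --- $e \dConflict e_1$ would propagate the conflict to every member of $X$, in particular to $e_3$ --- and non-directness means, by Definition~\ref{def:FES:directconf}, that \emph{every} $Y \in \mcons{\pre{e}}$ meets $\#(e_1)$; instantiating this with $Y = \pre{e} \cap C_0$ yields $e_2 \in C_0$ with $e_2 \prec e$ and $e_2 \# e_1$, and since $e_1, e_2 \notin X$ both relations survive the folding, producing the required blocker $f(e_2)$. A final minor misattribution: Lemma~\ref{lemma:consistentSEfes}, which you present as the key insight for this lemma, is not used in the paper's proof of it; it enters only later, in Lemma~\ref{lemma:FEShpbisim}, when a configuration of the folded FES is extended by $e_X$ and a preimage in $X$ must be found.
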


%
%
\begin{proof}
\begin{enumerate}
		\item $f(C_0) $ is conflict free.\\
		
		It follows directly from Lemma~\ref{lemma:propertiesFESM}(1). 
		I.e., if there is a pair of events $e_1',e_2' \in f(C_0) : e_1' \# e_2'$, 
		then it would be reflected in $C_0$. Thus, $\exists e_1, e_2 \in C_0$,
		where $e_1' = f(e_1) ~\land~ e_2' = f(e_2)$, and $e_1 \# e_2$.\\
			
		\item $f(C_0) $ has no causality cycles, i.e. $\leq_{f(C_0)}$ is an 
		ordering. \\
		
		By Corollary~\ref{cor:causal-reflectFES}, flow relation chains are reflected. 
		Therefore, any cycle in $f(C_0)$ would be reflected in $C_0$.\\
		
		\item $\{e_1' \mid e_1' \in f(E) ~\land~ e_1' \leq_{f(C_0)}  e_1\}$ is finite 
		for all $e_1 \in f(C_0) $.\\
		
		 Follows immediately by Corollary~\ref{cor:causal-reflectFES} since any 
		 infinite set of flow relations in $f(C_0)$ would be reflected in $C_0$.\\
		
		\item For all $e' \in f(C_0)$ and $e_1' \notin f(C_0) $ s.t. $e_1' \prec e'$, 
		there exists an $e_2' \in f(C_0) $ such that $e_1' \#e_2' \prec e'$. \\
		
		Let $e' \in f(C_0), e_1' \nin f(C_0)$, such that $e_1' \prec e'$. Assume that 
		$e' = f(e)$ for an event $e \in E$. Hence $e_1' \prec f(e)$ and, by 
		Definition~\ref{def:FESmorphism}(2), $\exists e_1 \in E : f(e_1) = e_1' ~\land~ 
		e_1 \prec e ~\land~ e_1 \nin C_0$, the last because $f(e_1) = e_1' \nin f(C_0)$. 
		Since $C_0$ is a configuration, $\exists e_2 \in C_0 : e_2 \prec e ~\land~ e_2 \# e_1$. 
		Let us suppose that $\forall e'' \in f(C_0) : \lnot(e'' \# e_1')$. We distinguish various cases:
		\begin{enumerate}
			\item $\{e_1, e_2\} \subseteq X$. This case contradicts the assumption since $f(e_1) 
			= f(e_2) \in f(C_0)$.
			
			\item $e_1 \in X, e_2 \nin X$. First of all, note that $\lnot(e_2 \dConflict e_1)$, 
			because if it held, then $e_1 \# e_3$, for any $e_3 \in X$ and it would contradict the 
			assumption since $f(e_2) \in f(C_0)$	and, by Definition~\ref{def:FESmorphism}(1), 
			$f(e_2) \# f(e_1)$. On the other hand, given that $\lnot(e_2 \dConflict e_1)$ then $\forall 
			Y \text{ max and cons } \subseteq \pre{e_2}, \exists e_4 \in Y : e_4 \# e_1$. Thus,
			since the events in $\pre{e_2}$ were not introduced during the folding, 
			i.e. $\pre{e_2} \cap X = \emptyset$, then the conflict relations between any event in 
			$Y \text{ max and cons } \subseteq \pre{e_2}$ and $e_1$ are preserved. Finally, since 
			$e_2 \in C_0$ then a set $Y \text{ max and cons} \subseteq \pre{e_2}$ is also contained in 
			$C_0$ and $f(Y) \subseteq f(C_0)$, and it contradicts the assumption. 
			
			\item $e_1 \nin X, e_2 \in X$. Similar to previous case, if  $e_2 \dConflict e_1$ then 
			$e_1 \# e_3$, for any $e_3 \in X$ and it would contradict the assumption since $f(e_2) 
			\in f(C_0)$ and, by Definition~\ref{def:FESmorphism}(1), $f(e_2) \# f(e_1)$. Thus, given 
			that $\nexists e'' \in f(C_0) : e'' \# e_1'$ then, by construction, $\exists e_3 \in X : 
			\lnot(e_3 \# e_1)$, however, by Definition~\ref{def:combinableEvtsFES}(5) and
			since $e_1 \# e_2$, then $\forall Y' \subseteq \pre{e}, Y' \text{ max and cons } 
			\land~ x' \in Y' \text{ then } \exists e'' \in Y' \backslash X : e'' \# e_1)$ for any $x' \in X$. 
			The last implies that $\exists e_4 \in C_0 : \lnot(e_4 \# e_2) ~\land~ e_4 \# e_1 	\Rightarrow 
			f(e_4) \in f(C_0)$ and, by Definition~\ref{def:FESmorphism}(1), $f(e_4) \# f(e_1)$. Contradiction.
			
			\item $\{e_1, e_2\} \nsubseteq X$. Given that $\{e_1, e_2\} \nsubseteq X$ and $e_1 \# 
			e_2$ then, by Definition~\ref{def:FESmorphism}(1), $f(e_2) \# f(e_1)$. Contradiction.
		\end{enumerate}		
	\end{enumerate}
	
\end{proof}
\fi

We can prove that the folding morphism reflect conflicts, preserves the $\prec$-relation and it maps configurations
into configurations  (see Appendix for the detailed proof).
%
%
First, this can be used to show that the FES resulting from a folding is
faithful and full.

\begin{lemma}
  Let $\mathbb{F}$ be a FES, $\fold{\mathbb{F}}{X} = \langle
  \fold{E}{X}, \#_{/X}, \prec_{/X},\lambda_{/X}\rangle$ be the folded
  FES of $\mathbb{F}$ and $f : \mathbb{F} \rightarrow 
  \fold{\mathbb{F}}{X}$ be the folding morphism.  The FES
  $\fold{\mathbb{F}}{X}$ is
	\begin{inparaenum}[1)]
		\item \label{prop:faithfulProof} faithful, and
		\item \label{prop:conflictConsProof} full.
	\end{inparaenum}
	\label{lemma:PropertiesFESfaithfulConsistent}
\end{lemma}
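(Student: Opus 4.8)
The plan is to transfer faithfulness and fullness from $\mathbb{F}$ to $\fold{\mathbb{F}}{X}$ using the single structural fact that the folding morphism $f$ sends configurations of $\mathbb{F}$ to configurations of $\fold{\mathbb{F}}{X}$, together with the way conflicts involving the merged event $e_X$ are defined in Definition~\ref{def:foldingFES}. Since $\# \subseteq \#_s$ holds in any FES directly from the definition of semantic conflict (conflicting events cannot share a conflict-free configuration), for faithfulness it suffices to establish the reverse inclusion in $\fold{\mathbb{F}}{X}$, i.e.\ that any two events not related by $\#_{/X}$ can be placed together in some configuration; and for fullness it suffices to exhibit, for each event of $\fold{\mathbb{F}}{X}$, a configuration containing it.

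For faithfulness I would argue by contraposition: given $x, x' \in \fold{E}{X}$ with $\lnot (x \#_{/X} x')$, I produce a configuration $C'$ of $\fold{\mathbb{F}}{X}$ with $\{x, x'\} \subseteq C'$, whence $x$ and $x'$ are not in semantic conflict in the folded FES. The construction is a short case analysis on whether $x$ or $x'$ equals $e_X$, each case reducing to faithfulness of $\mathbb{F}$:
\begin{compactenum}
\item If $x, x' \in E \setminus X$, then $\#_{/X}$ agrees with $\#$ on $E \setminus X$, so $\lnot(x \# x')$; faithfulness of $\mathbb{F}$ yields $C \in \conf{\mathbb{F}}$ with $\{x, x'\} \subseteq C$, and since $f$ acts as the identity on $x, x'$, the folded configuration $f(C)$ contains both.
\item If $x = e_X$ and $x' \in E \setminus X$ (the symmetric case is identical by symmetry of conflict), then by Definition~\ref{def:foldingFES} $e_X \#_{/X} x'$ holds exactly when $x' \# X$; hence $\lnot(e_X \#_{/X} x')$ provides some $\bar{x} \in X$ with $\lnot(x' \# \bar{x})$. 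Faithfulness of $\mathbb{F}$ gives $C \in \conf{\mathbb{F}}$ with $\{x', \bar{x}\} \subseteq C$, and $f(C)$ then contains $f(x') = x'$ and $f(\bar{x}) = e_X$.
\end{compactenum}
In both cases $f(C) \in \conf{\fold{\mathbb{F}}{X}}$ by the configuration-mapping property of the folding morphism established above, giving the desired witness.

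For fullness I would show that every event is executable, which (combined with faithfulness) amounts to irreflexivity of conflict. Each $x \in \fold{E}{X}$ admits a preimage under $f$ in $\mathbb{F}$: namely $x$ itself when $x \in E \setminus X$, or an arbitrary $\bar{x} \in X$ when $x = e_X$. Fullness of $\mathbb{F}$ supplies a configuration $C \in \conf{\mathbb{F}}$ containing that preimage, and the configuration-mapping property then makes $f(C) \in \conf{\fold{\mathbb{F}}{X}}$ a configuration containing $x$.

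The argument is essentially bookkeeping once the configuration-mapping property is in hand, so the main point requiring care is the translation in the second case of faithfulness: the equivalence $e_X \#_{/X} x' \iff x' \# X$ built into Definition~\ref{def:foldingFES} means that \emph{absence} of a folded conflict with $e_X$ yields only \emph{one} compatible representative $\bar{x} \in X$ (not compatibility with all of $X$), and it is precisely this single representative that must be used to build the witnessing configuration. I expect no genuine obstacle beyond ensuring the case distinction on $e_X$ is exhaustive and that conflict symmetry is invoked to cover the mirror case.
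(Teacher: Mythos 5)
Your proposal is correct and follows essentially the same route as the paper's proof: for faithfulness, both arguments reduce to choosing a compatible representative $\bar{x}\in X$ (available precisely because $e_X \#_{/X} x'$ requires $x' \# X$, i.e.\ conflict with \emph{all} of $X$), invoking faithfulness of $\mathbb{F}$, and pushing the witnessing configuration forward via the configuration-preservation lemma for the folding morphism. The only cosmetic difference is in fullness, where the paper derives irreflexivity of $\#_{/X}$ directly from conflict reflection ($f(e)\#_{/X}f(e')\Rightarrow e\#e'$) while you exhibit a configuration containing each event; both are one-line transfers of the same fact.
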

\ifTechnicalReport

\begin{proof} Note that $\mathbb{F}$ meets the defined properties, i.e., it has no inconsistent 
events and the conflict is faithful. 
Now, let us prove that the properties also hold for $\fold{\mathbb{F}}{X}$.

	\begin{compactenum}[]
		\item Property \ref{prop:faithful}). Faithfulness
		
		Let $x,x' \in \fold{E}{X} : \lnot(x \# x')$ be a pair of events in $\fold{\mathbb{F}}{X}$. 
		By Corollary~\ref{cor:notConflict}, if $\lnot(x \# x')$ then $\exists e,e' \in E : f(e) =x 
		~\land~ f(e') = x' ~\land~ \lnot(e \# e')$. Since $\mathbb{F}$ is faithful, $e,e'$ are not 
		in semantic conflict, i.e. $\exists C_0 \in Conf(\mathbb{F}) : \{e,e'\} \in C_0$. By 
		Lemma~\ref{lemma:configFES}, $f(C_0) \in Conf(\fold{\mathbb{F}}{X}) : \{f(e), f(e')\} 
		=\{x, x'\} \in f(C_0)$. \\
				
		\item Property \ref{prop:conflictCons}). Semantic conflict consistency
	
		By Definition~\ref{def:FESmorphism}(1) and surjectivity of $f$, it follows that 
		$\forall e,e' \in E~:~ f(e) \# f(e')$ then $e \# e'$. Thus, if there is a self-conflicting
		(inconsistent) event in $\mathbb{F}_{/X}$, then it would be reflected in $\mathbb{F}$.		
	\end{compactenum}
	
\end{proof}
\fi

Building on the previous technical results we can finally prove that 
the folding morphism $f$ can be seen as a hp-bisimulation.

\begin{lemma}
Let $\mathbb{F} = \langle E, \#, \prec, \lambda \rangle$ be a FES and  
$\fold{\mathbb{F}}{X} = \langle \fold{E}{X}, \#_{/X},\prec_{/X},\ \lambda_{/X}\rangle$ 
be a folded FES for an set of combinable events $X \subseteq E$. 
Let $f : \mathbb{F} \rightarrow \fold{\mathbb{F}}{X}$ be the folding morphism. 
Then
  \begin{center}
    $R = \{ (C_1, f_{|C_1}, f(C_1)) \mid C_1 \in \conf{\mathbb{F}} \}$
  \end{center}
  is a hp-bisimulation.

\label{lemma:FEShpbisim}
\end{lemma}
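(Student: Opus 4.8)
The plan is to verify directly that $R$ meets the three requirements of an hp-bisimulation: that it contains the empty triple, that each component $f_{|C_1}$ is a label-preserving isomorphism of the induced causal posets $(C_1,\prec_{C_1}^*)$ and $(f(C_1),\prec_{f(C_1)}^*)$, and that the forward and backward extension (zig and zag) conditions hold. The empty triple lies in $R$ since $\emptyset\in\conf{\mathbb{F}}$ and $f(\emptyset)=\emptyset$. For the isomorphism property I would first observe that $f_{|C_1}$ is a bijection onto $f(C_1)$: by Definition~\ref{def:combinableEvtsFES}(1) the events of $X$ are pairwise in conflict, so a conflict-free configuration meets $X$ in at most one event, whence $f$ collapses nothing inside $C_1$ and is injective there; surjectivity onto $f(C_1)$ and label preservation are immediate from Definition~\ref{def:foldingFES}. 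Order preservation, $e\prec_{C_1}^* e'\Rightarrow f(e)\prec_{f(C_1)}^* f(e')$, follows by applying preservation of the flow relation edge by edge along a $\prec_{C_1}$-chain.

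The delicate half is order \emph{reflection} inside a configuration, and this is the step I expect to be the main obstacle. After folding, the merged event $e_X$ inherits the flow predecessors of \emph{every} $x\in X$, so a priori $f$ could create a causal edge in $f(C_1)$ that is absent in $C_1$. To rule this out I would argue at the level of enabling sets: since $\fold{\mathbb{F}}{X}$ is faithful and full (Lemma~\ref{lemma:PropertiesFESfaithfulConsistent}), each event of a configuration is enabled by the maximal consistent subset of its flow predecessors actually present, and the induced causal order is exactly the transitive closure of these enabling edges. The key claim is that the enabling set of every $x'\in f(C_1)$ is the $f$-image of the enabling set of the corresponding event of $C_1$. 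For $x'\neq e_X$ this is immediate, as $f$ is the identity on $E\setminus X$ and reflects both flow and conflict. For $x'=e_X$ the realised enabling set $Y\subseteq\pre{e_X}=\pre{X}$ is a maximal consistent subset of $\pre{X}$, so Lemma~\ref{lemma:consistentSEfes} yields a single $\bar x\in X$ with $Y\in\mcons{\pre{\bar x}}$. Combinability conditions (4) and (5) are the conditions that must be invoked here to guarantee that this $\bar x$ coincides with the element of $X$ present in $C_1$, and that no predecessor of a \emph{different} $x\in X$ enters $f(C_1)$ as an enabler of $e_X$; matching enabling sets then gives reflection. Establishing this matching precisely, rather than merely the set-level statement of Lemma~\ref{lemma:consistentSEfes}, is where the real difficulty lies.

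For the zig condition, assume $C_1\cup\{e\}\in\conf{\mathbb{F}}$. Since $f$ maps configurations to configurations, $f(C_1\cup\{e\})=f(C_1)\cup\{f(e)\}\in\conf{\fold{\mathbb{F}}{X}}$; the labels agree by construction, and the enabling-set argument above, applied to the enlarged configuration, shows $f_{|C_1\cup\{e\}}$ is again an isomorphism, so the extended triple is in $R$. For the zag condition, assume $f(C_1)\cup\{x\}\in\conf{\fold{\mathbb{F}}{X}}$ and split on whether $x=e_X$. If $x\neq e_X$ then $x\in E\setminus X$ and $f$ is the identity on it; using reflection of flow and of conflict one checks that $x\notin C_1$ may be added to $C_1$ (its realised predecessors and its conflicts lift identically), so $e:=x$ works. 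If $x=e_X$, let $Y\subseteq C_1$ be the enabling set of $e_X$ and let $\bar x\in X$ be the event supplied by Lemma~\ref{lemma:consistentSEfes} with $Y\in\mcons{\pre{\bar x}}$; then $Y$ enables $\bar x$, Definition~\ref{def:combinableEvtsFES}(2) together with the absence of conflict between $e_X$ and $f(C_1)$ yields conflict-freeness of $C_1\cup\{\bar x\}$, and the flow-closure condition for FES configurations follows from the matching of conflicts, so $e:=\bar x$ extends $C_1$ with $f(\bar x)=e_X$ and the same enabling set, hence the same causal predecessors.

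Collecting the empty triple, the isomorphism property, and the two extension conditions establishes that $R$ is an hp-bisimulation. The routine parts are the bijectivity, label preservation, and the $x\neq e_X$ cases; the load-bearing step throughout is the claim that folding preserves the enabling set (and hence the causal order) of each event, which I would prove once and reuse for the base triples and for both extension directions, leaning on Lemma~\ref{lemma:consistentSEfes} and combinability conditions (4) and (5).
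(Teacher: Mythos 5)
Your skeleton coincides with the paper's proof of Lemma~\ref{lemma:FEShpbisim}: the empty triple, a pomset isomorphism $f_{|C_1} : (C_1,\prec^*) \to (f(C_1),\prec^*)$ for every configuration, zig via preservation of configurations (Lemma~\ref{lemma:configFES}), and zag with the case split on $x = e_X$ versus $x \neq e_X$, using Lemma~\ref{lemma:consistentSEfes} in the $e_X$ case. But at the step you yourself identify as load-bearing --- reflection of the causal order inside a configuration --- your proposal has a genuine hole. You reduce reflection to the claim that the enabling set of $e_X$ in $f(C_1)$ is the image of the enabling set of the element of $X$ \emph{actually occurring} in $C_1$, and you propose to obtain this from Lemma~\ref{lemma:consistentSEfes} together with combinability conditions (4) and (5). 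Lemma~\ref{lemma:consistentSEfes} cannot deliver that: it only yields \emph{some} $\bar x \in X$ with $Y \in \mcons{\pre{\bar x}}$, with no link to the member of $X$ present in $C_1$, and you never say how (4) and (5) would close this distance --- you explicitly defer it as ``the real difficulty.'' The paper discharges exactly this point with a tool your plan is missing: the edge-level property of the folding morphism, Lemma~\ref{lemma:propertiesFESM}(3), stating $f(e) \prec_{/X} f(e') \Rightarrow e \prec e' \,\lor\, e \# e'$, proved once and for all from combinability conditions (3) and (4). Since a configuration is conflict-free, the disjunct $e \# e'$ is excluded for $e, e' \in C_1$, so $\prec$ is reflected \emph{exactly} inside configurations; the pomset isomorphism follows in one line, and your enabling-set matching becomes a corollary (take $\bar x$ to be the element of $X$ in $C_1$ itself) rather than an unproven premise. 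Without this lemma or an equivalent substitute, your ``delicate half'' remains unestablished.

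A second, smaller gap sits in your zag case $x \neq e_X$, where you assert that the conflicts of $x$ ``lift identically.'' This is false for precisely one event: the member $e''$ of $X$ that may lie in $C_1$. One can have $e'' \# x$ in $\mathbb{F}$ while $\lnot(e_X \#_{/X} x)$, because by Definition~\ref{def:foldingFES} the folded conflict $e_X \#_{/X} x$ requires \emph{all} of $X$ to conflict with $x$; so conflict-freeness of $C_1 \cup \{x\}$ does not follow from conflict-freeness of $f(C_1) \cup \{x\}$ by identity of relations. The paper rules this case out by the direct-conflict analysis: if $e'' \dConflict x$ then Definition~\ref{def:combinableEvtsFES}(2) forces all of $X$ into conflict with $x$, a contradiction; if the conflict is not direct, every $Y \in \mcons{\pre{e''}}$ meets $\#(x)$, so a conflicting predecessor in $C_1 \setminus X$ survives the folding and contradicts $f(C_1) \cup \{x\}$ being a configuration. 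The same mechanism is needed to show that the lifted predecessor set $Y_1$ really satisfies $Y_1 \subseteq \pre{x}$ (again via Lemma~\ref{lemma:propertiesFESM}, excluding the conflict disjunct), and that $Y_1$ is maximal. By contrast, your treatment of the $x = e_X$ zag case matches the paper and even spells out the conflict-freeness argument via condition (2) that the paper leaves terse; the problem is confined to the two points above, both of which trace back to the missing reflection lemma.
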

\ifTechnicalReport

%
%
\begin{proof}
 Given a configuration $C_1 \in \conf{\mathbb{F}}$, let $C_2 =
  f(C_1)$. Assume that $f: (C_1, \prec^*) \approx (f(C_1), \prec^*)$
  (below shortened in $C_1 \approx C_2$. We prove that
	\begin{enumerate}
		\item if there is $e \in E$ such that $C_1 \cup \{ e \} \in
		\conf{\mathbb{F}}$ then $C_2 \cup \{ f(e) \} \in
		\conf{\fold{\mathbb{F}}{X}}$ and $C_1 \cup \{ e \} \approx C_2 \cup \{
		f(e) \}$.
    
		\item if there is $x \in \fold{E}{X}$ such that $C_2 \cup \{ x \} \in
		    \conf{\mathbb{F}}$ then there is $e \in E$ such that $f(e) = x$,
		    $C_1 \cup \{ e \} \in \conf{\mathbb{F}}$ and $C_1 \cup \{ e \}
		    \approx C_2 \cup \{ x \}$.
	\end{enumerate}

	\bigskip

	In the following, the subscript $/X$ in the relations of the folded
	FES are omitted for making the notation lighter.  

	\begin{enumerate}
		\item Let $e \in E$ be such that $C_1 \cup \{ e \} \in \conf{\mathbb{A}}$. As
		showed in Lemma~\ref{lemma:configFES}, $C_2 \cup \{ f(e) \} = f(C_1 \cup 
		\{ e\}) \in \conf{\fold{\mathbb{A}}{X}}$. In order to show that $C_1 \cup \{ e\} 
		\approx C_2 \cup \{ f(e) \}$, it is 	sufficient to prove that $f$ maps $\prec$-
		predecessors of $e$ to the $\prec$-predecessors of $f(e)$. Thus, let us show that
		the images of the maximal events $Y \subseteq \pre{e}$ in $C_1$ for an event 
		$e \in E$ are the maximal events in $f(C_1)$, i.e. $f(Y) \subseteq \pre{f(e)}$. \\
		
		Suppose that there is an event $e_1' \in E_{/X}$ such that $ e_1' \prec f(e) ~\land~ 
		e_1' \nin f(Y)$ and $\forall e_2' \in f(Y) : \lnot(e_2' \# e_1')$. Therefore, $\exists e_1 
		\prec e : f(e_1) = e_1'$, by Definition~\ref{def:FESmorphism}. Additionally, by the 
		definition of configurations (Def.~\ref{def:confFES}), if $e_1 \nin Y \land e_1\prec e 
		\Rightarrow \exists e_3 \in Y : e_1 \# e_3 ~\land~ e_3 \prec e$. Thus, $f(e_3) \in f(Y)$ 
		and $f(e_1) =e_1' \# f(e_3)$. The last contradicts the assumption.
				
		\bigskip

	\item Let $x \in \fold{E}{X}$ be such that $C_2 \cup \{ x \} \in \conf{\fold{\mathbb{F}}{X}}$. 
	In order to show that there is $e \in E$ such that $f(e) = x$, $C_1 \cup \{ e \} \in 
	\conf{\mathbb{F}}$ and $C_1 \cup \{ e \} \approx C_2 \cup \{ x \}$ we distinguish various cases:

		\begin{enumerate}
			\item $x = e_X$\\
			Let $Y = \pre{e_X}$. Note that events in $Y$ do not arise from 
			the merging and hence they are mapped 
			identically by $f$. Therefore, $Y \subseteq C_1$ and $Y$ is consistent. 
			By Corollary~\ref{cor:consistentSEfes}, for any consistent and maximal $Y \subseteq \pre{X}$, 
			there is an event $e' \in X$, s.t. 	$Y \subseteq \pre{e'}$. Thus, as $Y 
			\subseteq C_1$, then $C_1 \cup \{e'\} \in Conf(\mathbb{F})$, and it holds 
			$f(e') =e_X$. The fact that $C_1 \cup \{ e \} \approx C_2 \cup \{ x \}$ follows 
			directly from the choice of $e$. 			

			\medskip

			\item $x \neq e_X$ and $e_X \in C_2$, $e_X \prec x$.\\
			The event $x \in E$ is mapped identically by $f$. By 
			Definition~\ref{def:FESmorphism}(2), $\exists e_1 \in X : f(e_1) = e_X ~ 
			\land ~ e_1 \prec x$. Since $C_2 \cup \{ x \} \in \conf{\fold{\mathbb{F}}{X}}$, 
			there exists $Y = \pre{x} \backslash e_X$, which is consistent in $C_2$ (and $Y \cup \{e_X\}$ 
			is maximal). As 	$Y$ does not originate from the merging, then $Y \subseteq C_1$.
			Let us suppose $C_1 \cup \{ x \} \nin Conf(\mathbb{F})$, therefore, $\exists e_3 \in C_1 : e_3 \# x$ but 
			since $e_3$ and $x$ do not originate from the merging, we would have $f(e_3) 
			\# x$. The last would contradict the assumption of $C_2 \cup \{ x \} \in 
			\conf{\fold{\mathbb{F}}{X}}$, because a configuration shall be conflict free.
			Hence $C_1 \cup \{ x \} \approx C_2 \cup \{ x \}$.\\
						
			\medskip

			\item $x \neq e_X$ and $e_X \in C_2$.\\
			Since event $x$ does not result from merging, it is mapped identically by $f$. 
			Additionally, given that $e_X \in f(C_1)$ then $\exists e_1 \in X : e_1 \in C_1$. 
			Clearly, $f(C_1 \cup \{x\}) = C_2 \cup \{x\}$, thereby, $C_2 \cup \{x\}$ satisfies 
			properties 2-4 in Definition~\ref{def:confFES} because, by Definition~\ref{def:FESmorphism}(3), 
			causality cycles and infinite causes would be reflected in $\mathbb{A}$, and event 
			$x$ has exactly the same causes in $C_2 \cup \{x\}$ and in $C_1 \cup \{x\}$. 
			Suppose $C_1 \cup \{x\} \nin Conf(\mathbb{F})$. Hence, there must be an event 
			$e_2 \in C_1 : e_2 \# x$. Notice that $e_2 = e_1$, otherwise if $e_2 \nin X$ then 
			$e_2 \# x \Rightarrow f(e_2)\# x$, because $e_2$ and $x$ did not emerge from the merging and 
			the same conflict relation would be preserved. The last contradicts the assumption that $C_2 \cup \{x\} 
			\in Conf(\fold{\mathbb{F}}{X})$. In the light of the above, it is possible to observe that 
			$e_1 \dConflict x$. However, given that $e_1\dConflict x$, by Definition~\ref{def:combinableEvtsFES}(2), 
			$\forall e_3 \in X, e_3 \# x \Rightarrow e_X \# x ~\land~ C_2 \cup 
			\{x\} \nin Conf(\fold{\mathbb{F}}{X})$, contradiction. Therefore, $C_1 \cup \{ x \} 
			\approx C_2 \cup \{ x \}$.\\
						
			\medskip
			
			\item $x \neq e_X$ and for all $e' \in C_2 : e' \prec x \Rightarrow e' \neq e_X$.\\
			Finally, since $x$ and all its $\prec$-predecessors in $C_2$ does not originate 
			from the merge, they have exactly the same $\prec$ relations in $\mathbb{F}$ and 
			in $\fold{\mathbb{F}}{X}$ and thus the result is trivial.
		\end{enumerate}
	\end{enumerate} 
	
\end{proof}
\fi

\begin{corollary}[folding does not change the behavior]
  The folding operation of FESs preserves hp-bisimilarity.
\end{corollary}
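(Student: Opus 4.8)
The plan is to derive the final corollary as an immediate consequence of Lemma~\ref{lemma:FEShpbisim}, which already establishes that the relation
\[
R = \{ (C_1, f_{|C_1}, f(C_1)) \mid C_1 \in \conf{\mathbb{F}} \}
\]
is a hp-bisimulation between $\mathbb{F}$ and $\fold{\mathbb{F}}{X}$. Since a single folding step produces a FES $\fold{\mathbb{F}}{X}$ that is hp-bisimilar to $\mathbb{F}$ via this $R$, the corollary follows directly: folding preserves hp-bisimilarity.

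First I would verify that $R$ genuinely witnesses hp-bisimilarity in the formal sense of the definition. The triple $(\emptyset, \emptyset, \emptyset)$ lies in $R$ because $\emptyset \in \conf{\mathbb{F}}$ and $f(\emptyset) = \emptyset$. For each $(C_1, f_{|C_1}, f(C_1)) \in R$, the two matching conditions (a) and (b) are exactly the two enabling/reflecting properties proved in Lemma~\ref{lemma:FEShpbisim}: any extension $C_1 \cup \{e\} \in \conf{\mathbb{F}}$ is matched by $C_2 \cup \{f(e)\} \in \conf{\fold{\mathbb{F}}{X}}$ with the label preserved (since $\lambda_{/X}(f(e)) = \lambda(e)$ by Definition~\ref{def:foldingFES}), and conversely any extension $C_2 \cup \{x\}$ is matched by some $C_1 \cup \{e\}$ with $f(e) = x$. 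The fact that $f_{|C_1}$ is an order isomorphism between $(C_1, \prec^*)$ and $(f(C_1), \prec^*)$ is precisely the relation $\approx$ maintained throughout the proof of that lemma.

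The only subtlety worth flagging explicitly is that hp-bisimilarity requires $\fold{\mathbb{F}}{X}$ to be a legitimate object of the same class, i.e.\ a faithful and full FES, so that configurations and the notion of extension are well defined in the target. This is supplied by Lemma~\ref{lemma:PropertiesFESfaithfulConsistent}, which guarantees that the folded structure remains within the restricted subclass of FESs under consideration. I would therefore cite that lemma to close the loop before invoking Lemma~\ref{lemma:FEShpbisim}.

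I do not expect any genuine obstacle here, since the corollary is a bookkeeping statement assembling results already established: the hard technical work lives entirely in Lemma~\ref{lemma:FEShpbisim} (the bisimulation construction) and Lemma~\ref{lemma:PropertiesFESfaithfulConsistent} (well-formedness of the target). If anything merits a remark, it is that hp-bisimilarity is an equivalence relation (reflexive, symmetric and transitive), so preservation under a \emph{single} folding step extends by transitivity to the iterated folding of a finite FES down to a minimal one, yielding a minimal FES hp-bisimilar to the original---the statement the paper ultimately wants. I would note this closure property but not belabour it, as it requires no new machinery beyond the already-proven single-step result.
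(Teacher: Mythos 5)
Your proposal is correct and follows exactly the paper's route: the corollary is stated there as an immediate consequence of Lemma~\ref{lemma:FEShpbisim}, and your additional checks (the empty triple, label preservation via $\lambda_{/X}$, and well-formedness of $\fold{\mathbb{F}}{X}$ via Lemma~\ref{lemma:PropertiesFESfaithfulConsistent}) are precisely the bookkeeping the paper leaves implicit. Your closing remark on transitivity for iterated foldings likewise matches the paper's subsequent discussion of minimisation, so nothing is missing.
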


As for AESs the iterative application of folding to a given finite FES
allows one to minimise the given FES while preserving the
behaviour. Also in this case, there is no canonical representative,
i.e., there can be several minimal non-isomorphic FESs.



\section{Conclusion and future work}
\label{sec:conclusionFutureWork}

This paper presents reduction techniques, referred to as folding, for
AESs and FESs which allow one to reduce the number of events in an
event structure without changing the behaviour. The folding operation
merge sets of events that are intended to represent instances of the
same activity. The equivalence notion adopted is history preserving
bisimulation, a standard equivalence in the true concurrent
spectrum. Due to the different expressive power of AESs and FESs,
tailored folding techniques have been proposed for the two brands of
event structures.

It turned out that neither AESs nor FESs offer a canonical
representation of the behaviour of a process. More specifically, the
same process can have non-isomorphic and irreducible
foldings both in terms of AESs and FESs.
Therefore, a natural venue for future work is to investigate how
to characterise an ordering on foldings, leading to
a notion of minimal canonical AESs or FESs.

We noted that the conditions defining sets of combinable events
are orthogonal in both cases. In this respect, we envision a transformation
from AESs to FESs which would allow further folding at the price of
inserting unobservable events to simulate asymmetric conflict
on a FES. We contend that such a transformation would open the
possibility of taking advantage of the combined expressiveness
of AES and FES, possibly leading to more compact representations.
This is therefore another venue for future research.


Future work includes the assessment of performance (accuracy, efficiency) 
of the presented technique for process model differencing in real world 
process model collections. Naturally, it is planned to extend this work to 
cover cases with cycles. Finally, a promising avenue is the use of folding 
of FESs for approaching problems like process mining and elimination of 
duplicates in process models. An additional advantage of FESs is that they 
can easily transformed into a certain type of Petri nets, flow nets.


The minimisation of the behaviour of a process can be translated into
some kind of minimisation problem for automata or labelled transition
system. Most available techniques focus on interleaving behavioural
equivalences (like language or trace equivalence or various forms of
bisimilarity). We are not aware of approaches for the minimisation of
event structures or partially ordered models of computation. In some
cases, given a Petri net or an event structure a special transition
system can be extracted, on which minimisation is performed.
For instance in~\cite{MontanariP97} the authors propose an encoding of
safe Petri nets into a causal automata, in a way which preserves
hp-bisimilarity. The causal automata can be transformed into a
standard labelled transition system (LTS).  In this way, the LTS
representation can be used to check the equivalence between a pair of
processes or to find a minimal representation of the
behaviour. However, once a Petri net has been transformed into a causal
automaton, then it is not possible to obtain the Petri net
representation back, which can be of interest in some specific
applications.
In~\cite{vGlabbeek96}, the author uses a state transition diagram
referred to as process graph, for the representation of the behaviour
of a Petri net. Again, the transition diagram could be minimised with
some technique for LTSs with structured states, but not direct
approach is proposed.




\bibliographystyle{elsarticle-num} 
\bibliography{bibliography}

\begin{thebibliography}{10}

\bibitem{DijkmanDO08}
Dijkman, R., Dumas, M., Ouyang, C.:
\newblock {Semantics and analysis of business process models in BPMN}.
\newblock Information and Software Technology \textbf{50}(12) (2008)
  1281--1294

\bibitem{NielsenPW81}
Nielsen, M., Plotkin, G.D., Winskel, G.:
\newblock {Petri Nets, Event Structures and Domains, Part I}.
\newblock Theoretical Computer Science \textbf{13} (1981)  85--108

\bibitem{Winskel87}
Winskel, G.:
\newblock Event structures.
\newblock In Brauer, W., Reisig, W., Rozenberg, G., eds.: Petri Nets:
  Applications and Relationships to Other Models of Concurrency. Volume 255 of
  Lecture Notes in Computer Science.
\newblock Springer Berlin Heidelberg (1987)  325--392

\bibitem{BoudolC88}
Boudol, G., Castellani, I.:
\newblock Permutation of transitions: An event structure semantics for ccs and
  sccs.
\newblock In Bakker, J., Roever, W.P., Rozenberg, G., eds.: Linear Time,
  Branching Time and Partial Order in Logics and Models for Concurrency,
  School/Workshop. Volume 354.
\newblock Springer Berlin Heidelberg, London, UK, UK (1989)  411--427

\bibitem{Langerak91}
Langerak, R.:
\newblock Bundle event structures: a non-interleaving semantics for lotos.
\newblock In: Proceedings of the IFIP TC6/WG6.1 Fifth International Conference
  on Formal Description Techniques for Distributed Systems and Communication
  Protocols: Formal Description Techniques, V. FORTE '92, North-Holland
  Publishing Co. (1993)  331--346

\bibitem{BaldanCM01}
Baldan, P., Corradini, A., Montanari, U.:
\newblock {Contextual Petri Nets, Asymmetric Event Structures, and Processes}.
\newblock Information and Computation \textbf{171}(1) (2001)  1--49

\bibitem{RavinovitchT88}
Rabinovich, A.M., Trakhtenbrot, B.A.:
\newblock {Behaviour structures and nets}.
\newblock Fundamenta Informatica \textbf{11} (1988)  357--404

\bibitem{vanGlabbeekG89}
van Glabbeek, R., Goltz, U.:
\newblock {Equivalence notions for concurrent systems and refinement of
  actions.}
\newblock In Kreczmar, A., Mirkowska, G., eds.: Mathematical Foundations of
  Computer Science 1989. Volume 379 of LNCS.
\newblock Springer Berlin Heidelberg (1989)  237--248

\bibitem{BestDKP91}
Best, E., Devillers, R., Kiehn, A., Pomello, L.:
\newblock {Concurrent bisimulations in Petri nets}.
\newblock Acta Informatica \textbf{28} (1991)  231--264

\bibitem{MontanariP97}
Montanari, U., Pistore, M.:
\newblock Minimal transition systems for history-preserving bisimulation.
\newblock In Reischuk, R., Morvan, M., eds.: STACS 97. Volume 1200 of Lecture
  Notes in Computer Science.
\newblock Springer Berlin Heidelberg (1997)  413--425

\bibitem{vGlabbeek96}
van Glabbeek, R.:
\newblock History preserving process graphs, draft.
\newblock Draft available at: http://theory. stanford. edu/\~{} rvg/abstracts.
  html\# hppg (1996)

\end{thebibliography}

\ifTechnicalReport
\else
\newpage
\appendix
\section{Appendix}

\subsection{Proofs for \S~\ref{sec:reductionAES}:  Behaviour-Preserving Reduction of AESs}

Hereafter, to avoid the abuse of notation, given an AES $\mathbb{A} =\langle E, \leq, \ac, 
\lambda \rangle$ and a set of combinable events $X \subseteq E$, then the folding of 
$\mathbb{A}$ on the set of events $X$ is $\fold{\mathbb{A}}{X} = \langle
\fold{E}{X}, \leq_{/X},\ac_{/X}, \lambda_{/X} \rangle$.
  

\begin{restatable}{lemma}{LemmaPropertiesAESFoldingFunction}
  \label{lemma:PropertiesF}
  Let $\mathbb{A}$ be an AES,
  $X \subseteq E$ be combinable and let $f: E\rightarrow
  \fold{E}{X}$ be the folding morphism. Then for all $e \in E$, $x \in \fold{E}{X}$
  \begin{enumerate}
  \item if $x <_{/X}f(e)$ then there exists $e' \in E$ such that $e'
    < e$ and $f(e') =x$;
  \item if $f(e) \ac f(e')$ then $e \ac e'$;
  \item if $e \ac_\mu e'$ then $f(e) \ac_{/X} f(e')$ or $e \# e'$.

  \end{enumerate}
\end{restatable}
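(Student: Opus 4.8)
The plan is to establish all three claims by a direct case analysis on whether $e$ and $e'$ lie in $X$ (equivalently, on whether $f(e)$ and $f(e')$ equal $e_X$), reading the relevant edges off the definition of $\fold{\mathbb{A}}{X}$ (Definition~\ref{def:foldingAES}). Claims~1 and~2 are purely structural, following from the shape of the three relations that make up $\leq_{/X}$ and $\ac_{/X}$; claim~3 is the substantive one and will additionally invoke the defining properties of similar events (Definition~\ref{def:EquivalentEvts}).

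For claim~1, observe that $\leq_{/X}$ is \emph{by definition} the union of $\leq_{|(E\setminus X)}$, the set $\{(\hat e, e_X) \mid \hat e \in S(X)\}$, and the set $\{(e_X, \hat e) \mid \exists x \in X.\ x < \hat e\}$, so the pair witnessing $x <_{/X} f(e)$ lies in exactly one of these three. If it lies in the first, then $x, f(e) \in E\setminus X$, hence $e = f(e) \notin X$ and $x < e$ in $\mathbb{A}$ with $f(x) = x$, so $e' = x$ works. If it lies in the second, then $f(e) = e_X$, so $e \in X$, and $x \in S(X)$ gives $x < e$ with $f(x) = x$, again yielding $e' = x$. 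If it lies in the third, then $x = e_X$ and $f(e) = e \notin X$, and some $x_0 \in X$ satisfies $x_0 < e$; since $f(x_0) = e_X = x$, the witness is $e' = x_0$. (Should one prefer to regard $\leq_{/X}$ as generated by immediate causality, the same three-case analysis handles a single immediate step and transitivity of $<$ in $\mathbb{A}$ assembles an arbitrary chain.)

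For claim~2, split again on membership in $X$. If $e, e' \notin X$, the edge is inherited from $\ac_{|(E\setminus X)}$ and we are done. If exactly one of $e, e'$ lies in $X$, the pertinent summand of $\ac_{/X}$ --- namely $\{(\hat e, e_X) \mid \forall x\in X.\ \hat e \ac x\}$ or $\{(e_X, \hat e) \mid \forall x \in X.\ x \ac \hat e\}$ --- forces the asymmetric conflict to hold for \emph{every} member of $X$, in particular for the chosen $e$ or $e'$. The case $e, e' \in X$ is vacuous: then $f(e) = f(e') = e_X$, and the construction contains no self-loop $e_X \ac_{/X} e_X$ (both added summands quantify over $x \in X$ applied to an event of $E \setminus X$).

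Claim~3 is where the disjunction is genuinely needed. If $e \notin X$ and $e' \in X$, the hypothesis $e \ac_\mu e'$ feeds directly into condition~3 of Definition~\ref{def:EquivalentEvts}, yielding $e \ac x$ for \emph{all} $x \in X$, hence $f(e) = e \ac_{/X} e_X = f(e')$. If $e \in X$ and $e' \notin X$, condition~2 applied to $e \ac e'$ gives, for each $x \in X$, either $x \ac e'$ or $e' \ac e$: if the first disjunct holds uniformly we get $e_X \ac_{/X} e'$, while if it fails for some $x$ we obtain $e' \ac e$, which together with $e \ac e'$ forms a cycle $e \ac e' \ac e$ and therefore $e \# e'$ --- precisely the alternative in the statement. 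If $e, e' \in X$, then $f(e) = f(e')$ and the required $e \# e'$ is condition~1 of similarity; and if $e, e' \notin X$ the conclusion is immediate from $\ac_{|(E\setminus X)}$. The main obstacle is conceptual rather than computational: one must recognise \emph{why} ``$e \# e'$'' cannot be dropped, namely that a direct asymmetric conflict of one event of $X$ may survive the folding only as a symmetric conflict relative to another event of $X$, and it is exactly condition~2 of similarity that guarantees one of these two outcomes always occurs.
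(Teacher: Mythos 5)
Your proposal is correct and takes essentially the same route as the paper's own proof: a case analysis on membership of $e,e'$ in $X$, reading the folded relations off the three summands in Definition~\ref{def:foldingAES}, and for claim~3 invoking conditions 1--3 of Definition~\ref{def:EquivalentEvts}, including the paper's key step that a failing instance of condition~2 yields $e' \ac e$ and hence the cycle $e \ac e' \ac e$, i.e.\ $e \# e'$. Your additional remarks (treating the $e,e' \in X$ case of claim~2 as vacuous, and covering $e,e' \in X$ in claim~3 via condition~1) are minor refinements that, if anything, make the case split slightly more explicit than the paper's.
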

\begin{proof} 
  1. Let $x \in \fold{E}{X}$ and $e \in E$ be such that $x < f(e)$. 
  We distinguish various cases:
  \begin{compactitem}

  \item if $x = e_X$ then, by Definition~\ref{def:foldingAES}, there
    exists $e' \in X$ such that $e' < e$. Since $f(e') = e_X$ and
    $f(e) = e$, this is the desired conclusion.

  \item if $e \in X$ (and thus $f(e) = e_X$) then by
    Definition~\ref{def:foldingAES}, $x = e' \in S(X) \subseteq E -
    X$. Hence $e' < e''$ for all $e'' \in X$. In particular, hence $e'
    < e$, as desired.

  \item if none of the above apply, then $x = e' \in E$ and $f(e) = e$,
    hence the result trivially holds.
  \end{compactitem}
 
  \medskip

  2. Let $e, e' \in E$ and assume $f(e) \ac f(e')$. If $e \in X$ and
  thus $f(e) = e_X$ then, by Definition~\ref{def:foldingAES}, $e'' \ac
  e'$ for all $e'' \in X$. Thus in particular, $e \ac e'$ as
  desired. If instead, $e' \in X$ and thus $f(e') = e_X$ then, by
  Definition~\ref{def:foldingAES}, $e \ac e''$ for all $e'' \in
  X$. Thus in particular, $e \ac e'$ as desired.
  Finally, if $e, e' \not\in X$ then $f$ is the identity on $e, e'$,
  and thus the result trivially holds.

  \medskip

  3. Let $e, e' \in E$ and assume $e \ac_\mu e'$. We distinguish three cases:
  
  - If $e \in X$ then, by Definition~\ref{def:EquivalentEvts}(2),
  either $e' \ac e$ and thus $e \# e'$ and we are done, or for all
  $e'' \in X$ we have $e'' \ac e'$, hence $f(e) = e_X \ac e' = f(e')$, again
  as desired.

  - If $e' \in X$ then, by Definition~\ref{def:EquivalentEvts}(3), for
  all $e'' \in X$ we have $e \ac e''$ and thus $f(e) = e \ac e_X =
  f(e')$, as desired.

  - Otherwise, neither $e$ nor $e'$ are in $X$ and thus the thesis
  trivially follows.

\end{proof}

Note that the converse of (2) above, i.e., if $e \ac e'$ then $f(e) \ac f(e')$,
does not hold. For instance, consider the event structures in Figure~\ref{fig:aesFoldedAES}. If we merge the two $c$'s, we get that $a \ac c_1$ but it is not true that $f(a) \ac f(c_1)$.

\begin{figure}
\begin{center}
  \includegraphics{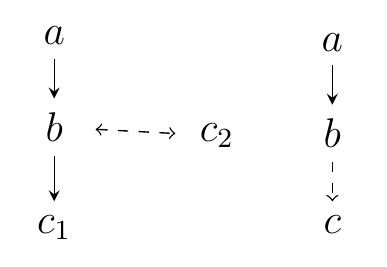}
\end{center}
\caption{AES and a folded structure}
\label{fig:aesFoldedAES}
\end{figure}

\begin{corollary}[reflection of $<$-chains]
  \label{cor:causal-reflect}
  With the notation of Lemma~\ref{lemma:PropertiesF}, take a chain
  $x_1 \leq x_2 \leq \ldots \leq x_K$ in $\fold{\mathbb{A}}{X}$. Then
  there is a chain $e_1 \leq e_2 \leq \ldots \leq e_K$ in
  $\mathbb{A}$, with $f(e_i) = x_i$ for $i \in \{ 1, \ldots, k\}$.
\end{corollary}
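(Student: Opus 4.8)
The plan is to construct the reflected chain from the top down, anchoring the construction by surjectivity of $f$ and then repeatedly invoking the reflection property of Lemma~\ref{lemma:PropertiesF}(1). First I would observe that $f$ is surjective: every $x \in \fold{E}{X}$ is either an event of $E \setminus X$, which $f$ maps to itself, or the merged event $e_X$, which is the image of any event in $X$. Hence I can choose $e_K \in E$ with $f(e_K) = x_K$ to seed the construction.

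The core of the argument is a downward induction on $i$, decreasing from $K$ to $1$, maintaining the invariant that I have already built a chain $e_i \leq e_{i+1} \leq \cdots \leq e_K$ in $\mathbb{A}$ with $f(e_j) = x_j$ for all $j \geq i$. For the inductive step, suppose $e_{i+1}$ has been chosen with $f(e_{i+1}) = x_{i+1}$, and recall that $x_i \leq_{/X} x_{i+1} = f(e_{i+1})$. If $x_i = x_{i+1}$, I simply set $e_i = e_{i+1}$. Otherwise $x_i <_{/X} f(e_{i+1})$ strictly, and Lemma~\ref{lemma:PropertiesF}(1) yields some $e_i \in E$ with $e_i < e_{i+1}$ and $f(e_i) = x_i$. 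In either case $e_i \leq e_{i+1}$ and $f(e_i) = x_i$, so the invariant is preserved; after $K-1$ steps I obtain the full chain $e_1 \leq \cdots \leq e_K$, which is the desired conclusion.

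The one point that genuinely deserves care is the \emph{direction} of the induction. Lemma~\ref{lemma:PropertiesF}(1) reflects a $\leq_{/X}$-predecessor of an event whose counterimage is already fixed, so it produces the smaller event from the larger one; this forces the construction to run top-down, with the maximal element pinned by surjectivity. An attempt at the symmetric bottom-up construction would instead require that $f$ preserve causality, which fails in general, as the explicitly noted failure of the converse of Lemma~\ref{lemma:PropertiesF}(2) already signals. Beyond this bookkeeping I do not expect any real obstacle, since the equality case is trivial and the strict case is exactly the content of Lemma~\ref{lemma:PropertiesF}(1).
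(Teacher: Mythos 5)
Your proof is correct and follows essentially the same route as the paper, which simply notes the corollary ``follows immediately by property (1) in Lemma~\ref{lemma:PropertiesF} and surjectivity of $f$''; your contribution is to make explicit the top-down induction (anchoring $e_K$ by surjectivity and pulling each $x_i$ back via the reflection property) that the paper leaves implicit. Your observation that the construction must run from the maximal element downward, since $f$ need not preserve the order, is accurate and a worthwhile clarification, but it is an elaboration of the paper's argument rather than a different one.
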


\begin{proof}
  It follows immediately by property (1) in
  Lemma~\ref{lemma:PropertiesF} and surjectivity of $f$.
\end{proof}

\begin{restatable}{lemma}{LemmaAEScompliance}
  \label{lemma:AES}
  Let $\mathbb{A}$ be an AES,
  let $X \subseteq E$ a combinable set. Then $\fold{\mathbb{A}}{X} = \langle
  \fold{E}{X}, \leq_{/X},\ac_{/X}, \lambda_{/X} \rangle$ is an AES.
\end{restatable}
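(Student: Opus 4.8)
The plan is to exploit the folding morphism $f : E \to \fold{E}{X}$ together with its reflection properties established in Lemma~\ref{lemma:PropertiesF} and Corollary~\ref{cor:causal-reflect}. The overarching idea is uniform: any structural defect forbidden in an AES (an infinite descending causal chain, a causal cycle, an infinite set of causes, or an asymmetric-conflict cycle among the causes of an event) would, via $f$, reflect back to a corresponding defect in $\mathbb{A}$, contradicting the assumption that $\mathbb{A}$ is already an AES.

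I would first dispose of the closure-type requirements. Transitivity of $\leq_{/X}$ follows from transitivity of $\leq$ in $\mathbb{A}$, after a short case analysis on whether the new event $e_X$ occurs in the middle of a chain $x \leq_{/X} e_X \leq_{/X} x'$; by construction the causes of $e_X$ are the common strict causes $S(X)$ and its successors are the successors of some event of $X$, so transitivity through $e_X$ reduces to transitivity in $\mathbb{A}$. The saturation conditions~1, 2 and 4 of Definition~\ref{def:AES} for $\ac_{/X}$ are handled similarly: since $\ac_{/X}$ is defined so that $e_X$ inherits exactly the asymmetric conflicts shared by all events of $X$, these closure conditions are inherited from the corresponding closure properties of $\ac$ in $\mathbb{A}$, again with a case split according to whether $e_X$ is one of the events involved.

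For the substantive part I would verify the three remaining properties by reflection. Well-foundedness of $\leq_{/X}$ (hence antisymmetry, a cycle being a trivial non-well-founded chain) follows from Corollary~\ref{cor:causal-reflect}: an infinite descending chain in $\fold{\mathbb{A}}{X}$ lifts to an infinite descending chain in $\mathbb{A}$. Finiteness of $\causes{x}$ in $\fold{\mathbb{A}}{X}$ follows from Lemma~\ref{lemma:PropertiesF}(1) and surjectivity of $f$: picking $e$ with $f(e) = x$, each cause of $x$ is the image of a cause of $e$, so infinitely many causes of $x$ would force infinitely many causes of $e$. Finally, acyclicity of $\ac_{/X}$ on $\causes{x}$ is the key combined use of both reflection properties: given a cycle $x_1 \ac_{/X} \cdots \ac_{/X} x_1$ inside $\causes{x}$, I choose $e$ with $f(e) = x$, apply Lemma~\ref{lemma:PropertiesF}(1) to obtain causes $e_1, \ldots, e_n \in \causes{e}$ with $f(e_i) = x_i$, and apply Lemma~\ref{lemma:PropertiesF}(2) to turn each folded asymmetric conflict into $e_i \ac e_{i+1}$, producing an $\ac$-cycle inside $\causes{e}$ and contradicting condition~3 for $\mathbb{A}$.

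The main obstacle I anticipate is the closure part rather than the reflection part: verifying conditions~1, 2 and 4 for the composite event $e_X$ is delicate because the causes and conflicts of $e_X$ are defined by intersection and universal quantification over $X$, so establishing, for instance, that $e_X$ correctly inherits an asymmetric conflict along one of its successors requires invoking the similarity and combinability hypotheses on $X$ (Definitions~\ref{def:EquivalentEvts} and~\ref{def:combinableEvts}) rather than a bare appeal to $\mathbb{A}$. The reflection arguments, by contrast, are essentially mechanical once Lemma~\ref{lemma:PropertiesF} is in hand.
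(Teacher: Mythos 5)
Your proposal is correct and follows essentially the same route as the paper's proof: transitivity of $\leq_{/X}$ and saturation of $\ac_{/X}$ are inherited from $\mathbb{A}$ via a case split on where $e_X$ sits, and well-foundedness, finiteness of $\causes{x}$, and acyclicity of $\ac_{/X}$ on $\causes{x}$ are all obtained by reflecting a hypothetical defect back into $\mathbb{A}$ through Lemma~\ref{lemma:PropertiesF} and Corollary~\ref{cor:causal-reflect}, exactly as you describe. The only (cosmetic) difference is that you propose to verify conditions~1 and~4 of Definition~\ref{def:AES} explicitly, which the paper subsumes under its brief saturation remark.
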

\begin{proof}
  We first note that the transitivity of $\leq$ in $\fold{\mathbb{A}}{X}$
  (as defined in Definition~\ref{def:foldingAES}) follows immediately by
  transitivity of $\leq$ in $\mathbb{A}$.
  Similarly, asymmetric conflict is saturated in
  $\fold{\mathbb{A}}{X}$ because it was in $\mathbb{A}$. In fact, let
  $x, x', x'' \in \fold{E}{X}$ and assume that $x \ac x' < x''$. We
  prove that $x \ac x''$. We consider several cases. If $x = e_X$ then by
  Definition~\ref{def:foldingAES} for all $e \in X$ we have $e \ac x'
  < x''$ in $\mathbb{A}$, hence being $\mathbb{A}$ saturated, $e \ac
  x''$ and thus $x=e_X \ac x''$. If $x'' = e_X$ then by
  Definition~\ref{def:foldingAES} for all $e'' \in X$ we have $x \ac
  x' < e''$ in $\mathbb{A}$, hence being $\mathbb{A}$ saturated, $x
  \ac e''$ and thus $x \ac e_X = x''$.  If $x' = e_X$ then by
  Definition~\ref{def:foldingAES} there exists $e' \in X$ such that
  $e' < x''$. Moreover, $x \ac e'$ and thus $x \ac x''$ in
  $\mathbb{A}$ and therefore $x \ac x''$ in
  $\fold{\mathbb{A}}{X}$. Finally, if none of $x, x', x'' \in X$ then
  the thesis trivially follows.
  
  \medskip

  Let $f : E \to E_{/X}$ be the folding morphism.  We next observe that
  the defining properties of AESs hold.

  \begin{enumerate}
  
  \item  $\leq_{/X}$ is a well-founded partial order

    By Corollary~\ref{cor:causal-reflect}, causality chains are
    reflected, hence an infinite descending chain $x_1 > x_2 > x_3 >
    \ldots$ in $\fold{\mathbb{A}}{X}$, would be reflected in an infinite
    descending chain $e_1 > e_2 > e_3 > \ldots$ in $\mathbb{A}$.\\

  \item $\lfloor x \rfloor_{\fold{\mathbb{A}}{X}} 
    = \{ x' \in \fold{E}{X}\ |\ x' \leq_{/X} x\}$ 
    is finite for all $x\in \fold{E}{X}$

    This follows again, immediately, from
    Lemma~\ref{lemma:PropertiesF}(1) and surjectivity of $f$: an event
    with infinitely many causes would be reflected to an event with
    infinitely many causes in $\mathbb{A}$.\\

  \item $\ac_{\lfloor x \rfloor_{\fold{\mathbb{A}}{X}}}$ is acyclic for all
    $x \in \fold{E}{X}$
  
    Let $x \in \fold{E}{X}$ be an event and suppose that $\causes{x}$
    contains a cycle $x_1 \ac_{/X} x_2 \ac_{/X} \dots \ac_{/X}
    x_1$. By surjectivity of $f$ we can find $e \in E$ such that $x
    =f(e)$. By Lemma~\ref{lemma:PropertiesF}(1), there are events
    $e_1, \ldots, e_n \in \causes{e}$ such that $f(e_i) = x_i$ for any
    $i \in \{ 1, \ldots, n\}$. By point (2) of the same lemma, $e_1
    \ac e_2 \ac \dots \ac e_1$. This contradicts the property of
    $\ac_{\causes{e}} \in \mathbb{A}$ being acyclic for any event
    $e\in \mathbb{A}$.
\end{enumerate}
\end{proof}


We next recall the notion of AES-morphism from~\cite{BaldanCM01}, restricted to the case of total mappings between events which is of interest here. We will later use the fact, proved in the cited paper, that AESs morphisms preserve configurations.

\begin{definition}[AES-morphism]
  \label{def:ads:morphism}
  Let $\mathbb{A}_1$ and $\mathbb{A}_2$ be AESs. An AES-morphism $f :
  \mathbb{A}_1 \rightarrow \mathbb{A}_2$ is a function $f :
  E_1 \to E_2$ such that, for all $e, e' \in E_1$:

  \begin{compactenum}
  \item \label{morphism:one}  $\lfloor f(e) \rfloor \subseteq f(\lfloor e\rfloor)$;  
  \item \label{morphism:two} $f(e) \ac f(e') \Rightarrow e \ac e'$;
  \item \label{morphism:three} $(f(e) = f(e')) \land (e \neq e') \Rightarrow e \# e'$.

\end{compactenum}
\end{definition}

\begin{restatable}{lemma}{LemmaAESFoldingMorphism}
  \label{lemma:morphism}
  Let $\mathbb{A}$ be an AES, $X \subseteq E$ be a combinable set of
  events and let $\fold{\mathbb{A}}{X} = \langle \fold{E}{X},
  \leq_{/X},\ac_{/X}, \ \lambda_{/X} \rangle$ be the folded event
  structure. Then the folding morphism $f : E \to \fold{E}{X}$ is an
  AES-morphism.
\end{restatable}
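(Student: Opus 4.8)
The plan is to check, one at a time, the three defining conditions of an AES-morphism from Definition~\ref{def:ads:morphism}, leaning almost entirely on the properties of the folding morphism already collected in Lemma~\ref{lemma:PropertiesF}. Since $f$ acts as the identity on $E \setminus X$ and collapses all of $X$ onto the single fresh event $e_X$, each verification reduces to a small case analysis on whether the events involved lie in $X$.

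For the first condition, $\causes{f(e)} \subseteq f(\causes{e})$, I would fix an arbitrary $x \leq_{/X} f(e)$ and show $x \in f(\causes{e})$. The reflexive case $x = f(e)$ is immediate, taking $e$ itself as witness since $e \in \causes{e}$. For the strict case $x <_{/X} f(e)$, Lemma~\ref{lemma:PropertiesF}(1) yields an event $e' \in E$ with $e' < e$ and $f(e') = x$; as $e' \in \causes{e}$, we obtain $x = f(e') \in f(\causes{e})$, as required. The second condition, $f(e) \ac_{/X} f(e') \Rightarrow e \ac e'$, is literally the statement of Lemma~\ref{lemma:PropertiesF}(2), so nothing further is needed.

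For the third condition, I would suppose $f(e) = f(e')$ with $e \neq e'$. Because $f$ is injective on $E \setminus X$ and sends every element of $X$ to the fresh event $e_X \notin E$, a coincidence of images for distinct events can only arise when both images equal $e_X$; hence $e, e' \in X$. Since $X$ is combinable, its events are in particular similar, so Definition~\ref{def:EquivalentEvts}(1) gives $e \# e'$, which is exactly the desired conclusion.

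I expect no genuine obstacle here: the entire content is already packaged in Lemma~\ref{lemma:PropertiesF} together with condition~(1) of Definition~\ref{def:EquivalentEvts}. The only points requiring care are the separate treatment of the reflexive case $x = f(e)$ in the first condition (Lemma~\ref{lemma:PropertiesF}(1) speaks only of strict causality), and the observation in the third condition that the shared image must be $e_X$ rather than some common event of $E \setminus X$, which is what forces $e, e' \in X$.
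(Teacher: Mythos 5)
Your proposal is correct and follows essentially the same route as the paper's own proof: conditions (1) and (2) of Definition~\ref{def:ads:morphism} are read off from Lemma~\ref{lemma:PropertiesF}(1)--(2), and condition (3) is obtained by noting that $f(e)=f(e')$ with $e\neq e'$ forces $e,e'\in X$, whence $e\#e'$ by Definition~\ref{def:EquivalentEvts}(1). Your explicit treatment of the reflexive case $x=f(e)$ in the first condition is a small refinement the paper leaves implicit, but it changes nothing substantive.
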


\begin{proof}
  \begin{compactitem}
  \item Properties~\ref{morphism:one} and~\ref{morphism:two}. These
    follow directly from Lemma~\ref{lemma:PropertiesF} (1) and (2),
    respectively.
   
  \item Property \ref{morphism:three}. By Definition
    \ref{def:mapping}, for any pair of events $e,e' \in E$, $e\neq e'$, if  $f(e) = f(e')$ implies  $e,e'
    \in X$. Hence, by construction, $e\#e'$.
  \end{compactitem}
\end{proof}

\begin{replemma}{lemma:AESconfs}
  Let $\mathbb{A}$ be an AES, and let $\fold{\mathbb{A}}{X} = \langle
  \fold{E}{X}, \leq_{/X},\ac_{/X}, \lambda_{/X} \rangle$ be the
  folding of $\mathbb{A}$ on the set of events $X$. Let $f :
  \mathbb{A} \rightarrow \fold{\mathbb{A}}{X}$ be the folding
  morphism. Then for any configuration $C_1 \in \conf{\mathbb{A}}$ it
  holds that $f(C_1) \in \conf{\fold{\mathbb{A}}{X}}$ and $(C_1,
  \ac_{C_1}^*) \approx (f(C_1), \ac_{f(C_1)}^*)$.
\end{replemma}

\begin{proof}
  Let $C_1 \in \conf{\mathbb{A}}$ be a configuration. The fact that
  $f(C_1)$ is a configuration in $\conf{\fold{A}{X}}$ follows from the
  general properties of AESs morphisms (see~\cite{BaldanCM01}) and the
  fact that by Lemma~\ref{lemma:morphism} the folding morphism is an
  AES morphism.

  In order to prove that $(C_1, \ac_{C_1}^*) \approx (f(C_1),
  \ac_{f(C_1)}^*)$ it suffices to observe that for all $e, e' \in C_1$
  we have that
  \begin{center}
    $e \ac e'$ \qquad iff \qquad $f(e) \ac f(e')$
  \end{center}
  The fact that $f(e) \ac f(e')$ implies $e \ac e'$ has been already
  proved in Lemma~\ref{lemma:PropertiesF}(2).
  Vice versa, let $e \ac e'$. Then by
  Lemma~\ref{lemma:PropertiesF}(3), either $f(e) \ac f(e')$ or $e \#
  e'$. Since the latter cannot hold, because $e, e' \in C$ which is a
  configuration, necessarily $f(e) \ac f(e')$, as desired.
\end{proof}

\begin{replemma}{lemma:AEShpbisim}
  Let $\mathbb{A}$ be an AES,
  and let $\fold{\mathbb{A}}{X} = \langle \fold{E}{X}, \leq_{/X},\ac_{/X},
  \lambda_{/X} \rangle$ be the folding of $\mathbb{A}$ on the set of
  events $X$. Let $f : \mathbb{A} \rightarrow \fold{\mathbb{A}}{X}$ be the
  folding morphism. Then
  \begin{center}
    $R = \{ (C_1, f_{|C_1}, f(C_1)) \mid C_1 \in \conf{\mathbb{A}} \}$
  \end{center}
  is a hp-bisimulation.
\end{replemma} 
\begin{proof}
  First of all notice that for any $C_1 \in \conf{\mathbb{A}}$, if we
  let $C_2 = f(C_1)$, then by Lemma~\ref{lemma:PropertiesF}, $f_{|C_1}
  : (C_1, \ac^*) \to (C_2, \ac^*)$, is an isomorphism of pomsets.

  Moreover, in order to conclude, we next prove that
  \begin{enumerate}
  \item if there is $e \in E$ such that $C_1 \sqsubseteq C_1 \cup \{ e
    \} \in \conf{\mathbb{A}}$ then $C_2 \sqsubseteq C_2 \cup \{ f(e)
    \} \in \conf{\fold{\mathbb{A}}{X}}$.

  \item if there is $x \in \fold{E}{X}$ such that $C_2 \sqsubseteq C_2
    \cup \{ x \} \in \conf{\fold{\mathbb{A}}{X}}$ then there is $e \in
    E$ such that $f(e) = x$ and $C_1 \sqsubseteq C_1 \cup \{ e \} \in
    \conf{\fold{\mathbb{A}}{X}}$.
  \end{enumerate}

  \bigskip

  1. Note that $C_2 \cup \{ f(e) \} = f(C_1 \cup \{ e
  \})$ is a configuration by Lemma~\ref{lemma:AESconfs}. Moreover $C_2
  \sqsubseteq C_2 \cup \{ f(e) \}$, namely there is no $e' \in C_1$
  such that $f(e) \ac f(e')$, otherwise by
  Lemma~\ref{lemma:PropertiesF}(2) we would have $e \ac e'$,
  contradicting $C_1 \sqsubseteq C_1 \cup \{ e \}$.
  
  \bigskip

  2. Assume that $C_2 \sqsubseteq C_2 \cup \{ x \} \in
  \conf{\fold{\mathbb{A}}{X}}$ for some $x \in \fold{E}{X}$. We
  distinguish two cases.

  \medskip

  2.a) $x  =  e \in E \setminus X$

  Take the (unique) f-counterimage of $e$ of $x$, namely $f(e) = x$. A
  key observation is that
  \begin{quote}
    there is no $e' \in C_1$ such that $e \ac e'$. \qquad (\dag)
  \end{quote}
  In fact, we can show that given $e' \in C_1$ such that $e \ac e'$
  then there exists $e'' \in C_1$ such that $x = f(e) \ac f(e'')$,
  contradicting that $C_2 \sqsubseteq C_2 \cup \{ x \}$.
  To see this, assume first that $e \ac_\mu e'$. If $e' \not\in X$
  then clearly $f(e) \ac f(e')$. If $e' \in X$ then by
  Definition~\ref{def:EquivalentEvts}(3) $e \ac e'''$ for all $e'''
  \in X$ and thus also in this case, by
  Definition~\ref{def:foldingAES}, $f(e) = e \ac e_x = f(e')$. Hence
  we can take $e'' = e'$. If instead the asymmetric conflict is not
  direct, then there exists $e'''$ such that $e \ac_\mu e''' <
  e'$. Since $e' \in C_1$ by causal closure also $e''' \in C$ and thus
  the same argument of the previous case allows to conclude.

  Now we can easily prove that $C_1 \cup \{ e \} \in
  \conf{\mathbb{A}}$. To show that $\causes{e} \subseteq C_1$, take
  $e' < e$. Since $e \not\in X$, by Definition~\ref{def:foldingAES},
  we have $f(e') < f(e)$ and thus $f(e') \in f(C_1)$. Take $e'' \in
  C_1$ such that $f(e'') = f(e')$. Then by
  Lemma~\ref{lemma:morphism}(3), if $e' \neq e''$, then $e' \#
  e''$. Then we would have $e \# e''$, hence $e \ac e''$ violating
  (\dag) above. Hence it must be $e' = e'' \in C_1$, as desired.  The
  absence of cycles of asymmetric conflict in $C_1 \cup \{ e \}$
  follows immediately by the same property in $C_1$ and property (\dag) above.

  Similarly, $C_1 \sqsubseteq C_1 \cup \{ e \}$ is given directly by
  (\dag) above.

  \medskip

  2.b) $x = e_X$

  Consider the set
  \begin{center}
    $Y = \{ e' \mid f(e') \in C_2\ \land\ e' \in W(X) \}$
  \end{center}
  Clearly, $Y \subseteq W(X)$ and $Y$ consistent. Hence, by
  Definition~\ref{def:combinableEvts}, there exists $e \in X$ such
  that for all $e' \in Y$ $\neg (e \ac e')$ and for some $h_e \in
  \hist{e}$ it holds $h_e^- \subseteq S(X) \cup \causes{Y}$.
  
  As in the previous case we observe that
  \begin{quote}
    there is no $e' \in C_1$ such that $e \ac e'$. \qquad (\dag)
  \end{quote}
  In fact, given $e' \in C_1$ such that $e \ac e'$ then by
  Definition~\ref{def:EquivalentEvts}(2) either $e'' \ac e'$ for all
  $e'' \in X$ or there exists $e'' \in X$ such that $\neg (e'' \ac
  e')$ and $e \# e'$. In the first case, we would have $x = e_X \ac
  f(e')$, contradicting the fact that $C_2 \sqsubseteq C_2 \cup \{ x
  \}$. In the second case, from $e \# e'$ we have $e' \ac e$ and,
  additionally, there is $e'' \in X$ such that $\neg (e'' \ac e')$. We
  distinguish two subcases, depending on whether the asymmetric
  conflict $e' \ac e$ is direct or not. If $e' \ac_\mu e$ then $e' \in
  W(X)$. Therefore $e' \in Y$, contradicting the fact that we should
  have for all $e' \in Y$ $\neg (e \ac e')$.

  Now $f(e) = e_X = x$. Moreover $h_e^- \subseteq C_1$. In fact from
  $Y \subseteq C_1$ and the causal closure of $C_1$ we get $\causes{Y}
  \subseteq C_1$. Moreover if $e' \in S(X)$ then $e' < e''$ for any
  $e'' \in X$ and therefore $f(e') < e_X = x$. Hence $f(e') \in
  f(C_1)$, but since $e' \in E \setminus X$ is mapped identically by
  the folding morphism, this implies that $e' \in C_1$. Hence $S(X)
  \subseteq C_1$. Summing up, $h_e^- = S(X) \cup \causes{Y} \subseteq
  C_1$. From (\dag), as in (2.a) we can derive that $C_1 \cup \{ e \}$
  does not include cycles of asymmetric conflict and thus $C_1 \cup \{
  e \} \in \conf{\fold{\mathbb{A}}{X}}$.
  
  Moreover, $C_1 \sqsubseteq C_1 \cup \{ e \}$ follows immediately by (\dag).
\end{proof}



\subsection{Proofs for \S~\ref{sec:reductionFES}:  Behaviour preserving reduction of FESs}

We prove some properties of the folding morphism for FESs, which will
be used in proofs. We do not rely on the notion of morphism
in~\cite{Castellani97parallelproduct}, which would be too strong for
our needs (in particular, condition (iii) of \cite[Definition
4]{Castellani97parallelproduct} is not satisfied by our folding
morphism). 
In what follows, let $\mathbb{F}$ be a FES, $\fold{\mathbb{F}}{X} = \langle
\fold{E}{X}, \#_{/X}, \prec_{/X},\lambda_{/X}\rangle$ be the folding of a FES
$\mathbb{F} = \langle E, \#, \prec,\lambda \rangle$ on a set of combinable 
events $X$.

\begin{restatable}{lemma}{LemmaPropertiesFESfolding}
  Let $\mathbb{F}$ be a FES, $X\subseteq E$ be a combinable set of
  events and $\fold{\mathbb{F}}{X} = \langle \fold{E}{X}, \#_{/X},
  \prec_{/X},\lambda_{/X}\rangle$ its folding and let $f : E \to
  \fold{E}{X}$ be the folding morphism. Then for all $e,e' \in E$, $x \in
  E_{/X}$:
  \begin{compactenum}
  \item \label{lemma:propertiesFESM:one} 
    $f(e) \#_{/X} f(e') \Rightarrow e \# e'$


  \item \label{lemma:propertiesFESM:three} 
    $e \prec e' \Rightarrow f(e) \prec f(e')$
    
  \item \label{lemma:propertiesFESM:four} 
    $f(e) \prec f(e') \Rightarrow e \prec e'\ \lor\ e \# e'$

  \item \label{lemma:propertiesFESM:five} $f(e) = f(e') \Rightarrow e = e'\ \lor\ e \# e'$.


  \end{compactenum}
  \label{lemma:propertiesFESM}
\end{restatable}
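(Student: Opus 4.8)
The plan is to establish all four statements by a uniform case analysis on whether each of $e, e'$ belongs to the folded set $X$, unfolding in each case the explicit definitions of $\#_{/X}$ and $\prec_{/X}$ (Definition~\ref{def:foldingFES}) together with the folding map $f$ (Definition~\ref{def:mappingFES}). Since $f$ is the identity outside $X$ and collapses all of $X$ onto the fresh event $e_X$, the only nontrivial interaction is between $e_X$ and the events of $E \setminus X$, so the four subcases ($e,e' \notin X$; $e \in X, e' \notin X$; $e \notin X, e' \in X$; $e,e' \in X$) exhaust the analysis.

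The identification statement and the conflict-reflection statement are the easiest. For the former, $f(e) = f(e')$ with $e \neq e'$ can only occur when both events lie in $X$ (as $f$ is injective elsewhere), whence Definition~\ref{def:combinableEvtsFES}(1) gives $e \# e'$. For conflict reflection, when both events lie outside $X$ the relation $\#_{/X}$ coincides with $\#$; when exactly one, say $e$, lies in $X$, the premise $f(e') \#_{/X} e_X$ holds only if $e' \# X$, hence $e' \# e$ and, by symmetry, $e \# e'$; and when both lie in $X$ the premise $e_X \#_{/X} e_X$ is false by construction, so the implication is vacuous.

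For flow preservation I would again split on membership. If $e \in X, e' \notin X$, then $e \prec e'$ witnesses $X \prec e'$, so $e_X \prec_{/X} e'$; symmetrically, if $e \notin X, e' \in X$, then $e \prec e'$ witnesses $e \prec X$, giving $e \prec_{/X} e_X$; and off $X$ the relation is unchanged. The one delicate subcase is $e,e' \in X$: here preservation would demand the self-loop $e_X \prec_{/X} e_X$, which is excluded since flow is irreflexive, so this subcase must be ruled out at the source. I would discharge it via Definition~\ref{def:combinableEvtsFES}(1), which forces distinct events of $X$ to be in conflict, arguing that no event of $X$ acts as a flow predecessor of a conflicting event of $X$; I expect this to be the first point needing genuine care.

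The real crux, however, is flow reflection, i.e.\ $f(e) \prec_{/X} f(e') \Rightarrow e \prec e' \lor e \# e'$. The case with both endpoints outside $X$, and the case where the folded event sits on the \emph{source} side of the flow ($e \in X, e' \notin X$), are manageable: in the latter, $e_X \prec_{/X} e'$ unfolds to $x \prec e'$ for some $x \in X$, and Definition~\ref{def:combinableEvtsFES}(3), applied to the pair $x, e \in X$ and the external event $e'$, yields exactly $e \prec e' \lor e \# e'$. The hard case is the opposite orientation, $e \notin X$ and $e' \in X$: there $e \prec_{/X} e_X$ tells us only that $e \prec x$ for \emph{some} $x \in X$ possibly distinct from the given $e'$, and to recover the dichotomy for that specific $e'$ one must invoke the most intricate combinability condition, Definition~\ref{def:combinableEvtsFES}(4), which relates the flow predecessors of $x$ and of $e'$. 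Extracting the stated alternative from condition~(4) is where I expect the main obstacle to lie; once this is settled, the remaining bookkeeping is routine.
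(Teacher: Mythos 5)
Your proposal follows essentially the same route as the paper's proof: the identical four-way case split on membership in $X$, with the identification statement and the problematic $X$-internal flow subcase both discharged via Definition~\ref{def:combinableEvtsFES}(1), conflict reflection read directly off the definition of $\#_{/X}$, and flow reflection handled by condition~(3) when the folded event is the flow source and condition~(4) when it is the target. The one step you flag as the main obstacle --- extracting $e \prec e' \lor e \# e'$ for the specific $e' \in X$ from condition~(4) --- is precisely the point the paper itself dispatches by a bare appeal to that condition, so your plan matches the published argument in both structure and level of detail.
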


\begin{proof}
  \begin{compactitem}
  \item Property~\ref{lemma:propertiesFESM:one}. 
    Let $e,e' \in E$ and assume $f(e) \# f(e')$. Notice that at least
    one between $e$ and $e'$ is not in $X$, otherwise we would have
    $f(e) = f(e')$. We distinguish various cases. If $e \in X$ and
    thus $f(e) = e_X$, then by definition of conflict in the folded
    FES (Definition~\ref{def:foldingFES}), since $f(e) = e_X \#
    f(e')$, it must be $e'' \# e'$ for all $e'' \in X$, and thus in
    particular $e \# e'$, as desired. The case in which $e' \in X$ is
    analogous, since conflict is symmetric.  Otherwise, if $e,e' \nin
    X$ the property trivially holds, since $f$ is the identity on
    $e,e'$.
    
    \medskip
    

    
    \medskip
    
  \item Property~\ref{lemma:propertiesFESM:three}. Let $e,e' \in E$ be
    such that $e \prec e' $. We distinguish the following cases:

    \begin{compactitem}
    \item $e \in X$.  
      By Definition~\ref{def:combinableEvtsFES}(1),
      $e' \nin X$ and, by Definition~\ref{def:foldingFES}, $e_X = f(e)
      \prec f(e') = e'$ as desired.

    \item $e' \in X$.
      As before, since $e' \in X$, then $e \nin X$ by
      Definition~\ref{def:combinableEvtsFES}(1).  Therefore, by
      construction, $e =f(e) \prec f(e')= e_X $.

    \item otherwise, if $e,e' \nin X$ then $f$ is the identity on $e,e'$ 
      and the result trivially holds.
    \end{compactitem}

  \item Property~\ref{lemma:propertiesFESM:four}. Let $e,e' \in E$ be
    such that $f(e) \prec f(e')$.
    Consider the following cases:
    \begin{compactitem}
    \item $e \in X$.  By Definition~\ref{def:combinableEvtsFES}(1),
      $e' \nin X$ and, by construction, there exists $e'' \in X$ such
      that $e'' \prec e'$. Then, either $e'' = e$ and thus $e \prec
      e'$, or, by Definition~\ref{def:combinableEvtsFES}(3), $e' \# e$
      as desired.
    \item $e' \in X$. As before, since $e' \in X$, then $e \nin X$ by Definition~\ref{def:combinableEvtsFES}(1) and, by construction, there exists $e'' \in X$ such
      that $e \prec e''$. Then, either $e'' = e'$ and thus $e \prec
      e'$, or, by Definition~\ref{def:combinableEvtsFES}(4), $e' \# e$
      as desired.

    \item otherwise, if $e,e' \nin X$ then $f$ is the identity on $e,e'$ 
      and hence $e \prec e'$.
    \end{compactitem}

  \item Property~\ref{lemma:propertiesFESM:five}.  Let $e, e' \in E$
    such that $f(e) = f(e')$, with $e \neq e'$. Since the events in
    $X$ are pairwise conflictual by
    Definition~\ref{def:combinableEvtsFES}(1), it is immediate to
    conclude that $e \# e'$.

  \end{compactitem}
\end{proof}



%
%
\begin{restatable}{lemma}{LemmaFESFoldingMorphism}
Let $\mathbb{F}$ be a FES, $X\subseteq E$ be a combinable 
set of events and  $f : \mathbb{F} \rightarrow \fold{\mathbb{F}}{X}$ be a folding 
morphism. For any configuration $C_0 \in Conf(\mathbb{F})$ then $f(C_0) = Conf(\fold{\mathbb{F}}{X})$ 
is a configuration in $\fold{\mathbb{F}}{X}$.
\label{lemma:configFES}
\end{restatable}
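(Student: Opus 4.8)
The plan is to verify directly the three defining conditions of a FES configuration for $f(C_0)$, exploiting the reflection properties of the folding morphism collected in Lemma~\ref{lemma:propertiesFESM}. A key preliminary observation, used throughout, is that $f$ is \emph{injective on $C_0$}: by Lemma~\ref{lemma:propertiesFESM}(\ref{lemma:propertiesFESM:five}), if $f(e)=f(e')$ with $e\neq e'$ then $e\#e'$, which is impossible inside the conflict-free set $C_0$. In particular $C_0$ contains at most one event of $X$, so $f$ restricts to a bijection $f_{|C_0}:C_0\to f(C_0)$ and every non-merged event of $C_0$ is fixed by $f$. Hence preimages of events of $f(C_0)$ are uniquely determined in $C_0$.

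Conflict-freeness and acyclicity are then routine. If $x\#_{/X}x'$ for $x,x'\in f(C_0)$, taking the unique preimages $e,e'\in C_0$ and applying Lemma~\ref{lemma:propertiesFESM}(\ref{lemma:propertiesFESM:one}) gives $e\#e'$, contradicting conflict-freeness of $C_0$; hence $f(C_0)$ is conflict free. For acyclicity, any cycle $x_1\prec_{/X}\cdots\prec_{/X}x_n\prec_{/X}x_1$ in $f(C_0)$ lifts edge-by-edge through the preimages $e_i\in C_0$: by Lemma~\ref{lemma:propertiesFESM}(\ref{lemma:propertiesFESM:four}) each edge yields $e_i\prec e_{i+1}$ or $e_i\#e_{i+1}$, the latter being excluded in $C_0$, so $C_0$ would contain a flow cycle, contradicting that $\prec^*_{C_0}$ is a partial order. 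Finiteness of the down-sets is immediate, since $f(C_0)$ is finite.

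The real work is the third condition: given $x\in f(C_0)$ and $y\notin f(C_0)$ with $y\prec_{/X}x$, I must produce a blocker $x''\in f(C_0)$ with $y\#_{/X}x''\prec_{/X}x$. Write $e\in C_0$ for the unique preimage of $x$. The strategy is always the same: reflect $y\prec_{/X}x$ into $\mathbb{F}$, via Lemma~\ref{lemma:propertiesFESM}(\ref{lemma:propertiesFESM:four}) and the combinable conditions Definition~\ref{def:combinableEvtsFES}(3)--(4), so as to obtain a predecessor $e_1\prec e$ with $f(e_1)=y$ (the combinable conditions being what let one pass from the arbitrary $X$-representative produced by the raw reflection to the representative $e$ actually present in $C_0$ when the folded edge enters the merged event $e_X$); then invoke the configuration property of $C_0$ to obtain $e''\in C_0$ with $e_1\#e''\prec e$; and finally push $e''$ forward to $x''=f(e'')$, with $f(e'')\prec_{/X}x$ by Lemma~\ref{lemma:propertiesFESM}(\ref{lemma:propertiesFESM:three}). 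I would organise this by cases according to whether $x=e_X$, $y=e_X$, or both are ordinary events.

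The main obstacle is the last step: showing that the conflict survives folding, i.e.\ $y\#_{/X}f(e'')$. When the blocker $e''$ is an ordinary event this is immediate, since $\#_{/X}$ agrees with $\#$ off $X$; the difficulty is precisely when $e''\in X$, because then $f(e'')=e_X$ conflicts with $y$ only if \emph{all} of $X$ conflicts with $y$. This is where the combinable conditions are essential, and Definition~\ref{def:combinableEvtsFES}(5) is the right handle. One first checks that the predecessors of $x$ realised in $C_0$, namely $\pre{x}\cap C_0$, form a \emph{maximal} consistent subset of $\pre{x}$ (any missing predecessor would, by the configuration property of $C_0$, be blocked inside $C_0$), so that $\pre{x}\cap C_0\in\mcons{\pre{x}}$ and Lemma~\ref{lemma:consistentSEfes} applies. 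Then, in the problematic situation $\neg(X\#y)$, condition~(5) supplies \emph{another} predecessor $e'''\in\pre{x}\cap C_0$ that conflicts with $y$ (respectively, with all of $X$, in the $y=e_X$ case); since two distinct events of $X$ cannot coexist in $C_0$, this $e'''$ lies outside $X$, its conflict with $y$ is preserved by $f$, and it serves as the required blocker $x''$. The remaining predecessor transfers rely on Definition~\ref{def:combinableEvtsFES}(2)--(4). I expect the bookkeeping of these subcases, rather than any single deep idea, to be the most delicate part of the argument.
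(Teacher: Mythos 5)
Most of your skeleton coincides with the paper's proof: conflict-freeness, acyclicity and finiteness are discharged exactly as you say via Lemma~\ref{lemma:propertiesFESM}, the observation that $\pre{e} \cap C_0 \in \mcons{\pre{e}}$ is the paper's pivot as well, and your use of Definition~\ref{def:combinableEvtsFES}(5) to upgrade a blocker lying in $X$ (or, when $y = e_X$, to find a blocker conflicting with all of $X$) matches the paper's subcases. Also, when $y = e_X$ your reflection step is legitimate, since one may \emph{choose} the preimage $e_1 \in X$ with $e_1 \prec e$ guaranteed by Definition~\ref{def:foldingFES}.

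The genuine gap is the symmetric case $x = e_X$, i.e.\ $e \in C_0 \cap X$. You assert that conditions (3)--(4) of Definition~\ref{def:combinableEvtsFES} always let you reflect $y \prec_{/X} e_X$ to a predecessor $e_1 \prec e$ of the representative $e$ actually in $C_0$, with $f(e_1) = y$. This is false: since $y \neq e_X$, its preimage $e_1$ is unique ($f$ is the identity off $X$), so there is no freedom of choice, and $y \prec_{/X} e_X$ only means $e_1 \prec e_3$ for \emph{some} $e_3 \in X$. Condition (4), applied with $x := e_3$ and $x' := e$, yields only the disjunction ``$e_1 \prec e$, or every $e'' \in \pre{e} \setminus \pre{e_3}$ conflicts with $e_1$''; the second disjunct genuinely permits $\neg(e_1 \prec e)$ (it is even vacuous when $\pre{e} \subseteq \pre{e_3}$). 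In that situation your next move --- invoking the configuration property of $C_0$ to obtain $e'' \in C_0$ with $e_1 \# e'' \prec e$ --- is unavailable, because that property only blocks events of $\pre{e}$, and here $e_1 \notin \pre{e}$. The paper handles exactly this case with a tool your plan never mentions: by Lemma~\ref{lemma:propertiesFESM}(\ref{lemma:propertiesFESM:four}) one gets $e_1 \# e$, and this conflict cannot be direct, since $e \dConflict e_1$ would, by Definition~\ref{def:combinableEvtsFES}(2), put all of $X$ --- including $e_3$ with $e_1 \prec e_3$ --- in conflict with $e_1$; then, unfolding Definition~\ref{def:FES:directconf} at the maximal consistent set $\pre{e} \cap C_0$, non-directness forces some $e_2 \in \pre{e} \cap C_0$ with $e_1 \# e_2$, and since $e_1, e_2 \notin X$ both the conflict and the edge $e_2 \prec e$ survive folding, making $f(e_2)$ the required blocker. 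Without this direct-conflict argument (Definition~\ref{def:combinableEvtsFES}(2) together with Definition~\ref{def:FES:directconf}), your case analysis does not cover all folded edges entering $e_X$, so the proposal as written is incomplete at its central step.
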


\begin{proof}
\begin{enumerate}
\item $f(C_0) $ is conflict free.

  This follows directly from
  Lemma~\ref{lemma:propertiesFESM}(\ref{lemma:propertiesFESM:one}). In
  fact, for $e_1, e_2 \in C_1$ if it were $f(e_1) \# f(e_2)$, then it
  would hold $e_1 \# e_2$.\\

\item $f(C_0) $ has no $\prec$-cycles.
  
  Observe that, inside configurations, by
  Lemma~\ref{lemma:propertiesFESM}(\ref{lemma:propertiesFESM:four}),
  the flow relation is reflected, namely for $e_1, e_2 \in C_1$, if
  $f(e_1) \prec f(e_2)$ then $e_1
  \prec e_2$ (since the case $e_1 \# e_2$ cannot apply). As a consequence, a $\prec$-cycle in $f(C_0)$ would be reflected in $C_0$.\\

\item $\{e_1' \mid e_1' \in f(E) ~\land~ e_1' \leq_{f(C_0)} e_1\}$ is
  finite for all $e_1 \in f(C_0) $. 

  This follows by the fact that the same property holds in $C_0$, since, 
  as observed above, $\prec$ is reflected inside configurations.\\
  
\item For all $e' \in f(C_0)$ and $e_1' \notin f(C_0) $ s.t. $e_1'
  \prec e'$, there exists $e_2' \in f(C_0) $ such that $e_1' \#e_2'
  \prec e'$.
  
  Let $e' \in f(C_0)$, $e_1' \nin f(C_0)$, such that $e_1' \prec
  e'$. Therefore, there are $e \in C_0$ such that $e' = f(e)$ and, by
  surjectivity of $f$, $e_1 \nin C_0$ such that $e_1' = f(e_1)$.
  
  By Lemma~\ref{lemma:propertiesFESM}(\ref{lemma:propertiesFESM:four})
  either $e_1 \prec e$ or $e_1 \# e$. In the last case, i.e., if $e_1
  \# e$ then necessarily by construction
  (Definition~\ref{def:foldingFES}), it must be that $e \in X$ is a
  folded event and there exists $e_3 \in X$ such that $e_1 \prec
  e_3$. Note that the conflict $e_1 \# e$ cannot be direct, otherwise,
  by Definition~\ref{def:combinableEvtsFES}(2), one should have also
  $e \# e_3$. Hence, since by definition of configuration, the set
  $\pre{e} \cap C_0 \in \mcons{\pre{e}}$, there must be $e_2 \in
  \pre{e} \cap C_0$ such that $e_1 \# e_2$. Hence $e_2 \in C_0$ and
  $e_2 \prec e$, therefore
  Lemma~\ref{lemma:propertiesFESM}(\ref{lemma:propertiesFESM:three}),
  $f(e_2) \prec f(e) = e'$. Moreover, since $e_1, e_2 \nin X$, we have
  $f(e_2) \# f(e_1) = e_1'$, as desired.

  Let us focus on the other case, in which $e_1 \prec e$.  Since
  $C_0$ is a configuration, there exists $e_2 \in C_0$ such that $e_2
  \prec e$ and $e_2 \# e_1$. By Lemma~\ref{lemma:propertiesFESM:four},
  $f(e_2) \prec f(e) = e'$. We distinguish various
  subcases:
  \begin{enumerate}
  \item $\{e_1, e_2\} \subseteq X$.
    This simply cannot happen as it would imply $f(e_1) = f(e_2) \in
    f(C_0)$, while we are assuming $f(e_1) \nin f(C_0)$.
    
  \item $e_1 \in X, e_2 \nin X$. 
    Let $Y \in \mcons{\pre{e}}$ be the set of maximal and consistent
    set of predecessors of $e$ in $C_0$. Obviously, $e_2 \in Y$ and, 
    by Lemma~\ref{lemma:propertiesFESM}, $\forall e_3 \in Y. f(e_3) 
    \prec f(e) = e'$ and $f(e_3) \in f(C_0)$. 
    Clearly, $\nexists e_4 \in Y \cap X. e_4 \in C_0$, otherwise $f(e_4) = f(e_1) 
    = e_1' \in f(C_0)$ and it would contradict the assumptions. Therefore, 
    $Y \cap X = \emptyset$ and, by Definition~\ref{def:combinableEvtsFES}(5),
    $\exists e_5 \in Y. e_5 \# X$. 
    In this case, by construction, $f(e_5) \# f(e_1) = e_1' = e_x$ and 
    since $f(e_5) \in f(C_0)$ then we obtain the desired result.
    
  \item $e_1 \nin X, e_2 \in X$.
    By Definition~\ref{def:combinableEvtsFES}(5), for all $Y \in
    \mcons{\pre{e}}$, with $e_2 \in Y$ there is $e_3 \in Y \setminus
    \{ e_2\}$ such that $e_3 \# e_1$. Since neither $e_1$ nor $e_3$
    are in $X$, this conflict is preserved by the folding morphism and
    thus $f(e_3) \# f(e_1) =e_1'$, as desired.

  \item $\{e_1, e_2\} \nsubseteq X$. Since $\{e_1, e_2\} \nsubseteq X$
    and $e_1 \# e_2$ then, by
    Lemma~\ref{lemma:propertiesFESM}(\ref{lemma:propertiesFESM:one}),
    $f(e_2) \# f(e_1)$, which contradicts the assumption. Hence also
    this case cannot happen.

  \end{enumerate}		
\end{enumerate}

\end{proof}

Recall that FESs are assumed to be faithful and full. We next prove that they remain so also after folding.

\begin{replemma}{lemma:PropertiesFESfaithfulConsistent}
  Let $\mathbb{F}$ be a FES, $\fold{\mathbb{F}}{X} = \langle
  \fold{E}{X}, \#_{/X}, \prec_{/X},\lambda_{/X}\rangle$ be the folded
  FES of $\mathbb{F}$ and $f : \mathbb{F} \rightarrow 
  \fold{\mathbb{F}}{X}$ be the folding morphism.  The FES
  $\fold{\mathbb{F}}{X}$ is
	\begin{inparaenum}[1)]
		\item \label{prop:faithfulProof} faithful, and
		\item \label{prop:conflictConsProof} full.
	\end{inparaenum}
\end{replemma}
\begin{proof} 
  
  \begin{compactenum}[-]
  \item Property \ref{prop:faithfulProof}). Faithfulness
    
    Let $x,x' \in \fold{E}{X} : \lnot(x \# x')$ be a pair of events in
    $\fold{\mathbb{F}}{X}$.  We need to prove that there exists a
    configuration $C_1 \in \conf{\fold{E}{X}}$ such that $\{ x,x' \}
    \subseteq C_1$. 

    Take $e, e' \in E$ such that $f(e) = x$ and $f(e') = x'$ (they
    exist since $f$ is surjective). If $\lnot(e \# e')$ then by
    faithfullness of $\mathbb{F}$ there exists $C_0 \in
    Conf(\mathbb{F})$ such that $\{e,e'\} \subseteq C_0$. By
    Lemma~\ref{lemma:configFES}, $f(C_0) \in
    Conf(\fold{\mathbb{F}}{X})$ is the desired configuration, since
    $\{x, x'\} = \{f(e), f(e')\} \subseteq f(C_0)$.
    
    If, instead $e \# e'$, it means that one of the two events is in
    $X$. Assume without loss of generality that $e \in X$ and $e'
    \not\in X$. The fact that $\lnot (f(e) \# f(e'))$ means that there
    is $e'' \in X$ such that $\lnot (e'' \# e')$. Therefore, again by
    fullness there exists $C_0 \in Conf(\mathbb{F})$ such that
    $\{e'',e'\} \subseteq C_0$ and we conclude as above. In fact,
    $f(e'') =f(e) =c$, hence $\{x, x'\} = \{f(e), f(e')\} \subseteq
    f(C_0)$, which is a configuration by
    Lemma~\ref{lemma:configFES}.     \\
				
  \item Property \ref{prop:conflictConsProof}). Fullness
    
    By
    Lemma~\ref{lemma:propertiesFESM}(\ref{lemma:propertiesFESM:one})
    and surjectivity of $f$, a self-conflicting (inconsistent) event in
    $\mathbb{F}_{/X}$ would be reflected in
    $\mathbb{F}$. More precisely, let $x \in \fold{\mathbb{F}}{X}$
    such that $x \# x$. Then take $e \in \mathbb{F}$ such that $f(e)
    =x$. We have $f(e) \# f(e)$ and thus, by
    Lemma~\ref{lemma:propertiesFESM}(\ref{lemma:propertiesFESM:one}),
    $e\# e$, contradicting the fullness of $\mathbb{F}$.

  \end{compactenum}

\end{proof}

%
%
\begin{replemma}{lemma:FEShpbisim}
Let $\mathbb{F} = \langle E, \#, \prec, \lambda \rangle$ be a FES and  
$\fold{\mathbb{F}}{X} = \langle \fold{E}{X}, \#_{/X},\prec_{/X},\ \lambda_{/X}\rangle$ 
be a folded FES for an set of combinable events $X \subseteq E$. 
Let $f : \mathbb{F} \rightarrow \fold{\mathbb{F}}{X}$ be the folding morphism. 
Then
  \begin{center}
    $R = \{ (C_1, f_{|C_1}, f(C_1)) \mid C_1 \in \conf{\mathbb{F}} \}$
  \end{center}
  is a hp-bisimulation.
\end{replemma}

\begin{proof}
  Given a configuration $C_1 \in \conf{\mathbb{F}}$, let $C_2 =
  f(C_1)$. Observe that $f: (C_1, \prec^*) \approx (C_2, \prec^*)$ is
  an isomorphims of pomsets. This follows immediately by items
  (\ref{lemma:propertiesFESM:three}) and
  (\ref{lemma:propertiesFESM:four}) of
  Lemma~\ref{lemma:propertiesFESM}.
  In order to show that $R$ is a
  hp-bisimilarity it remains to prove that
 \begin{enumerate}
 \item if there is $e \in E$ such that $C_1 \cup \{ e \} \in
   \conf{\mathbb{F}}$ then $C_2 \cup \{ f(e) \} \in
   \conf{\fold{\mathbb{F}}{X}}$.
   
 \item if there is $x \in \fold{E}{X}$ such that $C_2 \cup \{ x \} \in
   \conf{\mathbb{F}}$ then there is $e \in E$ such that $f(e) = x$,
   $C_1 \cup \{ e \} \in \conf{\mathbb{F}}$.
 \end{enumerate}
 
 \bigskip
 
 In the following, the subscript $/X$ in the relations of the folded
 FES are omitted for making the notation lighter.  
 
 \begin{enumerate}
 \item The fact that if $C_1 \cup \{ e \} \in \conf{\mathbb{F}}$ then
   $C_2 \cup \{ f(e) \} \in \conf{\fold{\mathbb{F}}{X}}$ follows
   immediately by Lemma~\ref{lemma:configFES}.

 \item Let $x \in \fold{E}{X}$ be such that $C_2 \cup \{ x \} \in
   \conf{\fold{\mathbb{F}}{X}}$. Thus, it is necessary to show that 
   there is an event $e \in E$ such that $f(e) = x$, $C_1 \cup \{ e \} \in
   \conf{\mathbb{F}}$ and $C_1 \cup \{ e \} \approx C_2 \cup \{ x \}$.
   
   Let $Y_2 = \pre{x} \cap C_2$ be the set of $\prec$-predecessors of
   $x$ in $C_2$. By definition of configuration in FESs we know that
   $Y_2 \in \mcons{\pre{x}}$.

   We distinguish two cases:

   \begin{enumerate}
   \item $x = e_X$.
     
     In this case events in $\pre{x}$ are left unchanged by the
     folding and hence if we let $Y_1 = Y_2$ we have that $Y_1
     \subseteq C_1$, $f(Y_1) = Y_2$ and $Y_1$ is consistent.  By
     definition of the folding $Y_1 \subseteq \pre{X}$ and thus by
     Lemma~\ref{lemma:consistentSEfes}, there is an event $e' \in X$,
     s.t.  $Y_1 \in \mcons{\pre{e'}}$. Since $Y_1 \subseteq C_1$, we
     deduce that $C_1 \cup \{e'\} \in Conf(\mathbb{F})$, and it holds
     $f(C_1 \cup \{e'\}) = C_2 \cup \{e_X\}$, as desired, since $f(e') = e_X$.
     
     \medskip

   \item $x \neq e_X$.
     In this case the event $x = e \in E \setminus X$ is mapped
     identically by the folding morphism $f$. We just need to show
     that $C_1 \cup \{ e \}$ is a configuration. Let $Y_1 = \{ e' \in
     C_1 \mid f(e') \in Y_2 \}$. 

     We have that $Y_1 \subseteq \pre{e}$. In fact, for any $e' \in
     Y_1$, since $f(e') \prec x$, by
     Lemma~\ref{lemma:propertiesFESM}(\ref{lemma:propertiesFESM:four})
     we know that $e' \prec e$ or $e' \# e$. The second case cannot
     happen, 
     since $\lnot f(e) \# f(e')$, by Definition~\ref{def:foldingFES} there
     is $e'' \in X$ such that $\lnot e \# e''$. Then by
     Definition~\ref{def:combinableEvtsFES}(2), the conflict $e' \# e$
     is not direct. Therefore, since $\pre{e'} \cap C_1 \in
     \mcons{\pre{e'}}$, by definition of direct conflict, there is
     $e''' \in \pre{e'} \cap C_1$ such that $e''' \# e$. Since
     $e'''\not\in X$, this conflict is preserved by the conflict
     morphims and we get that $f(e''') \# f(e)$, which is absurd as
     $f(e), f(e''') \in f(C_1) \cup \{ x \}$ which is a configuration
     by hypothesis.

     The set $Y_1$ is clearly consistent, as it is included in
     $C_1$. It is also maximal, i.e., $Y_1 \in
     \mcons{\pre{e}}$. In fact if it were not maximal, there would
     be $e'' \in \pre{e} \setminus Y_1$ such that $Y_1 \cup \{ e''\}$
     is consistent. But then, since the folding morphism preserves
     configurations and thus consistent sets, $f(Y_1 \cup \{ e''\})$
     would be consistent and strictly larger then $Y_2$.

     Since $Y_1 \in \mcons{\pre{e}}$, we conclude that $Y_1 \cup
     \{e\}$ is a configuration, as desired.

   \end{enumerate}
 \end{enumerate} 
 
\end{proof}





\fi

\end{document}